
\documentclass[11pt]{article}
\usepackage{graphics,epsfig}
\usepackage[latin1]{inputenc}
\usepackage[english,activeacute]{babel}
\usepackage[]{graphicx}
\usepackage{latexsym}
\usepackage{amsmath, amsthm, amsfonts, amssymb}
\usepackage{verbatim}

\newcommand{\nd}{\noindent}
\newtheorem{theorem}{Theorem}

\textwidth=16cm \textheight=20cm \oddsidemargin=1mm \topmargin 1mm
\begin{document}

\newtheorem{theo}{Theorem}[section]
\newtheorem{definition}[theo]{Definition}
\newtheorem{lem}[theo]{Lemma}
\newtheorem{prop}[theo]{Proposition}
\newtheorem{coro}[theo]{Corollary}
\newtheorem{exam}[theo]{Example}
\newtheorem{rema}[theo]{Remark}
\newtheorem{example}[theo]{Example}
\newtheorem{principle}[theo]{Principle}
\newcommand{\ninv}{\mathord{\sim}}
\newtheorem{axiom}[theo]{Axiom}
\title{On the lattice structure of probability spaces in quantum mechanics}

\author{{\sc Federico Holik}$^{1}$ \ {\sc ,} \ {\sc César Massri}$^{2}$ \ {\sc
 ,} \ {\sc Leandro Zuberman}$^{1}$ \ {\sc
 ,} {\sc A. Plastino}$^{1,\,3,\,4}$}

\maketitle

\begin{center}

\begin{small}
1- Universidad Nacional de La Plata, Instituto
de F\'{\i}sica (IFLP-CCT-CONICET), C.C. 727, 1900 La Plata, Argentina \\
2- Departamento de Matem\'{a}tica - Facultad de Ciencias Exactas y
Naturales\\ Universidad de Buenos Aires - Pabell\'{o}n I, Ciudad
Universitaria \\ Buenos Aires, Argentina. Becario CONICET \\
3- Universitat de les Illes Balears and IFISC-CSIC, 07122 Palma de Mallorca, Spain \\
4- Instituto Carlos I de Fisica Teorica y Computacional and
Departamento de Fisica Atomica, Molecular y Nuclear, Universidad de
Granada, Granada, Spain
\end{small}
\end{center}

\vspace{1cm}

\begin{abstract}
\noindent Let $\mathcal{C}$ be the set of all possible quantum
states. We study the convex subsets of $\mathcal{C}$ with attention
focused on the lattice theoretical structure of these convex subsets
and, as a result,  find a framework capable of unifying several
aspects of quantum mechanics, including entanglement and Jaynes'
Max-Ent principle. We also encounter links with entanglement
witnesses, which leads to a new separability criteria expressed in
lattice language. We also provide an extension of a separability
criteria based on convex polytopes to the infinite dimensional case
and show that it reveals interesting facets concerning the
geometrical structure of the convex subsets. It is seen that the
above mentioned framework is also capable of generalization to any
statistical theory via the so-called convex operational models'
approach. In particular, we show how to extend the geometrical
structure underlying entanglement to any statistical model, an
extension which may be useful for studying correlations in different
generalizations of quantum mechanics.
\end{abstract}
\bigskip
\noindent

\begin{small}
\centerline{\em Key words: entanglement-quantum information-convex
sets -MaxEnt approach }
\end{small}

\bibliography{pom}
\section{Introduction}

\noindent In this work we will tackle the entanglement phenomenon
from a special viewpoint, that of regarding quantum states as
``probability measures" [see for example \cite{stulpe2001}], which
leads us to discuss convex sets of probability
measures. Quantum probabilities are of a very different nature
than that of classical ones. After setting preliminary mathematical
notions and notations in section \ref{s:preliminaries} (which may be
optionally complemented with appendix \ref{s:ApendixA}), we shortly
review the differences and definitions between classical and quantal
probabilities in section \ref{s:probabilities}. Preliminary matters
may be skipped by the reader familiarized with the quantum formalism in
infinite dimensions.

\noindent The existence of probability models of a very different
nature, of which the classical and the quantum instances are just
two particularly important examples of a wider family,  was one of
the motivations for the study of the so called operational or convex
approach (COM), one of the protagonists of our present discourse.
Refs.
\cite{Barnum-Wilce-2006,Barnum-Wilce-2009,Barnum-Wilce-2010,Beltrametti.Varadarajan-2000,Gudder-StatisticalMethods,Cattaneo-Gudder-1999}
deal with the subject of COMs, for which  states (understood as
probability measures) and their convex structure play a key role,
while other related quantities emerge in rather natural fashion.
Note that there exist generalizations of quantum mechanics,
including non-linear versions, that are axiomatized using the convex
structure of the set of states (see \cite{MielnikGQS},
\cite{MielnikTF}, and \cite{MielnikGQM}). The approach  treats in
geometrical fashion the statistical theory of systems, which
includes quantum and classical mechanics (and  several other
theories as well). In these generalized probabilistic models,
generalized observables are used. In the particular case of quantum
theory, one encounters the important notion of Positive Operator
Valued Measures (POVM's). We will review the COM approach as well as
POVM's in section \ref{s:COMapproach}. In section
\ref{s:introduction to QL} we will revisit the formal structure and
associated definitions of entanglement including the infinite
dimensional case.

\nd Lattices are other main character in our present discourse. They
have been studied in the context of quantum mechanics since the
seminal paper of von Newmann \cite{BvN} and characterize the
structure of the subspaces of the Hilbert space of a quantum system.
The paper of reference \cite{BvN} has motivated several
investigations in logics, philosophy \cite{Putnam}, foundations of
physics, and algebraic logic. In the particular case of the
foundations of quantum mechanics, several lines of investigation
have been developed. It is difficult to list all of them. We just
cite here \cite{mackey57,jauch,piron,kalm83,kalm86,vadar68,
vadar70,greechie81,gudderlibro78,giunt91,pp91,belcas81}. For a
complete bibliography see for example
\cite{dallachiaragiuntinilibro}, \cite{dvupulmlibro}, and
\cite{HandbookofQL}.

\noindent Part of the study of composite quantum systems was
developed in \cite{aertsdaub1, aertsdaub2,FR81}. As we will show in
sections \ref{s:New language} and \ref{s:Convex lattice} of this
work, the convex set of quantum states is endowed with canonical
lattice structures, which allow us to disclose a new structural
feature of quantum mechanics that, in particular, allows for an
extension to the infinite dimensional case of the standpoint
developed in \cite{extendedql,Holik-Massri-Ciancaglini-2010}. The
reader non familiarized with lattice theory may find it useful to
take glance at appendix \ref{s:ApendixB}.

\noindent Our new lattice structures are not only mathematical
curiosities. Instead, they are the key factor for achieving a
unifying viewpoint regarding several constructs linked to the
geometrical properties of the quantum set of states. As a first
example of their power, we will use this approach to provide a
generalization and reformulation of the celebrated Jaynes Max-Ent
principle \cite{Jaynes-1957a,Jaynes-1957b} to arbitrary COM's in
section \ref{s:Max-Ent} (see also \cite{Holik-Plastino-2011b} for
more developments), thus displaying an interesting convergence
between lattice theory and COM approaches.

\noindent In section \ref{s:EntanglementWittness} we restrict
ourselves to the finite dimensional case to study a particular
reformulation of entanglement witnesses in lattice theoretical
terms, which  provides new proofs of known results. Surprisingly
enough, these new proofs serve i) as the source of new abstract
entanglement criteria, which can be expressed in lattice ``format"
and ii) to study the volume of the space of separable states, a
theme to be tackled elsewhere. We added appendix \ref{s:ApendixC}
for refreshing mathematical notions indispensable  in this
respect.

\noindent We return in section \ref{s:The Relationship for convex}
to the infinite dimensional case to discuss the problem of
characterizing entanglement, both from the geometrical and
algebraic viewpoints. Emphasis will be put on maps that can be
defined between the lattice of the system and its subsystems. We
will  extend to infinite dimension a recently advanced, abstract
entanglement criterium \cite{Holik-Plastino-2011a} and study some
consequences thereof. In particular, we will underline an
interesting unifying characteristic of entanglement that  is known
to  hold for pure states and reads\\

\fbox{\parbox{6.0in}{\nd A \emph{pure} state is
separable$\Longleftrightarrow$ it is a product state
($\Longleftrightarrow$ the entropy of its reduced states is
minimal)}} \vskip 3mm

\noindent It is also well known that no such a simple statement is
valid for mixed states. Using both the lattice theoretical approach
and our criteria we will show that it is possible to suitably
generalize the above mentioned assertion from pure states to
arbitrary states, thus leaving the pure instance as a particular
case. In order to do so, we will introduce first the notion of
informational invariant (advanced in \cite{Holik-Plastino-2011a}).
This concept
\begin{itemize} \item is advantageously cast in purely geometrical
terms and holds for the infinite dimensional case as well,
uncovering non trivial geometric and algebraic properties, and
\item  also provides us with a simple, unifying abstract framework
to characterize separability properties of arbitrary states (not
only pure ones).
\end{itemize}

\noindent Furthermore, we will extend in section
\ref{s:PositiveMaps} some of our results to {\sf any} probabilistic
model via the COM approach. In particular, we discuss how the
geometrical structure found for the quantum case in section
\ref{s:The Relationship for convex} can be extended via the COM
approach to any probabilistic model. Such generalization is due to
the purely geometrical nature of our criteria, and may be useful to
define entanglement for theories more general than that of quantum
mechanics (for example, semiclassical models or non-linear versions
of quantum mechanics). Finally, in section \ref{s:Conclusions} some
conclusions are drawn.

\section{Preliminaries}\label{s:preliminaries}

\nd For a Hilbert space $\mathcal{H}$ of dimension $N > 2$ the set
of pure states forms a $(2N - 2)$-dimensional manifold, of measure
zero, in the $(N^2 - 2)-$dimensional boundary $\partial
\mathcal{C}_N$ of the set $\mathcal{C}_N$ of density matrices. The
set of mixed quantum states $\mathcal{C}_N$ consists of Hermitian,
positive matrices of size $N$, normalized by the trace condition,
that is

\begin{eqnarray}
\mathcal{C}_N=  \{\rho: \rho = \rho^{\dagger};\,\,\, \rho
\ge 0;\,\,\, tr(\rho) = 1;\,\,\, dim(\rho) = N\}.
\end{eqnarray}

\nd It  can be shown  for finite dimensional bipartite states that
there exist always a non-zero measure $\mu_s$ in the neighborhood
of separable states containing maximum uncertainty ones.  $\mu_s$
tends to zero as the dimension tends to infinity. Finally, for an
infinitely dimensional Hilbert space almost all states are
entangled (i. e., separable states are \emph{never dense})
\cite{Clifton-Halvorson-1999,Clifton-Halvorson-Kent-2000}.

\subsection{Notation}
 \nd Let us fix the notation to be employed here.  $\mathcal{P}(\mathcal{H})$ will denote the set of all
closed subspaces of a Hilbert space $\mathcal{H}$ (arbitrary
dimension), which are in a one to one correspondence with the
projection operators. Because of this one to one link, one usually
employs the notions of ``closed subspace'' and ``projector'' in
interchangeable fashion. An important construct is $\mathcal{A}$,
the set of bounded Hermitian operators on $\mathcal{H}$, while the
bounded operators on $\mathcal{H}$ will be denoted by
$\mathcal{B}(\mathcal{H})$. The projective Hilbert space
$\mathbf{C}\mathbf{P}(\mathcal{H})$ of a complex Hilbert space
$\mathcal{H}$ is the set of equivalence classes of vectors $v$ in
$\mathcal{H}$, with $v \ne 0$, given by $v \sim w$ when $v = \lambda
w$, with $\lambda$ a non-zero scalar. Here the equivalence classes
for $\sim$ are also called projective rays. A trace class operator
is a compact one for which a  finite trace may be defined
(independently of the choice of basis). We will appeal below to the
set $\mathcal{C}$ containing all positive, hermitian, and
trace-class (normalized to unity) operators in
$\mathcal{B}(\mathcal{H})$.

\noindent Let us remind the reader that a {\sf lattice } $\mathcal{L}$ is a partially ordered set (also
called a poset) in which any two elements $a$ and $b$ have a
unique supremum (the elements' least upper bound ``$a\vee b$";
called their join) and an infimum (greatest lower bound ``$a\wedge
b$"; called their meet). Lattices can also be characterized as
algebraic structures satisfying certain axiomatic identities.
Since the two definitions are equivalent, lattice theory draws on
both order theory and universal algebra. For additional
details, see Appendix \ref{s:ApendixB}. \vskip 3mm

\nd Let $\mathcal{H}$ be a separable Hilbert space of arbitrary
dimension representing a quantum system. As stated above, the
bounded operators on $\mathcal{H}$ will be denoted by
$\mathcal{B}(\mathcal{H})$. There are many topologies and relevant
subsets of $\mathcal{B}(\mathcal{H})$. In the literature,
$\mathcal A$ has denoted different subsets of
$\mathcal{B}(\mathcal{H})$. While mainly it denotes the Hermitian
operators, in some works it denotes the Hilbert-Schmidt operators
\cite{zyczkowski1998}. In this work we will use the following
notation

\begin{equation*}
\mathcal A =\{T\in \mathcal{B}(\mathcal{H}):T^\dag=T\}
\end{equation*}

\noindent Suppose that $T$ is a compact operator such that

\begin{equation}
\sum_{i\in I}\langle v_{i}|T v_{i}\rangle<\infty
\end{equation}

\noindent for all orthonormal basis $\{|v_{i}\rangle\}_{i\in I}$.
Then, the map $\mbox{tr}(\cdot)$ defined as

\begin{equation}
\mbox{tr}(T)=\sum_{i\in I}\langle v_{i}|T v_{i}\rangle
\end{equation}
\noindent is independent of the choice of basis. The set of
Hilbert Schmidt operators will be denoted by
$\mathcal{B}_2(\mathcal{H})$ and are defined by
\[\mathcal{B}_2=\{T\in \mathcal{B}(\mathcal{H}):\mbox{tr}(T^2)<\infty\}.\]
The space $\mathcal{B}_2$ endowed with the inner product $\langle
T_1,T_2\rangle=\mbox{tr}(T_2^\dag T_1)$ is a Hilbert space. For
$T\in\mathcal{B}(\mathcal{H})$ the absolute value of $T$ is
defined by $|T|=(T^\dag T)^{1/2}$. We can also consider the
subspace formed by the trace class operator, defined by \[\mathcal
B_1
=\{T\in\mathcal{B}(\mathcal{H}):|T|^{1/2}\in\mathcal{B}_2(\mathcal{H})\}.\]
It can be shown that the following statements are equivalent:

\begin{enumerate}
\item $T\in \mathcal{B}_1$.
\item $T=AB$ for $A,B\in \mathcal{B}_2(\mathcal{H})$.
\item $|T|\in\mathcal{B}_1(\mathcal{H})$.
\item $\mbox{tr}(|T|)<\infty$.
\end{enumerate}

\noindent The space of trace class operators is a Banach space
endowed with the norm $\|T\|=\mbox{tr}(|T|)$.

\noindent Notice that in $\mathcal{B}_2(\mathcal{H})\cap \mathcal
A$, the norm induced by the inner product is given by
$\|T\|=\mbox{tr}(T^\dag T)^{1/2}=\mbox{tr}(T^2)^{1/2}$ and it
coincides with the $\ell^2$ norm of the eigenvalues while in
$\mathcal{B}_1(\mathcal{H})\cap \mathcal A$ the norm coincides
with the $\ell ^1$ norm of the eigenvalues. Thus,
$\mathcal{B}_2\subset\mathcal{B}_1$ in $\mathcal A$.

\nd Coming back to the closed subspaces of $\mathcal{H}$ which are
in a one to one correspondence with the projection operators,  an
operator $P\in\mathcal{B}(\mathcal{H})$ is said to be a projector
if it satisfies

\begin{equation}
P^{2}=P
\end{equation}

\noindent and

\begin{equation}
P=P^{\dag}
\end{equation}

\subsection{Elementary measurements and projection
operators}\label{s:COMandEffects}

\nd A projection operator represents an elementary measurement
given by a yes-no experiment, i.e., a test in which we get the
answer ``yes" or the answer ``no". If $\mathcal{R}$ is the real
line, let $B(\mathcal{R})$ be the family of subsets of
$\mathcal{R}$ such that

\begin{itemize}
\item 1 - The family is closed under set theoretical complements.

\item 2 - The family is closed under denumerable unions.

\item 3 - The family includes all open intervals.
\end{itemize}

\noindent The elements of $B(\mathcal{R})$ will be called the
\emph{Borel subsets} of $\mathcal{R}$ \cite{ReedSimon}. A
projection valued measure (PVM) $M$, is a mapping

\begin{subequations}

\begin{equation}
M: B(\mathcal{R})\rightarrow \mathcal{P}(\mathcal{H})
\end{equation}

\noindent such that

\begin{equation}
M(0)=0
\end{equation}
\begin{equation}
M(\mathcal{R})=\mathbf{1}
\end{equation}
\begin{equation}
M(\cup_{j}(B_{j}))=\sum_{j}M(B_{j}),\,\,
\end{equation}

\noindent for any disjoint family ${B_{j}}.$ Also,

\begin{equation}
M(B^{c})=\mathbf{1}-M(B)=(M(B))^{\bot}
\end{equation}

\end{subequations}

\noindent  All operators representing observables may be expressed
in terms of projection operators (and so by elementary measurements)
via the spectral decomposition theorem, which asserts that the set
of spectral measurements may be put in a bijective correspondence
with the set $\mathcal{A}$ of Hermitian operators of $\mathcal{H}$.
A list of set-theory concepts used in this work can be found in
Appendix \ref{s:ApendixA}

\nd Elementary (sharp) tests in quantum mechanics are represented by
projection operators that form the well known von Newmann's lattice
${\mathcal{L}}_{vN}$, an orthomodular one (see Appendix
\ref{s:ApendixB}). The Born-rule implies that probabilities in
quantum mechanics are linked to measures over the von Newmann's
lattice. Using Gleason's theorem, it is possible to link in a
bijective way density matrixes and non-kolmogorovian probability
measures (more on this below).

\section{Quantum vs. classical probabilities}\label{s:probabilities}

\noindent  The reader is advised to consult the Appendixes regarding
some mathematics concepts appealed to below. It is a well known
fact, since the 30's, that a quantum system represented by a Hilbert
space $\mathcal{H}$ is  associated to a lattice formed by all its
closed subspaces \newline
${\mathcal{L}}_{v\mathcal{N}}({\mathcal{H}})=
<{\mathcal{P}}({\mathcal{H}}),\ \cap,\ \oplus,\ \neg,\ 0,\ 1>$,
where $0$ is the empty set $\emptyset,$  $1$  the total space
$\mathcal{H}$, $\cap$  the intersection, $\oplus$ the closure of the
sum, and $\neg(\mathcal{S})$  the orthogonal complement of a
subspace $\mathcal{S}$ \cite{mikloredeilibro}. This is the Hilbert
lattice, named ``Quantum Logic" by Birkhoff and von Neumann
\cite{BvN}. We will refer to this lattice as
${\mathcal{L}}_{v\mathcal{N}}$, the `von Neumann lattice'. Thus, the
set of elementary yes-no tests has an orthomodular lattice
structure, which is itself non-boolean, modular in the finite
dimensional case, and never modular in the infinite one. We will
relate below elementary tests to quantum probability spaces and
study their lattice structure.

\noindent Given a set $\Omega$, let us consider a $\sigma$-algebra
(see  Appendixes) $\Sigma$ of $\Omega$. Then, a probability measure
will be given by a function $\mu$ such that

\begin{subequations}\label{e:kolmogorovian}
\begin{equation}
\mu:\Sigma\rightarrow[0,1]
\end{equation}
\noindent which satisfies
\begin{equation}
\mu(\emptyset)=0
\end{equation}
\begin{equation}
\mu(A^{c})=1-\mu(A),
\end{equation}

\noindent where $(\ldots)^{c}$ means set-theoretical-complement
and for any pairwise disjoint denumerable family $\{A_{i}\}_{i\in
I}$

\begin{equation}
\mu(\bigcup_{i\in I}A_{i})=\sum_{i}\mu(A_{i})
\end{equation}

\end{subequations}

\noindent where conditions (\ref{e:kolmogorovian}) are the well
known axioms of Kolmogorov.

\nd  In the formulation of both classical and quantum probabilities,
states can be regarded as representing consistent probability
assignments \cite{wilce}. In the quantum mechanics instance {\it
this ``states as mappings" visualization}  is achieved via a
function \cite{mikloredeilibro}

\begin{subequations}
\begin{equation}\label{e:nonkolmogorov}
s:\mathcal{P}(\mathcal{H})\rightarrow [0;1]
\end{equation}

\noindent such that:

\begin{equation}
s(\textbf{0})=0 \,\, (\textbf{0}\,\, \mbox{is the null subspace}).
\end{equation}

\begin{equation}
s(P^{\bot})=1-s(P),\end{equation} \noindent and, for a denumerable and
orthogonal family of projections

\begin{equation}
\,\, {P_{j}}, \,\,s(\sum_{j}P_{j})=\sum_{j}s(P_{j}).
\end{equation}

\end{subequations}

\noindent  The above equation defines a probability, but in fact,
not a classical one, because classical probability axioms obey the
Kolmogorov's axioms of equation (\ref{e:kolmogorovian}). The main
difference comes from the fact that the $\sigma$-algebra in
(\ref{e:kolmogorovian}) is boolean, while
$\mathcal{P}(\mathcal{H})$ is not. Thus, quantum probabilities are
also called non-kolmogorovian (or non-boolean) probability
measures. The crucial fact is that, in the quantum case, we do not
have a $\sigma$-algebra, but an orthomodular lattice of
projections. \noindent  Most importantly, Gleason's theorem
\cite{Gleason,Gleason-Dvurechenski-2009} asserts that if
$dim(\mathcal{H})\geq 3$, then: \vskip 3mm

\fbox{\parbox{6.0in}{\nd The set of all measures of the form
(\ref{e:nonkolmogorov}) can be put into one to one correspondence
with the set $\mathcal{C}$ formed by all positive, hermitian and
trace-class (normalized to unity) operators in
$\mathcal{B}(\mathcal{H})$}} \vskip 3mm

\nd More generally, consider a $C^{\ast}$-algebra $\mathbf{A}$.  The
prototypical example of a such an algebra is the algebra
$\mathcal{B}(\mathcal{H})$ of bounded (equivalently continuous)
linear operators defined on a complex Hilbert space. In general, a
state $\varphi$ will be a positive linear functional of norm equal
to unity. If the algebra has a unit (as is the case for
$\mathcal{B}(\mathcal{H})$), states will be given by the
intersection of the closed affine hyperplane $\varphi(\mathbf{1})=1$
and the set of positive linear forms on $\mathbf{A}$ of norm $\leq
1$ (which is compact in the topology of pointwise convergence) and
then the concomitant  extension of the set $\mathcal{C}$ will be a
\underline{convex and compact space}. If $P\in\mathcal{P}(\mathcal{H})$ the
correspondence between $\rho\in\mathcal{C}$ and its induced
probability measure is given by

\begin{equation}\label{e:bornrule}
s_{\rho}(P)=\mbox{tr}(\rho P)
\end{equation}

\noindent Equation (\ref{e:bornrule}) is essentially Born's rule.
Any $\rho\in\mathcal{C}$ may be written as

\begin{equation}\label{e:convexity}
\rho=\sum_{i}p_{i}P_{\psi_{i}}
\end{equation}

\noindent where the $P_{\psi_{i}}$ are one dimensional projection
operators on the rays (subspaces of dimension one) generated by
the vectors $\psi_{i}$ and $\sum_{i}p_{i}=1$ ($p_{i}\geq 0$).
Thus, it is clear that $\mathcal{C}$ is a convex set. If the sum
in (\ref{e:convexity}) is finite, then $\rho$ is said to be of
finite range. It is important to remark that in the infinite
dimensional case, the sum in (\ref{e:convexity}) may be infinite
in a non-trivial sense. $\mathcal{C}$ is then a set formed of
non-boolean probability measures. That $\mathcal{C}$ is a closed
convex set can also be seen by the fact that if we define the
half-planes

\begin{subequations}
\begin{equation}
H_x=\{\rho\in\mathcal{A}\,|\,x^\dag\rho x < 0\}
\end{equation}
\begin{equation}
H_x^+=\{\rho\in\mathcal{A}\,|\,x^\dag\rho x\geq0\}
\end{equation}
\end{subequations}

\noindent then

\begin{equation}
\mathcal{C}=\bigcap_{x\in\mathcal{H}}H_x^+\cap\{\rho\,|\,\text{tr}(\rho)=1\}
\end{equation}

\noindent If we consider now $\mathcal{L}_{\mathcal{C}}$ as the
set of all convex subsets of $\mathcal{C}$, that is

\begin{definition}
$\mathcal{L}_{\mathcal{C}}=\{C\subseteq\mathcal{C}\,\,|\,\,C\,\,\mbox{is
convex}\}$
\end{definition}

\noindent then any element of $\mathcal{L}_{\mathcal{C}}$ will be
itself a ``probability space", in the sense that it is a set of
non boolean probability measures closed under convex combinations
(not to be confused with the usual mathematical notion of sample
space). We will show that $\mathcal{L}_{\mathcal{C}}$ is endowed
with a canonical lattice structure in Section \ref{s:Convex
lattice}.

\noindent A general (pure) state can be written as (using Dirac's
notation):

\begin{equation}
\rho=|\psi\rangle\langle\psi|
\end{equation}

\noindent We will denote the set of all pure states by

\begin{equation}
P(\mathcal{C}):=\{\rho\in\mathcal{C}\,|\, \rho^{2}=\rho\}
\end{equation}
This set is in correspondence with the rays of $\mathcal{H}$ via
the association:

\begin{equation}
\mathcal{F}:\mathcal{C}\mathcal{P}(\mathcal{H})\rightarrow
\mathcal{C}, \quad [|\psi\rangle]\mapsto|\psi\rangle\langle\psi|
\end{equation}

\noindent where $\mathcal{C}\mathcal{P}(\mathcal{H})$ is the
projective space of $\mathcal{H}$, and $[|\psi\rangle]$ is the
class defined by the vector $|\psi\rangle$
($|\varphi\rangle\sim|\psi\rangle\longleftrightarrow|\varphi\rangle=\lambda|\psi\rangle$,
$\lambda\neq 0$). If $M$ represents an observable, its mean value
$\langle M\rangle$ is given by

\begin{equation}\label{e:meanvalueoperator}
\mbox{tr}(\rho M)=\langle M\rangle
\end{equation}

\noindent Notice that the set of positive operators forms a cone,
and that the set of trace class operators (of trace one) forms an
hyperplane. Thus, $\mathcal{C}$ is the intersection of a cone and
an hyperplane embedded in $\mathcal{A}$. This structure (or
geometrical convex setting) is susceptible of considerable
generalization (see
\cite{Barnum-Wilce-2006,Barnum-Wilce-2009,Barnum-Wilce-2010} for
an excellent overview, partly reproduced in this work for the sake
of completeness).

\section{Quantal effects and convex operational approach
(COM)}\label{s:COMapproach}

\nd In modelling probabilistic operational theories one associates
to any probabilistic system a triplet $(X,\Sigma,p)$, where
\begin{enumerate}
\item $\Sigma$ represents the set of states of the system,
\item $X$ is the set of possible measurement outcomes, and
\item $p:X\times \Sigma\mapsto [0,1]$ assigns to each outcome $x\in
X$ and state $s\in\Sigma$ a probability $p(x,s)$ of $x$ to occur
if the system is in the state $s$.  \item   If we fix $s$ we
obtain the mapping $s\mapsto p(\cdot,s)$ from $\Sigma\rightarrow
[0,1]^{X}$.
\end{enumerate}

\nd Note that
\begin{itemize}
\item This again identifies all the states of $\Sigma$ with maps.
\item Considering their closed convex hull, we obtain the set
$\Omega$ of possible probabilistic mixtures (represented
mathematically by convex combinations) of states in $\Sigma$.
\item In this way one also obtains, for any outcome $x\in X$, an
affine evaluation-functional $f_{x}:\Omega\rightarrow [0,1]$,
given by $f_{x}(\alpha)=\alpha(x)$ for all $\alpha\in \Omega$.
\item  More generally, any affine functional $f:\Omega\rightarrow
[0,1]$ may be regarded as representing a measurement outcome and
thus use $f(\alpha)$ to represent the probability for that outcome
in state $\alpha$.
\end{itemize}

\nd  For the special case of quantum mechanics, the set of all
affine functionals so-defined are called effects. They form an
algebra (known as the \emph{effect algebra}) and represent
generalized measurements (unsharp, as opposed to sharp measures
defined by projection valued measures). The specifical form of an
effect in quantum mechanics is as follows. A generalized
observable or \emph{positive operator valued measure} (POVM)
\cite{Busch-Lahti-2009,Thesis-Heinonen-2005,Ma-Effects} will be
represented by a mapping

\begin{subequations}

\begin{equation}
E:B(\mathcal{R})\rightarrow\mathcal{B}(\mathcal{H})
\end{equation}

\noindent such that

\begin{equation}
E(\mathcal{R})=\mathbf{1}
\end{equation}

\begin{equation}
E(B)\geq 0, \,\,\mbox{for any}\,\, B\in B(\mathcal{R})
\end{equation}

\begin{equation}
E(\cup_{j}(B_{j}))=\sum_{j}E(B_{j}),\,\, \mbox{for any disjoint
familly}\,\, {B_{j}}
\end{equation}

\end{subequations}

\noindent The first condition means that $E$ is normalized to
unity, the second one that $E$ maps any Borel set B to a positive
operator, and the third one that $E$ is $\sigma$-additive with
respect to the weak operator topology. In this way, a generalized
POVM can be used to define a family of affine functionals on the
state space $\mathcal{C}$ (which corresponds to $\Omega$ in the
general probabilistic setting) of quantum mechanics as follows

\begin{subequations}

\begin{equation}
E(B):\mathcal{C}\rightarrow [0,1]
\end{equation}

\begin{equation}
\rho\mapsto \mbox{tr}(E\rho)
\end{equation}

\end{subequations}

\noindent Positive operators $E(B)$ which satisfy $0\leq
E\leq\mathbf{1}$ are called effects (which form an \emph{effect
algebra} \cite{Cattaneo-Gudder-1999,EffectAlgebras-Foulis-2001}).
Let us denote by $\mathrm{E}(\mathcal{H})$ the set of all effects.\vskip 3mm \nd
 {\sf Indeed, a  POVM
 is a measure whose values are non-negative self-adjoint
operators on a Hilbert space. It is the most general formulation
of a measurement in the theory of quantum physics}.  \vskip 3mm \nd A rough
analogy would consider that a POVM is to a projective measurement
what a density matrix is to a pure state. Density matrices can
describe part of a larger system that is in a pure state
(purification of quantum state); analogously, POVMs on a physical
system can describe the effect of a projective measurement
performed on a larger system. Another, slightly different way to
define them is as follows:

\nd Let $(X, M)$ be measurable space; i.e., $M$ is a
$\sigma-$algebra of subsets of $X$. A POVM is a function $F$ defined
on $M$ whose values are bounded non-negative self-adjoint operators
on a Hilbert space $\mathcal{H}$ such that $F(X) = I_H$ (identity)
and for every i) $\xi \in  \mathcal{H}$ and ii) projector $P=
|\psi\rangle\langle  \psi|;\,\, |\psi\rangle \in \mathcal{H}$,  $P
\rightarrow\, \langle F(P)\xi \vert  \xi \rangle$ is a  non-negative
countably additive measure on $M$. This definition should be
contrasted with that for the projection-valued measure, which is
very similar, except that, in the projection-valued measure, the
$F$s are required to be projection operators.

\subsection{Convex operational approach}

\nd Returning now to the general model of probability states we
may, as explained in the Appendix, consider the convex set
$\Omega$ as the basis of a positive cone $V_{+}(\Omega)$ of the
linear space $V(\Omega)$. Thus, every affine linear functional can
be extended to a linear functional in $V(\Omega)^{\ast}$ (the dual
linear space). It can be shown that there is a unique unity
functional such that $u_{\Omega}(\alpha)=1$ for all
$\alpha\in\Omega$ (in quantum mechanics, this unit functional is
the trace function). The general operational or convex approach
may be viewed as the triplet $(A,A^{\sharp},u_{A})$, where
\begin{enumerate} \item $A$ is a normed space endowed with \item a strictly positive
linear functional $u_{A}$, and \item $A^{\sharp}$ is a weak-$\ast$
dense subspace of $A^{\ast}$ ordered by a chosen regular cone
$A_{+}^{\sharp}\subseteq A_{+}^{\ast}$ containing $u_{A}$.
\item Effects will be given by functionals $f$ in $A_{+}^{\sharp}$
such that $f\leq u_{A}$. \end{enumerate} \nd As viewed  from the
COM's standpoint, one of the characteristic features which
distinguish classical from quantum mechanics is the fact that in
the classical instance any non-pure state has a unique convex
decomposition in pure states, while this is no longer true in the
quantum case.   The forthcoming section reviews the principal
features of the convex set of states and relates them to
entanglement. This will be useful for studying  canonical maps
 defined between the probability spaces of a composite
system and those of its subsystems. Afterwards, in subsequent
 sections,  {\it we will i) show that $\mathcal{L}_{\mathcal{C}}$ is
endowed with a canonical lattice theoretical structure, ii) find
its main features,  and iii) relate them to quantum entanglement
and positive maps}. Of course, some
 of these results can easily be extended to the general setting of
convex operational models.

\section{Entanglement and the convex set of states: an overview}\label{s:introduction to QL}

\nd Consider composite quantal systems $S$ of subsystems $S_{1}$ and
$S_{2}$, with associated  separable Hilbert spaces $\mathcal{H}_{1}$
and $\mathcal{H}_{2}$. The pure states  are given by rays in the
tensor product space
$\mathcal{H}=\mathcal{H}_{1}\otimes\mathcal{H}_{2}$. It is not true
that any pure state of $S$ factorizes after the interaction into
pure states of the subsystems, a  situation  very different to that
of classical mechanics, where for state spaces $\Gamma_{1}$ and
$\Gamma_{2}$ we assign $\Gamma=\Gamma_{1}\times\Gamma_{2}$ for their
composition. This is an absolutely central issue.

\vskip 3mm \nd Let us now briefly review the quantal relationship
between the $S-$states and the states of the subsystems. Consider
for simplicity  the bipartite case with systems $S_{1}$ and $S_{2}$.
If $\{|x_{i}^{(1)}\rangle\}$ and $\{|x_{i}^{(2)}\rangle\}$ are the
corresponding orthonormal basis of $\mathcal{H}_{1}$ and
$\mathcal{H}_{2}$, respectively, then the set
$\{|x_{i}^{(1)}\rangle\otimes|x_{j}^{(2)}\rangle\}$ constitutes an
orthonormal basis for $\mathcal{H}_{1}\otimes\mathcal{H}_{2}$.

\noindent For observables of the form $A_{1}\otimes\mathbf{1}_{2}$
and $\mathbf{1}_{1}\otimes A_{2}$ (with $\mathbf{1}_{1}$ and
$\mathbf{1}_{2}$ the identity operators in $\mathcal{H}_{1}$ and
$\mathcal{H}_{2}$ respectively), then reduced state operators
$\rho_{1}$ and $\rho_{2}$ can be defined for systems $S_{1}$ and
$S_{2}$ such that

\begin{equation}\label{e:conditiononreducedstate}
\mbox{tr}(\rho A_{1}\otimes\mathbf{1}_{2})=\mbox{tr}(\rho_{1}A_{1})
\end{equation}

\noindent with an analogous equation for $\rho_{2}$. This
assignation of reduced states is done via \emph{partial traces
maps}. It is of interest for us to study how to define these maps in
detail. In order to do so and given a state $\rho\in\mathcal{C}$,
consider the functional

\begin{eqnarray}
F_{\rho}:\mathcal{A}(\mathcal{H}_{1})\otimes\mathbf{1}_{2}\longrightarrow \mathcal{R}\nonumber\\
A\otimes\mathbf{1}_{2}\mapsto\mbox{tr}(\rho A\otimes\mathbf{1})
\end{eqnarray}

\noindent Clearly $F_{\rho}$ induces a map

\begin{eqnarray}
f_{\rho}:\mathcal{A}(\mathcal{H}_{1})\longrightarrow \mathcal{R}\nonumber\\
A\mapsto\mbox{tr}(\rho A\otimes\mathbf{1})
\end{eqnarray}

\noindent which specifies a state as defined in
(\ref{e:nonkolmogorov}) when restricted to projections in
$\mathcal{A}(\mathcal{H}_{1})$. Thus, using Gleason's theorem, there
exists $\rho_{1}$ such that $f_{\rho}(A)=\mbox{tr}(\rho_{1}A)$
(where now the trace is taken on Hilbert space $\mathcal{H}_{1}$).
We have thus shown that for any $\rho\in\mathcal{C}$ there exists
$\rho_{1}\in\mathcal{C}_{1}$ such that for any
$A_{1}\in\mathcal{A}(\mathcal{H}_{1})$ Equation
(\ref{e:conditiononreducedstate}) is satisfied.

\noindent This assignment is the one which allows to define the
partial trace $\mbox{tr}_{2}(\cdot)$ given by equation
(\ref{e:partialtrace}) below. An analogous reasoning leads to
$\mbox{tr}_{1}(\cdot)$. Accordingly, we can consider the maps:

\begin{subequations}\label{e:partialtrace}

\begin{equation}
\mbox{tr}_{i}:\mathcal{C}\longrightarrow \mathcal{C}_{j}
\end{equation}

\begin{equation}
\rho\mapsto\mbox{tr}_{i}(\rho)=\rho_{j}
\end{equation}

\end{subequations}

\noindent In
\cite{Clifton-Halvorson-1999,Clifton-Halvorson-Kent-2000} it is
shown that these maps are continuous and onto. \emph{Thus, partial
traces defined in equation (\ref{e:partialtrace}) are continuous and
onto, a fact that will be used in section \ref{s:The Relationship
for convex}}.

\noindent Operators of the form $A_{1}\otimes\mathbf{1}_{2}$ and
$\mathbf{1}_{1}\otimes A_{2}$ represent magnitudes related to
$S_{1}$ and $S_{2}$, respectively. When $S$ is in a product state
$|\varphi_{1}\rangle\otimes|\varphi_{2}\rangle$, the mean value of
the product operator $A_{1}\otimes A_{2}$ will be

\begin{equation}
\mbox{tr}(|\varphi_{1}\rangle\otimes|\varphi_{2}\rangle\langle\varphi_{1}
|\otimes\langle\varphi_{2}|A_{1}\otimes A_{2})=\langle
A_{1}\rangle\langle A_{2}\rangle
\end{equation}

\noindent reproducing statistical independence. Separable states are
defined (see \cite{Werner}) as those states of $\mathcal{C}$ which can be
approximated by a succession of states written as a convex
combination of product states of the form

\begin{equation}
\rho_{Sep}=\sum_{i,j}\lambda_{ij}\rho_{i}^{(1)}\otimes\rho_{j}^{(2)}
\end{equation}

\noindent where $\rho_{i}^{(1)}\in\mathcal{C}_{1}$ and
$\rho_{j}^{(2)}\in\mathcal{C}_{2}$, $\sum_{i,j}\lambda_{ij}=1$ and
$\lambda_{ij}\geq 0$. Thus, the set $\mathcal{S}(\mathcal{H})$ of
separable states is defined as the convex hull of all product
states closed in the trace norm topology (i.e., the trace norm
topology given by
$\|\rho\|_{\mbox{tr}}=\mbox{tr}((\rho^{\dag}\rho)^{\frac{1}{2}}$).
Accordingly,

\begin{definition}
$\mathcal{S}(\mathcal{H}):=\{\rho\in\mathcal{C}\,|\,\rho\,\,
\mbox{is separable}\}$
\end{definition}

\noindent There exist very many non-separable states in
$\mathcal{C}$, called  entangled ones
\cite{bengtssonyczkowski2006}. They form a set we will call
$\mathcal{E}(\mathcal{H})$, i.e.,

\begin{definition}
$\mathcal{E}(\mathcal{H})=\mathcal{C}\backslash\mathcal{S}(\mathcal{H})$
\end{definition}

\noindent Alike entangled states, separable states are reproducible
by local classical devices (but this by no means implies that they
\emph{are} classical). It is a fact that, equipped with the
trace-norm distance, $\mathcal{S}(\mathcal{H})$ is a complete
separable metric space. Also that a sequence of quantum states
converging to a state in the weak operator topology converges to it
as well in the trace norm. Stated in a more technical form, a state
in $\mathcal{B}(\mathcal{H}_{1}\otimes\mathcal{H}_{2})$ is called
separable if it is in the convex closure of the set of all product
states in $\mathcal{S}(\mathcal{H}_{1}\otimes\mathcal{H}_{2})$.

\noindent Estimating the volume of $\mathcal{S}(\mathcal{H})$ is
of great interest (see --among others--\cite{zyczkowski1998},
\cite{horodecki2001} and \cite{aubrum2006}). \noindent  For pure
states we have at hand the superposition principle

\begin{principle}\label{e:superposition principle}
\noindent Superposition Principle. If $|\psi_{1}\rangle$ and
$|\psi_{1}\rangle$ are physical states, then
$\alpha|\psi_{1}\rangle+\beta|\psi_{1}\rangle$
($|\alpha|^{2}+|\beta|^{2}=1$) will be a physical state too.
\end{principle}

\noindent Additionally, when we include mixtures we have

\begin{principle}\label{e:mixing principle}
\noindent  Mixing Principle. If $\rho$ and $\rho'$ are physical
states, then $\alpha\rho+\beta\rho'$ ($\alpha+\beta=1$,
$\alpha,\beta\geq 0$) will be a physical state too.
\end{principle}

\noindent There exist many  studies which concentrate on mixtures.
For example, this is the case for works on quantum decoherence
\cite{Schlosshauer,Castagnino-2011,Castagnino-2008}, quantum
information processing, or generalizations of quantum mechanics
which emphasize its convex nature (not necessarily equivalent to
``Hilbertian" $QM$). The set of interest in such studies is
$\mathcal{C}$ and not the lattice of projections. Consequently, it
seems adequate to consider in some detail the notion of structures,
including improper mixtures
\cite{extendedql,d'esp,Kirkpatrik,ReplytoKirkpatrik} and pure
states, acknowledging their important place in the physical
``discourse", as  anticipated in the Introduction, in the hope that
such structures will provide a natural framework for studying
foundational issues.

\noindent A good question to ask is under which conditions a given
state may be decomposed in terms of other states, separable states
being a particular case of decomposing a given state in terms of
product states. This is the subject of decomposition theory (see
\cite{OlaBratteli}, chapter four). Given a $C^{\ast}$-algebra
$\mathbf{A}$ with identity $\mathbf{1}$, the set of states forms a
convex compact space of the dual space $\mathbf{A}^{\ast}$
(compact in the weak$^{\ast}$ topology). Given a state $\rho$ such
that $\rho\in K$, with $K$ a closed convex set, one attempts to
write an expression of the form

\begin{equation}\label{e:baricentric}
\rho=\int_{K} d\mu(\rho')\rho',
\end{equation}

\noindent where $\mu$ is a measure supported by the set of extremal
points of $K$ (see Appendix). Equation (\ref{e:baricentric}) is
referred to as the \emph{barycentric decomposition of $\rho$}. The
example of interest refers to the $C^{\ast}$-algebra
$\mathcal{B}(\mathcal{H})$, the relevant set of states being
$\mathcal{C}$, while $K=\mathcal{S}(\mathcal{H})$. In this specific
instance, and according to the above definition, it is possible to
show that \cite{Kholevo-Schirikov-Werner}

\begin{equation}
\rho=\int_{\mathcal{C}_{1}}\int_{\mathcal{C}_{2}}\rho_{\mathcal{H}_{1}}\otimes\rho_{\mathcal{H}_{2}}\nu(d\rho_{\mathcal{H}_{1}}
d\rho_{\mathcal{H}_{1}})
\end{equation}

\noindent and also that

\begin{equation}\label{e:integralpureproduct}
\rho=\int_{P(\mathcal{C}_{1})}\int_{P(\mathcal{C}_{2})}|\psi\rangle\langle\psi|\otimes|\varphi\rangle\langle\varphi|\nu(d\psi
d\varphi)
\end{equation}

\noindent Before concentrating attention on more general convex
subsets of $\mathcal{C}$, we will restrict ourselves in next
section to  a special mathematical construct, closely linked to $\mathcal{P}(\mathcal{H})$.

\section{The Lattice $\mathcal{L}$}\label{s:New language}

\nd  On  $\mathcal{H}$ we find, among many, two particularly
important constructs, namely, i) $\mathcal{A}$, the set of bounded
{\it and} Hermitian operators  together with ii)  the set of
(``merely") bounded operators (they map bounded sets to bounded
sets),  denoted by $\mathcal{B}(\mathcal{H})$. This set of bounded
linear operators, together with the addition and composition
operations, the norm and the adjoin operation, is a C*-algebra
(see Appendix \ref{s:ApendixA}). We {\it begin now to present our
new results}.

\vskip 3mm

\nd  Let us start by  extending some developments and definitions
of \cite{extendedql} to Hilbert spaces of arbitrary (possibly
infinite) dimension. Let us define $G(\mathcal{A})$ as the lattice
associated to the pair $(\mathcal{A},\mbox{tr})$
\begin{equation}
G(\mathcal{A}):=\{S\subset \mathcal{A}\,|\, S \text{ is a
\underline{closed} }\mathcal{R}\text{-subspace}\}.
\end{equation}
\noindent It is well-known that $G(\mathcal{A})$ is a modular, orthocomplemented, atomic
and complete lattice (see Appendix \ref{s:ApendixB}). It is not distributive and hence not a Boolean
algebra. Let $\mathcal{L}$ be the associated, induced lattice in
$\mathcal{C}$:

\begin{equation}
\mathcal{L}:=\{S\cap\mathcal{C}\,|\, S\in G(\mathcal{A})\}
\end{equation}

\noindent for which we will define now canonical lattice
operations.

\subsection{Characterizing $\mathcal{L}$}
\noindent Note that there are many subspaces $S^{\,'}\in
G(\mathcal{A})$ such that $S\cap \mathcal{C}=S^{\,'}\cap
\mathcal{C}$. In order to deal with them and appropriately define
the lattice induced in $\mathcal{C}$ by $G(\mathcal{A})$, we need a
subspace that will represent the set $S\cap \mathcal{C}$. For each
$L\in\mathcal{L}$ we will choose as this representative that
subspace possessing the minimum dimension

\begin{equation}
S_{L}=\bigcap\{S\in G(\mathcal{A})\,|\,S\cap\mathcal{C}=L\,\}
\end{equation}
\noindent We need to characterize the subspace $S_L$. To this end
notice that $S_{L}$ satisfies

\begin{equation} \label{clas}
S_{L}\cap\mathcal{C}=L,
\end{equation}

\noindent (notice that $S_{L}$ may be infinite dimensional). Let
consider the class $[S]$ of elements satisfying (\ref{clas}):
$\,\,[S]=L$, with $S\in G(\mathcal{A})$ being an element of the
class.

\vskip 3mm \nd We introduce some math-notation now. First, closure
of a set will be indicated by an overbar over the set's name.
Secondly, given a set $M$ we will denote by $<M>_E$ the set of
linear combinations of $M-$elements with coefficients extracted
from the set of scalars $E$.  Then

\begin{equation}
S\cap\mathcal{C}\subseteq\overline{<S\cap\mathcal{C}>}_{\mathcal{R}}\subseteq
S\Rightarrow
S\cap\mathcal{C}\cap\mathcal{C}\subseteq\overline{<S\cap\mathcal{C}>}_{\mathcal{R}}\cap\mathcal{C}
\subseteq S\cap\mathcal{C}\Rightarrow
\end{equation}

\begin{equation}
\overline{<S\cap\mathcal{C}>}\cap\mathcal{C}=S\cap\mathcal{C}
\end{equation}

\noindent Accordingly,  $\overline{<S\cap\mathcal{C}>}$ and $S$
are in the same class $L$. Note that
$\overline{<S\cap\mathcal{C}>}\subseteq S$ and if $S$ equals
$S_{L}$, we will also have  $S_{L}\subseteq
\overline{<S_{L}\cap\mathcal{C}>}$ (because  $S_{L}$ is the
intersection of all the elements in the class). Consequently,

\begin{equation}
\overline{<S_{L}\cap\mathcal{C}>}=S_{L}
\end{equation}
The above equation also implies that $S_{L}$ is the unique
subspace with that property, because if we choose $S'$ such that
$S'\cap\mathcal{C}=S\cap\mathcal{C}$, then

\begin{equation}
S=\overline{<S\cap\mathcal{C}>}=\overline{<S'\cap\mathcal{C}>}=S'
\end{equation}
\noindent Summing up, the representative of a class $L$ is the
unique $\mathcal{R}$-subspace $S\subseteq\mathcal{A}$ such that

\begin{equation}
S=\overline{<S\cap\mathcal{C}>}_{\mathcal{R}}
\end{equation}
\noindent and we have proved that it is equal to $S_{L}$, that we
call the {\it good representative}.

\subsubsection{Math interlude}

\nd We need at this point to remind the reader of some lattice
concepts (See appendix \ref{s:ApendixB}). Let $\mathcal{P}$ be a poset, partially
ordered by the order relation ``less or equal". An element $a \in
\mathcal{P}$ is called an {\sf atom} if it covers some minimal
element of  $\mathcal{P}$. As a result, an atom is never minimal. A
poset $\mathcal{P}$ is called atomic if for every element $p \in
\mathcal{P}$ (that is not minimal) an atom $a$ exist such that $a
\le p$.
 For instance,\newline
\nd 1. Let $A$ be a set and  $\mathcal{P} =2^A$ its power set.
$\mathcal{P}$  is a poset ordered by  the ``inclusion" relation,
with a unique minimal element. Thus, all singleton subsets $a$ of
$A$ are atoms in  $\mathcal{P}$ (a set is a singleton if and only if
its cardinality is 1). Accordingly, in the set-theoretic
construction of the natural numbers, the number 1 is defined as the
singleton $\{0\}$.
\newline \nd 2. The set of positive integeres is partially ordered if we define $a\le b$ to mean that
$b/a$ is a positive integer. Then 1 is a minimal element and any
prime number $p$ is an atom. \vskip 3mm \nd Given a lattice
$\mathcal{L}$ with underlying poset  $\mathcal{P}$, an element $a
\in  \mathcal{L}$  is called an atom (of  $\mathcal{L}$) if it is an
atom in  $\mathcal{P}$. A lattice is called an atomic lattice if its
underlying poset is atomic. An atomistic lattice is an atomic
lattice such that each element that is not minimal is a join of
atoms. \vskip 3mm \nd Finally (see Appendix \ref{s:ApendixB}), we call   {\it
modular} a  lattice  that satisfies the following self-dual
condition (modular law) $ x \leq b$ implies $x \vee (a \wedge b) =
(x \vee a) \wedge b$, where $\le$ is the partial order, and $\vee$
and $\wedge$ (join and meet, respectively) are the operations of the
lattice.

\subsection{Operations in $\mathcal{L}$}
Let us now define ``$\vee$", ``$\wedge$" and ``$\neg$" operations
and a partial ordering relation ``$\longrightarrow$" (or
equivalently ``$\leq$") in $\mathcal{L}$ as

\begin{enumerate}

\item
\begin{equation}
(S\cap\mathcal{C}) \wedge (T\cap\mathcal{C})\Longleftrightarrow
(\overline{<S\cap\mathcal{C}>}\cap\overline{<T\cap\mathcal{C}>})\cap\mathcal{C}
\end{equation}

\item
\begin{equation}
(S\cap\mathcal{C}) \vee (T\cap\mathcal{C})\Longleftrightarrow
(\overline{<S\cap\mathcal{C}>}+\overline{<T\cap\mathcal{C}>})\cap\mathcal{C}
\end{equation}

\item
\begin{equation}
(S\cap\mathcal{C}) \longrightarrow
(T\cap\mathcal{C})\Longleftrightarrow
(S\cap\mathcal{C})\subseteq(T\cap\mathcal{C})
\end{equation}

\item
\begin{equation}
\neg(S\cap\mathcal{C})\Longleftrightarrow
<S\cap\mathcal{C}>^\perp\cap\mathcal{C}. \label{fouroper}
\end{equation}
\end{enumerate}

\nd Let us consider now some consequences of the above
definitions. It is easy to see that  (see Appendix \ref{s:ApendixB}):

\begin{prop}
\nd $\mathcal{L}$ is an atomic and complete lattice. If
$\dim(\mathcal{H})<\infty$, $\mathcal{L}$ is a modular lattice.
\end{prop}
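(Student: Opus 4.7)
The plan rests on the one-to-one correspondence $L\leftrightarrow S_L$ between elements of $\mathcal{L}$ and the \emph{good} closed $\mathcal{R}$-subspaces of $\mathcal{A}$ (those satisfying $S=\overline{<S\cap\mathcal{C}>}_{\mathcal{R}}$) that has just been established. This correspondence is an order embedding: $L_{1}\subseteq L_{2}$ iff $S_{L_{1}}\subseteq S_{L_{2}}$, where the non-trivial direction uses the identity $S_{L}\cap\mathcal{C}=L$. The overall strategy is to transport each of the three asserted properties from the ambient lattice $G(\mathcal{A})$, which is known to be complete, atomic, and (in finite dimension) modular.

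For completeness, given any family $\{L_{i}\}_{i\in I}\subseteq\mathcal{L}$, the intersection $\bigcap_{i}S_{L_{i}}$ lies in $G(\mathcal{A})$ (an arbitrary intersection of closed subspaces is closed), and $(\bigcap_{i}S_{L_{i}})\cap\mathcal{C}=\bigcap_{i}L_{i}$ serves as the infimum in $\mathcal{L}$. The supremum is produced dually by intersecting the closed sum $\overline{\sum_{i}S_{L_{i}}}$ with $\mathcal{C}$. For atomicity I claim the atoms of $\mathcal{L}$ are exactly the singletons $\{\rho\}$, $\rho\in\mathcal{C}$. The key calculation is that the one-dimensional real subspace spanned by $\rho$ intersects $\mathcal{C}$ in $\{\rho\}$ alone, because $t\rho\in\mathcal{C}$ forces $\mbox{tr}(t\rho)=t=1$ and $t\geq 0$. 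Hence $\{\rho\}\in\mathcal{L}$, it covers the minimum $\emptyset=\{0\}\cap\mathcal{C}$, and every non-minimal $L$ dominates $\{\rho\}$ for any $\rho\in L$, yielding atomicity.

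For modularity in the finite-dimensional case, $\mathcal{A}$ is itself finite dimensional, every $\mathcal{R}$-subspace is automatically closed, and $G(\mathcal{A})$ reduces to the classically modular lattice of all real subspaces of $\mathcal{A}$. Given $x\leq b$ in $\mathcal{L}$, so that $S_{x}\subseteq S_{b}$, the plan is to expand both sides of $x\vee(a\wedge b)=(x\vee a)\wedge b$ using the defining formulas for $\vee$ and $\wedge$, and reduce the resulting equality to the subspace modular identity $(S_{x}+S_{a})\cap S_{b}=S_{x}+(S_{a}\cap S_{b})$ valid in $G(\mathcal{A})$. The step I expect to be the main obstacle is that the map $L\mapsto S_{L}$ preserves joins but not meets: one has $S_{a\wedge b}=\overline{<S_{a}\cap S_{b}\cap\mathcal{C}>}_{\mathcal{R}}$, which can be strictly smaller than $S_{a}\cap S_{b}$ whenever the latter contains directions not spanned by states (for instance, traceless elements). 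Closing the argument therefore requires showing that, once we intersect with $\mathcal{C}$, these two candidate subspaces yield the same convex set, a delicate reconciliation that must exploit the hypothesis $S_{x}\subseteq S_{b}$ together with the convex-geometric structure of $\mathcal{C}$ to ensure that every state lying in $S_{x}+(S_{a}\cap S_{b})$ already belongs to $S_{x}+S_{a\wedge b}$.
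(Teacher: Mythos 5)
The paper offers no proof of this proposition at all---it is asserted as ``easy to see'' with a pointer to Appendix B---so your proposal has to stand on its own. Its first two thirds do stand. Completeness via arbitrary intersections $(\bigcap_i S_{L_i})\cap\mathcal{C}$, with suprema obtained dually (or from the existence of all infima together with the top element $\mathcal{C}$), is correct, and your identification of the atoms with the singletons $\{\rho\}$ via the computation $t\rho\in\mathcal{C}\Rightarrow \mbox{tr}(t\rho)=t=1$ is exactly the argument the paper itself uses two propositions later to establish the bijection between states and atoms. Those parts are complete.

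The modularity claim is where the genuine gap lies, and you have located it precisely but not closed it: the passage to good representatives preserves joins but not meets, since $S_{a\wedge b}=\overline{<S_a\cap S_b\cap\mathcal{C}>}$ may omit directions of $S_a\cap S_b$ that carry no states. The ``delicate reconciliation'' you defer to is in fact impossible: with the operations as defined, the modular law fails already for $\dim\mathcal{H}=2$. Let $\mathcal{C}$ be the Bloch ball and $\sigma_x,\sigma_y,\sigma_z$ the Pauli matrices. Take $S_b=\langle\mathbf{1},\sigma_x,\sigma_y\rangle_{\mathbb{R}}$, so $b$ is the equatorial disc; $S_x=\langle\mathbf{1}\rangle_{\mathbb{R}}$, so $x=\{\frac{1}{2}\mathbf{1}\}\subseteq b$; and $S_a=\langle\mathbf{1}+\frac{1}{\sqrt{2}}\sigma_z,\ \sigma_x\rangle_{\mathbb{R}}$, so $a$ is the chord joining the pure states with Bloch vectors $(\pm\frac{1}{\sqrt{2}},0,\frac{1}{\sqrt{2}})$; all three subspaces are good representatives. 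Then $S_a\cap S_b=\langle\sigma_x\rangle_{\mathbb{R}}$ contains no trace-one operator, so $a\wedge b=\mathbf{0}$ and $x\vee(a\wedge b)=x$ is a single point; but $x\vee a=\langle\mathbf{1},\sigma_x,\sigma_z\rangle_{\mathbb{R}}\cap\mathcal{C}$ is the disc in the $xz$-plane, so $(x\vee a)\wedge b$ is the whole diameter $\{\frac{1}{2}(\mathbf{1}+t\sigma_x)\,:\,|t|\leq 1\}$. The two sides differ exactly because the ``lost'' traceless direction $\sigma_x$ in $S_a\cap S_b$ reappears as states once $S_x$ is added and one intersects with $\mathcal{C}$ again. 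So your strategy for the third assertion cannot be completed as planned; all that survives of it is the inequality $x\vee(a\wedge b)\leq(x\vee a)\wedge b$, which holds in any lattice, and a proof (or correct reformulation) of modularity would have to start from different definitions than the ones given.
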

\nd $\mathcal{L}$ is not an orthocomplemented lattice, but it is
easy to show that non-contradiction holds

\begin{equation}
L\wedge\neg L=\mathbf{0}
\end{equation}

\noindent and also contraposition

\begin{equation}
\noindent L_{1}\leq L_{2}\Longrightarrow \neg L_{2}\leq \neg L_{1}
\end{equation}
\nd The following proposition is important because it {\it links
atoms and quantum states}:

\begin{prop}
There is a one to one correspondence between the states of the
system and the atoms of $\mathcal{L}$.
\end{prop}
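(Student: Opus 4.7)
The plan is to exhibit the bijection directly: each state $\rho\in\mathcal{C}$ will be identified with the singleton $\{\rho\}\in\mathcal{L}$, and I will verify first that this singleton is actually an element of $\mathcal{L}$, second that it is an atom, and third that every atom arises in this way.

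First I would check well-definedness. Given $\rho\in\mathcal{C}\subset\mathcal{A}$, the one-dimensional real line $S:=\mathbb{R}\rho$ is a closed $\mathbb{R}$-subspace of $\mathcal{A}$ and therefore lies in $G(\mathcal{A})$. A scalar multiple $\lambda\rho$ satisfies $\text{tr}(\lambda\rho)=\lambda$, so the normalization condition forces $\lambda=1$, whence $S\cap\mathcal{C}=\{\rho\}$. This shows $\{\rho\}\in\mathcal{L}$, and by the characterization of good representatives derived in the previous subsection its good representative is $\overline{\langle\{\rho\}\rangle}_{\mathcal{R}}=\mathbb{R}\rho$, so no ambiguity arises from the quotient.

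Next I would verify the atom property. The least element of $\mathcal{L}$ is $\mathbf{0}=\{0\}\cap\mathcal{C}=\emptyset$, since trace-one operators cannot coincide with the zero operator. The partial order on $\mathcal{L}$ is set inclusion, so if $L\in\mathcal{L}$ satisfies $\mathbf{0}<L\leq\{\rho\}$, then $L$ is nonempty and contained in a singleton, forcing $L=\{\rho\}$. Hence nothing lies strictly between $\mathbf{0}$ and $\{\rho\}$, which is precisely the defining property of an atom.

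Finally I would establish the bijection. Injectivity of $\rho\mapsto\{\rho\}$ is immediate. For surjectivity, let $L\in\mathcal{L}$ be an atom; then $L\neq\emptyset$, so choose any $\rho\in L$. The singleton $\{\rho\}$ belongs to $\mathcal{L}$ by the first step and satisfies $\mathbf{0}<\{\rho\}\subseteq L$, so the atomicity of $L$ forces $\{\rho\}=L$, exhibiting $L$ as the image of the unique state it contains. The only real subtlety, rather than a genuine obstacle, lies in the first step: one must be careful that the good representative of $\{\rho\}$ is honestly $\mathbb{R}\rho$ and that no larger convex set gets inadvertently collapsed onto this class; once that point is settled, the remainder reduces to elementary order-theoretic manipulations using the fact that the order on $\mathcal{L}$ is set inclusion.
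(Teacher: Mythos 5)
Your proof is correct and takes essentially the same route as the paper: both identify a state $\rho$ with the singleton $\langle\rho\rangle\cap\mathcal{C}=\{\rho\}$, using the fact that $\rho$ is the unique trace-one operator on the line it spans. In fact your version is somewhat more complete, since you explicitly verify surjectivity (every atom of $\mathcal{L}$ is such a singleton), a step the paper's proof leaves implicit.
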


\begin{proof}
Let $\rho\in\mathcal{C}$ be any state. Next, consider the subspace
$S_{\rho}=<\rho>$. Then, $Q_{\rho}=S_{\rho}\cap\mathcal{C}=\rho$ is a singleton because
$\rho$ is the only {\sf trace one} operator in $S_{\rho}$, and thus an  atom since it covers the minimal element  $\rho\in\mathcal{C}$. This
completes the bijective correspondence.
\end{proof}

\noindent The reader is now adviced to peruse Appendix \ref{s:ApendixC} and remember that a {\it face} of a convex set $\mathcal{C}$ is the intersection of $\mathcal{C}$
 with a supporting hyperplane of $\mathcal{C}$. It is well known \cite{bengtssonyczkowski2006} that in
the finite dimensional case there is a lattice isomorphism between
the complemented and complete lattice of faces of the convex set
$\mathcal{C}$ and $\mathcal{L}_{v\mathcal{N}}$ (see Appendix
\ref{s:ApendixB} for definitions). We also repeat that
$\mathcal{L}$ is a set of intersections, Which ones?  Those
between $\mathcal{R}-$subspaces that belong to $G(\mathcal{A})$
and $\mathcal{C}$. Let us put forward the following proposition

\begin{prop}\label{p:FacesAreElements}
Every face of $\mathcal{C}$ is an element of $\mathcal{L}$.
\end{prop}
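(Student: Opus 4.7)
The plan is to exhibit, for each face $F$ of $\mathcal{C}$, a concrete closed $\mathbb{R}$-subspace $S \in G(\mathcal{A})$ such that $S \cap \mathcal{C} = F$. The natural candidate is the ``good representative'' construction already used for elements of $\mathcal{L}$, namely $S := \overline{\langle F \rangle}_{\mathcal{R}}$, the closure of the real linear span of $F$. Since sums and real-scalar multiples of Hermitian operators are Hermitian and closure preserves this property, $S$ is automatically a closed $\mathbb{R}$-subspace of $\mathcal{A}$, hence $S \in G(\mathcal{A})$.

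The inclusion $F \subseteq S \cap \mathcal{C}$ is trivial, so the substantive task is the reverse inclusion. Here I would exploit that by hypothesis $F$ is exposed: there is a continuous $\mathbb{R}$-linear functional $\varphi$ on $\mathcal{A}$ and a constant $c \in \mathbb{R}$ such that
\begin{equation}
F = \{\rho \in \mathcal{C} \,|\, \varphi(\rho) = c\}, \qquad \varphi(\rho) \geq c \text{ for all } \rho \in \mathcal{C}.
\end{equation}
The crucial identity is that, on the real linear span of $F$,
\begin{equation}
\varphi(\tau) = c\cdot \mathrm{tr}(\tau).
\end{equation}
This holds because both sides are $\mathbb{R}$-linear in $\tau$, and whenever $\tau = \sum_i \lambda_i \rho_i$ with $\rho_i \in F$ one has $\varphi(\rho_i)=c$ and $\mathrm{tr}(\rho_i)=1$, so the identity reduces to $c\sum_i\lambda_i = c\sum_i\lambda_i$.

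Next I would extend this identity from $\langle F \rangle_{\mathcal{R}}$ to its closure $S$. Both $\varphi$ and $\mathrm{tr}$ are continuous linear functionals in the topology relevant to $G(\mathcal{A})$ (applied to the appropriate trace-class subset), so the set $\{\tau \in \mathcal{A} \,|\, \varphi(\tau) = c\cdot\mathrm{tr}(\tau)\}$ is closed; since it contains $\langle F \rangle_{\mathcal{R}}$ it also contains $S$. From this, the reverse inclusion is immediate: any $\sigma \in S \cap \mathcal{C}$ satisfies $\mathrm{tr}(\sigma) = 1$, hence $\varphi(\sigma) = c$, which together with $\sigma \in \mathcal{C}$ places $\sigma$ in $F$. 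This proves $S \cap \mathcal{C} = F$ and hence $F \in \mathcal{L}$.

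The main obstacle is the continuity step: one must be sure that the chosen topology on $\mathcal{A}$ is simultaneously fine enough to make the supporting functional $\varphi$ continuous (so that the exposed-face description is available) and coarse enough to keep $\mathrm{tr}$ continuous on the relevant domain. In the finite-dimensional case this is automatic, so the verification is clean; in the infinite-dimensional setting one would want to work in the predual (trace class with the trace norm) where both functionals behave well, or equivalently restrict attention to faces defined by weak-$*$ continuous functionals as used throughout the paper. Once that topological point is settled, the argument above closes the proof.
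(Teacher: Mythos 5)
Your proof is correct and follows essentially the same route as the paper's: both homogenize the supporting functional by comparing it with $c\cdot\mathrm{tr}(\cdot)$, use the fact that this linear condition agrees with the affine one on $\mathcal{C}$ where the trace equals one, and handle continuity by working in the trace-class Banach space. The only (immaterial) difference is that you exhibit the closed real span of $F$ as the representative subspace, whereas the paper takes the full kernel of $\hat{H}_{F}-\alpha\,\mathrm{tr}$; your span is contained in that kernel, so both intersect $\mathcal{C}$ in exactly $F$.
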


\begin{proof}
If $F$ is a face of $\mathcal{C}$, then there exists a closed
hyperplane $H_{F}\subseteq\mathcal{A}$ such that
$H_{F}\cap\mathcal{C}=F$ and $\mathcal{C}$ stands on one side of the
half-spaces defined by $H_{F}$. If $\hat{H}_{F}$ is the continuous
linear functional on $\mathcal{A}$ defined by the real sub-vector
space $H_{F}$ (that is, the map that assigns a scalar to each member
of  $H_{F}$), then there exists $\alpha\in\mathbb{R}$ such that
$$H_{F}=\{x\,|\,\hat{H}_{F}(x)=\alpha\},\quad
\mathcal{C}\subseteq \{x\,|\,\hat{H}_{F}(x)\leq\alpha\}.$$

\noindent Consider the continuous linear functional
\begin{equation}
\hat{L}(x):=\hat{H}_{F}(x)-\alpha\text{tr}(x).
\end{equation}
\noindent In the Banach space of trace class operators
$\mathcal{B}_1$ we have the norm $\| T\|=\text{tr}(|T|)$, then the
linear functional $\text{tr}:\mathcal{B}_1\rightarrow\mathbb{R}$ is
continuous. Given that $\hat{H}_{F}$ is continuous on $\mathcal{A}$
and $\mathcal{B}_1\subseteq \mathcal{A}$ we have that $\hat{L}$ is
continuous on $\mathcal{B}_1$. Its kernel, or null space (the set of
all functions that the functional maps to zero) will be a closed
subspace of $\mathcal{B}_1$ (we call it $S$). Now notice that
$\mathcal{C}\subseteq \mathcal{B}_1$ and

\begin{eqnarray}
&S\cap\mathcal{C}=\{x\,|\,\hat{H}_{F}(x)=\alpha\mbox{tr}(x)\}\cap\mathcal{C}=\{x\,|\,\hat{H}_{F}(x)=\alpha\mbox{tr}(x)\,\mbox{and}\,\mbox{tr}(x)=1\}\cap\mathcal{C}=&\nonumber\\
&\{x\,|\,\hat{H}_{F}(x)=\alpha\}\cap\mathcal{C}=H_{F}\cap\mathcal{C}=F.&
\end{eqnarray}

\noindent We conclude that $F\in\mathcal{L}$.

\end{proof}

\noindent Using this result and the isomorphism between faces of the
convex set and subspaces, we conclude that (at least for the finite
dimensional case)

\begin{coro}
The complete lattice of faces of the convex set $\mathcal{C}$ is
essentially a subposet of $\mathcal{L}$.
\end{coro}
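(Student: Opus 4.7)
The plan is to leverage Proposition \ref{p:FacesAreElements} together with the known isomorphism (in finite dimensions) between the lattice of faces of $\mathcal{C}$ and $\mathcal{L}_{v\mathcal{N}}$ mentioned just before the proposition. Since the claim is only about the poset structure, the work reduces to producing an order-preserving injection from the face lattice $\mathcal{F}(\mathcal{C})$ into $\mathcal{L}$.

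First, I would define the map $\Phi : \mathcal{F}(\mathcal{C}) \to \mathcal{L}$ by $\Phi(F) = F$, which makes sense by Proposition \ref{p:FacesAreElements}: each face $F$ is already an element of $\mathcal{L}$, so $\Phi$ is set-theoretically the identity. Injectivity is trivial since two distinct faces are distinct subsets of $\mathcal{C}$. Next I would verify that $\Phi$ preserves the order. The face lattice is ordered by inclusion, and the order in $\mathcal{L}$ is defined in equation (\ref{fouroper}) by
\begin{equation}
(S\cap\mathcal{C}) \longrightarrow (T\cap\mathcal{C}) \Longleftrightarrow (S\cap\mathcal{C}) \subseteq (T\cap\mathcal{C}),
\end{equation}
which is again inclusion. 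Hence $F_1 \subseteq F_2 \iff \Phi(F_1) \longrightarrow \Phi(F_2)$, so $\Phi$ is an order embedding and its image is a subposet of $\mathcal{L}$.

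The remaining issue is to justify the word ``essentially''. What is in fact being asserted is that the image $\Phi(\mathcal{F}(\mathcal{C}))$ is order-isomorphic, as a poset, to $\mathcal{F}(\mathcal{C})$; this is immediate from injectivity and order preservation of $\Phi$. To make contact with the earlier remark about completeness, I would note that the face lattice is complete and that arbitrary intersections of faces are faces, so that meets in $\mathcal{F}(\mathcal{C})$ coincide with meets computed in $\mathcal{L}$ via the $\wedge$ defined above (the closed linear span of a face intersected with $\mathcal{C}$ simply recovers the face, since faces are already of the form $H_F \cap \mathcal{C}$ with $H_F$ closed and linear up to an affine shift handled exactly as in the proof of Proposition \ref{p:FacesAreElements}).

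The main obstacle I anticipate is not the order-embedding itself, which is almost tautological, but clarifying the precise sense of ``subposet''—in particular, whether one also wants the join $\vee$ of $\mathcal{L}$ to restrict correctly to the join of faces. Since in infinite dimensions the lattice of faces is more delicate (closures in the trace-norm topology must be taken, as discussed for $\mathcal{S}(\mathcal{H})$ earlier), I would be careful to state the corollary only at the level of posets, and flag that the compatibility of $\vee$ with the face-theoretic join is a stronger claim that would need the finite dimensional hypothesis invoked in the isomorphism with $\mathcal{L}_{v\mathcal{N}}$.
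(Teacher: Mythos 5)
Your argument is correct and follows the same route the paper intends: the corollary is stated as an immediate consequence of Proposition \ref{p:FacesAreElements} together with the observation that both the face lattice and $\mathcal{L}$ are ordered by set-theoretic inclusion, which is exactly your identity embedding $\Phi$. Your additional remarks on meets, joins, and the finite-dimensional caveat go beyond what the paper records but are consistent with its later discussion in Section \ref{s:The Relationship}.
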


\noindent The previous Corollary shows that $\mathcal{L}$ and
$\mathcal{L}_{v\mathcal{N}}$ are closely connected. We have stated
that any convex subset of $\mathcal{C}$ can be considered as a
probability space by itself, and certainly, the elements of
$\mathcal{L}$ are convex sets, because they are built as the
intersection of a closed subspace of $\mathcal{A}$ and a convex set
($\mathcal{C}$). Thus, we have found not only that closed subspaces
of $\mathcal{H}$ may be considered as yes-no tests (via their one to
one correspondence with projection operators), but also that they
may be considered as probability spaces (endowed with ``mixing
principle" mentioned above), because of their one to one
correspondence with the elements of $\mathcal{L}$. \vskip 3mm
\noindent The crucial question now is: what is the relationship
between their respective operations? If $F_{1}$ and $F_{2}$ are
faces we have

\begin{enumerate}

\item[($\wedge$)]
$F_1,F_2\in\mathcal{L}_{v\mathcal{N}}$, then $F_1\wedge F_2$ in
$\mathcal{L}_{v\mathcal{N}}$ is the same as in $\mathcal{L}$.
Thus, the inclusion
$\mathcal{L}_{v\mathcal{N}}\subseteq\mathcal{L}$ preserves the
$\wedge$-operation.

\item[($\vee$)]
$F_1\vee_{\mathcal{L}} F_2\leq
F_1\vee_{\mathcal{L}_{v\mathcal{N}}} F_2$ and $F_1\leq
F_2\Rightarrow F_1\vee_{\mathcal{L}}
F_2=F_1\vee_{\mathcal{L}_{v\mathcal{N}}} F_2=F_2$

\item[$(\neg$)]
$\neg_{\mathcal{L}}F\leq \neg_{\mathcal{L}_{v\mathcal{N}}}F$
\end{enumerate}

\nd  Given two systems with Hilbert spaces $\mathcal{H}_{1}$ and
$\mathcal{H}_{2}$, we can construct the lattices $\mathcal{L}_{1}$
and $\mathcal{L}_{2}$. We can also built up $\mathcal{L}$, the
lattice associated to the product space
$\mathcal{H}_{1}\otimes\mathcal{H}_{2}$. We define

\begin{equation}
\Psi:\mathcal{L}_{1}\times\mathcal{L}_{2}\longrightarrow\mathcal{L}\quad|\quad(S_{1}\cap\mathcal{C}_{1},S_{2}\cap\mathcal{C}_{2})\longrightarrow
S\cap\mathcal{C}
\end{equation}
\noindent where
$S=(\overline{<S_{1}\cap\mathcal{C}_{1}>}\otimes\overline{<S_{2}\cap\mathcal{C}_{2}>})$.
In terms of good representatives, $\Psi([S_1],[S_2])=[S_1\otimes
S_2]$. An equivalent definition (in the finite dimensional case)
states that $\Psi$ is the induced morphism in the quotient
lattices (See Appendix \ref{s:ApendixA}) of the tensor map:

\begin{equation}
G(\mathcal{A}_1)\times G(\mathcal{A}_2)\rightarrow
G(\mathcal{A}_1\otimes_{\mathcal{R}} \mathcal{A}_2)\cong
G(\mathcal{A})
\end{equation}

\noindent Given $L_{1}\in\mathcal{L}_{1}$ and
$L_{2}\in\mathcal{L}_{2}$, we can define the following convex tensor
product:

\begin{definition}\label{d:convex tensor product}
$L_{1}\widetilde{\otimes}\,L_{2}:=\{\sum\lambda_{ij}\rho_{i}^{1}\otimes\rho_{j}^{2}\,|\,\rho_{i}^{1}
\in L_{1},\,\,\rho_{j}^{2}\in L_{2},\,\, \sum\lambda_{ij}=1
\,\,\mbox{and} \,\,\lambda_{ij}\geq 0\}$
\end{definition}

\noindent This product is formed by all possible convex
combinations of tensor products of elements of $L_{1}$ and
elements of $L_{2}$, and it is again a convex set. Let us compute
$\mathcal{C}_{1}\widetilde{\otimes}\,\mathcal{C}_{2}$. Remember
that $\mathcal{C}_{1}=[\mathcal{A}_{1}]\in\mathcal{L}_{1}$ and
$\mathcal{C}_{2}=[\mathcal{A}_{2}]\in\mathcal{L}_{2}$:

\begin{equation}
\mathcal{C}_{1}\widetilde{\otimes}\,\mathcal{C}_{2}=
\{\sum\lambda_{ij}\rho_{i}^{1}\otimes\rho_{j}^{2}\,|\,\rho_{i}^{1}
\in \mathcal{C}_{1},\,\,\rho_{j}^{2}\in \mathcal{C}_{2},\,\,
\sum\lambda_{ij}=1 \,\,\mbox{and} \,\,\lambda_{ij}\geq 0\}.
\end{equation}
\noindent Thus, if $\mathcal{S(\mathcal{H})}$ is the set of all
separable states, we have by definition

\begin{equation}\label{e:separablestates}
\mathcal{S}(\mathcal{H})=\overline{\mathcal{C}_{1}\widetilde{\otimes}\,\mathcal{C}_{2}}
\end{equation}
If the whole system is in a state $\rho$, using partial traces we
can define states for the subsystems $\rho_{1}=tr_{1}(\rho)$ and a
similar definition for $\rho_{2}$. Then, we can consider the maps

\begin{equation}
\mbox{tr}_{i}:\mathcal{C}\longrightarrow \mathcal{C}_{j}
\quad|\quad \rho\longrightarrow \mbox{tr}_{i}(\rho)
\end{equation}
\noindent from which we can construct the induced projections:

\begin{equation}
\tau_{i}:\mathcal{L}\longrightarrow \mathcal{L}_{i} \quad|\quad
S\cap\mathcal{C}\longrightarrow \mbox{tr}_{i}(
\overline{<S\cap\mathcal{C}>} )\cap \mathcal{C}_i=
\mbox{tr}_{i}( \overline{<S\cap\mathcal{C}>} )\cap \mathcal{C}_i
\end{equation}
In terms of good representatives $\tau_i([S])=[\mbox{tr}_i(S)]$.
As a consequence,  we can define the product map

\begin{equation}
\tau:\mathcal{L}\longrightarrow\mathcal{L}_{1}\times\mathcal{L}_{2}
\quad|\quad L\longrightarrow(\tau_{1}(L),\tau_{2}(L))
\end{equation}
\noindent The maps defined in this section are illustrated in
Figure \ref{f:maps1}.

\begin{figure}\label{f:maps1}
\begin{center}
\unitlength=1mm
\begin{picture}(5,5)(0,0)
\put(-3,23){\vector(-1,-1){20}} \put(3,23){\vector(1,-1){20}}
\put(-2,4){\vector(0,2){16}} \put(2,20){\vector(0,-2){16}}
\put(8,0){\vector(3,0){15}} \put(-8,0){\vector(-3,0){15}}
\put(0,25){\makebox(0,0){$\mathcal{L}$}}
\put(-27,0){\makebox(0,0){${\mathcal{L}_{1}}$}}
\put(27,0){\makebox(0,0){${\mathcal{L}_{2}}$}}
\put(0,0){\makebox(0,0){${\mathcal{L}_{1}\times\mathcal{L}_{2}}$}}
\put(-1,11){\makebox(-10,0){$\psi$}}
\put(-1,11){\makebox(13,0){$\tau$}}
\put(-15,16){\makebox(0,0){$\tau_{1}$}}
\put(15,16){\makebox(0,0){$\tau_{2}$}}
\put(-15,2){\makebox(0,0){$\pi_{1}$}}
\put(15,2){\makebox(0,0){$\pi_{2}$}}
\end{picture}
\caption{The different maps between $\mathcal{L}_{1}$,
$\mathcal{L}_{2}$, ${\mathcal{L}_{1}\times\mathcal{L}_{2}}$, and
$\mathcal{L}$. $\pi_{1}$ and $\pi_{2}$ are the canonical
projections.}
\end{center}
\end{figure}
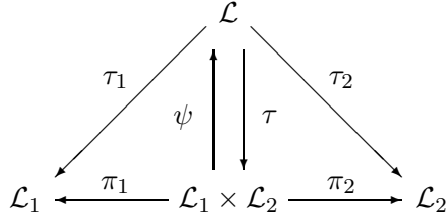

\section{The Lattice of Convex Subsets}\label{s:Convex lattice}

The elements of $\mathcal{L}$ are the intersections between closed
(real) subspaces of $\mathcal{A}$ and $\mathcal{C}$. Now we drop
the restriction and consider {\it all} possible probability
subspaces of $\mathcal{C}$, i.e., all of their possible convex
subsets. Because of linearity, partial trace operators preserve
convexity and so they will map probability spaces of the system
into probability spaces of the subsystems, as desired. Let us
begin  then by considering the set of all convex subsets of
$\mathcal{C}$ (\underline{not only closed ones}):

\begin{definition}
\noindent $\mathcal{L}_{\mathcal{C}}:=\{C\subseteq\mathcal{C}\,|\,
\mbox{C is a convex subset of} \,\,\,\mathcal{C}\}$
\end{definition}

\nd In order to give $\mathcal{L}_{\mathcal{C}}$ a lattice
structure, we introduce the following operations (where $conv(A)$
stands for convex hull of a given set $A$):

\begin{definition}\label{definitionlattice}

For all $C,C_1,C_2\in\mathcal{L}_{\mathcal{C}}$
\begin{enumerate}

\item[$\wedge$]:\,\,
$C_1\wedge C_2:= C_1\cap C_2$

\item[$\vee$]:\,\,
$C_1\vee C_2:=conv(C_1,C_2)$. It is again a convex set, and it is
included in $\mathcal{C}$ (using convexity).

\item[$\neg$]:\,\,
$\neg C:=C^{\perp}\cap\mathcal{C}$

\item[$\longrightarrow$]:\,\,
$C_1\longrightarrow C_2:= C_1\subseteq C_2$

\end{enumerate}

\end{definition}
\noindent With the operations of definition
\ref{definitionlattice}, it is apparent that
$(\mathcal{L}_{\mathcal{C}};\longrightarrow)$ is a poset. If we
set $\emptyset=\mathbf{0}$ and $\mathcal{C}=\mathbf{1}$, then,
$(\mathcal{L}_{\mathcal{C}};\longrightarrow;\mathbf{0};\emptyset=\mathbf{0})$
will be a bounded poset.

\nd It is very easy to show that

\begin{prop}
$(\mathcal{L}_{\mathcal{C}};\longrightarrow;\wedge;\vee)$
satisfies

\begin{enumerate}

\item[$(a)$]
$C_1\wedge C_1= C_1$

\item[$(b)$]
$C_1\wedge C_2=C_2\wedge C_1$

\item[$(c)$]
$C_1\vee C_2=C_2\vee C_1$

\item[$(d)$]
$C_1\wedge (C_2\wedge C_3)=(C_1\wedge C_2)\wedge C_3$

\item[$(e)$]
$C_1\vee (C_2\vee C_3)=(C_1\vee C_2)\vee C_3$

\item[$(f)$]
$C_1\wedge (C_1\vee C_2)=C_1$

\item[$(g)$]
$C_1\vee (C_1\wedge C_2)=C_1$
\end{enumerate}

\end{prop}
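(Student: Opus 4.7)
My proposal is that the proof is a direct verification using only two ingredients: (i) the fact that set-theoretic intersection is idempotent, commutative, and associative, and (ii) the fact that for any family of subsets $\{A_i\}$, the convex hull satisfies $\mathrm{conv}(\bigcup_i A_i)$ is the smallest convex set containing every $A_i$, with $\mathrm{conv}(A) = A$ whenever $A$ is already convex. No quantum mechanics or lattice theory beyond Definition \ref{definitionlattice} is needed; everything is elementary convex geometry in $\mathcal{A}$.

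Items $(a)$, $(b)$, and $(d)$ are immediate, because $\wedge$ is literally $\cap$ and intersection of sets is idempotent, commutative, and associative. For $(c)$, commutativity of $\vee$, I would simply observe that $\mathrm{conv}(C_1,C_2)$ is by definition the convex hull of the union $C_1 \cup C_2$, and union is symmetric in its arguments. For associativity of $\vee$, item $(e)$, the plan is to prove both sides equal $\mathrm{conv}(C_1 \cup C_2 \cup C_3)$: since $C_2 \cup C_3 \subseteq \mathrm{conv}(C_2 \cup C_3)$, we get $C_1 \cup C_2 \cup C_3 \subseteq C_1 \cup \mathrm{conv}(C_2 \cup C_3)$, and taking convex hulls yields one inclusion; the reverse inclusion follows because $\mathrm{conv}(C_1 \cup C_2 \cup C_3)$ is itself a convex set containing $C_1$ and $\mathrm{conv}(C_2 \cup C_3)$ (the latter by minimality of the convex hull), hence contains their convex hull. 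The same minimality/idempotency argument, applied symmetrically, handles $(C_1 \vee C_2) \vee C_3$.

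For the two absorption laws, $(f)$ and $(g)$, the key observation is that $C_1 \subseteq C_1 \vee C_2$ (since $C_1 \subseteq C_1 \cup C_2 \subseteq \mathrm{conv}(C_1 \cup C_2)$) and $C_1 \wedge C_2 = C_1 \cap C_2 \subseteq C_1$. From the first, $C_1 \wedge (C_1 \vee C_2) = C_1 \cap (C_1 \vee C_2) = C_1$, giving $(f)$. From the second, $C_1 \cup (C_1 \wedge C_2) = C_1$, which is already convex by hypothesis, so $C_1 \vee (C_1 \wedge C_2) = \mathrm{conv}(C_1) = C_1$, giving $(g)$.

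I do not foresee a genuine obstacle. The only subtlety worth pausing over is ensuring that the convex hull in $(e)$ and $(g)$ is being manipulated correctly, in particular invoking the fact that $\mathrm{conv}$ is a closure operator (extensive, monotone, idempotent) on the power set of $\mathcal{A}$; with that in hand every identity reduces to a one-line inclusion argument. No topological closure enters here because $\mathcal{L}_{\mathcal{C}}$ is defined without requiring its elements to be closed, which keeps the algebra clean.
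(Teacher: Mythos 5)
Your proof is correct, and it is precisely the routine verification the paper has in mind: the authors state only that the proposition ``is very easy to show'' and omit the argument entirely, so your direct check from Definition \ref{definitionlattice} --- idempotence/commutativity/associativity of $\cap$ for $(a)$, $(b)$, $(d)$, the closure-operator properties of $conv$ for $(c)$, $(e)$, and the inclusions $C_1\subseteq C_1\vee C_2$ and $C_1\wedge C_2\subseteq C_1$ for the absorption laws $(f)$, $(g)$ --- supplies exactly the missing details. No discrepancy with the paper's approach arises, since the paper gives none to compare against.
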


\nd Regarding the ``$\neg$" operation, if $C_1\subseteq  C_2$,
then $C_2^{\perp}\subseteq  C_1^{\perp}$. Accordingly,
$C_2^{\perp}\cap \mathcal{C}\subseteq C_1^{\perp}\cap\mathcal{C}$,
and hence

\begin{equation}
C_1\longrightarrow  C_2\Longrightarrow \neg C_2\longrightarrow
\neg C_1
\end{equation}
\noindent Given that $C\cap(C^{\perp}\cap\mathcal{C})=\emptyset$,
we also have:

\begin{equation}
C\wedge(\neg C)=\mathbf{0}
\end{equation}
\noindent Contraposition and non contradiction thus hold. But if we
take the proposition $C=\{\frac{1}{N}\mathrm{1}\}$, then an easy
calculation yields $\neg C=\mathbf{0}$. Then, $\neg(\neg
C)=\mathbf{1}$, and thus $\neg(\neg C)\neq C$ in general. Double
negation does not hold. Consequently, $\mathcal{L}_{\mathcal{C}}$ is
not an ortholattice. $\mathcal{L}_{\mathcal{C}}$ is a lattice which
includes all convex subsets of the quantum space of states. It
includes $\mathcal{L}$, and then all quantum states (including all
improper mixtures), {\it as propositions} (understood as elements of
the lattice). Compare with classical physics, where the lattice of
propositions is formed by all measurable subsets of Gibbs'
phase-space (the space of states).

\noindent As any convex subset of $\mathcal{C}$  is a probability
space by itself, we can define on each of these subsets a lattice
structure in  analogous way to that  used for
$\mathcal{L}_{\mathcal{C}}$. In other words, we encounter an
inheritance of the same structure.

\subsection{Effects, mean values and maximum entropy
principle}\label{s:Max-Ent}

\nd Let us discuss, as an example, the family of elements of
$\mathcal{L}_{\mathcal{C}}$ that is associated to effects. Given
an effect $E$, consider the set of states

\begin{equation}
C_{(E,\lambda)}:=\{\rho\in \mathcal{C}\,|\,\mbox{tr}(\rho
E)=\lambda,\,\,\lambda\in[0,1]\}.
\end{equation}
\noindent It is easy to verify that $C_{E}$ is a convex set and,
consequently,  an element of $\mathcal{L}_{\mathcal{C}}$.
$C_{(E,\lambda)}$ represents all the states for which the
probability of having the effect $E$ is equal to $\lambda$.
Furthermore, there exists $S$, an $\mathcal{R}$-subspace of
$\mathcal{A}$, such that

\begin{equation}\label{e:effectconvex}
C_{(E,\lambda)}=S\cap\mathcal{C},
\end{equation}
\noindent and thus, $C_{E}$ is also an element of $\mathcal{L}$. The
proof of (\ref{e:effectconvex}) is as follows. Consider the linear
functional

\begin{equation}
F_{(E,\lambda)}(\rho):=\mbox{tr}(E\rho)-\lambda\mbox{tr}(\rho),
\end{equation}

\noindent As in the proof of Proposition \ref{p:FacesAreElements},
we will have that the set $S=Ker(F_{(E,\lambda)})$ is a closed
subspace of $\mathcal{B}_1$. Then, an element $\rho\in
C_{(E,\lambda)}$ will be given by an element of $\mathcal{C}$ of
trace one, and thus
$F_{(E,\lambda)}(\rho)=\mbox{tr}(E\rho)-\lambda$, plus the equality
to $\lambda$ requirement imply that $\rho$ also belongs to $S$. We
have thus demonstrated Eq. (\ref{e:effectconvex}). More generally,
if we have the equation for the mean value of an operator

\begin{equation}\label{e:setofmatrixes}
\langle R\rangle=r,
\end{equation}

\noindent the operator may be considered as represented by the set
of density matrices which satisfy the above equality. The ensuing
set is obtained as the intersection of the Kernel of the functional
$F_{R}(\rho):=\mbox{tr}(R\rho)-r\mbox{tr}(\rho)$ with $\mathcal{C}$.
Accordingly, each equation of the form (\ref{e:setofmatrixes})
(understood as an equation to be solved) can be represented as an
element $C\in\mathcal{L}$, and also as an element of
$\mathcal{L}_{\mathcal{C}}$. Note that $\mathcal{C}$ is also a
closed convex set because it is the intersection of the following
closed half-spaces
$$H_x=\{\rho\in\mathcal{A}\,|\,x^\dag\rho x=0\},\quad
\mathcal{C}=\bigcap_{x\in\mathcal{H}}\{\rho \,|\,x^\dag\rho
x\geq0\}\cap\{\rho\,|\,\text{tr}(\rho)=1\}.$$

\noindent With such materials at hand, we can now re-express the
celebrated Jaynes' maximum entropy principle
\cite{Jaynes-1957a,Jaynes-1957b} in  lattice theoretical fashion.
Jaynes assumes that for a quantum system one knows the mean values

\begin{eqnarray}\label{e:conditionsmean}
&\langle R_{1}\rangle=r_{1}&\nonumber\\
&\langle R_{2}\rangle=r_{2}&\nonumber\\
&\vdots&\nonumber\\
&\langle R_{n}\rangle=r_{n},
\end{eqnarray}

\noindent and we want to determine the most unbiased density matrix
compatible with conditions (\ref{e:conditionsmean}). Jaynes' MaxEnt
principle asserts that the solution to this problem is given by the
density matrix that maximizes entropy, given by

\begin{equation}
\rho_{max-ent}=\exp^{-\lambda_{0}\mathbf{1}-\lambda_{1}R_{1}-\cdots-\lambda_{n}R_{n}},
\end{equation}
\noindent where the $\lambda$'s are Lagrange multipliers
satisfying

\begin{equation}
r_{i}=-\frac{\partial}{\partial\lambda_{i}}\ln Z,
\end{equation}
\noindent while the partition function reads

\begin{equation}
Z(\lambda_{1}\cdots\lambda_{n})=\mbox{tr}[\exp^{-\lambda_{1}R_{1}-\cdots-\lambda_{n}R_{n}}],
\end{equation}
\noindent and the normalization condition is

\begin{equation}
\lambda_{0}=\ln Z.
\end{equation}
Our point here is that the set of conditions
(\ref{e:conditionsmean}) can be expressed in an explicit lattice
theoretical form as follows. Using a similar procedure as in
(\ref{e:setofmatrixes}) we easily conclude that each of the
equations in (\ref{e:conditionsmean}) can be represented as a convex
(and closed) sets $C_{R_{i}}$. In this way we can now express
conditions (\ref{e:conditionsmean}) via the lattice theoretical
expression

\begin{equation}
C_{max-ent}:=\bigcap_{i}C_{R_{i}}=\bigwedge_{i}C_{R_{i}}.
\end{equation}

\nd Now, $C_{max-ent}$ is also an element of
$\mathcal{L}_{\mathcal{C}}$ (but not necessarily of $\mathcal{L}$)
and we must maximize entropy in it. We have thus encountered a
MaxEnt-lattice theoretical expression: {\it given a set of
conditions represented generally by convex subsets $C_{i}$, one
should maximize the entropy in the set
$C_{max-ent}=\bigwedge_{i}C_{i}$.}

\subsection{The Relationship Between $\mathcal{L}_{v\mathcal{N}}$, $\mathcal{L}$ and
$\mathcal{L}_{\mathcal{C}}$}\label{s:The Relationship}

\nd Please, refer here to Appendix \ref{s:ApendixB}.

\begin{prop}\label{p:Inclusion of Lvn}
For finite dimension
$\mathcal{L}_{v\mathcal{N}}\subseteq\mathcal{L}\subseteq\mathcal{L}_{\mathcal{C}}$
as posets.
\end{prop}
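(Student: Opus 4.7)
The plan is to prove the two inclusions separately, treating each as a poset embedding with respect to set-theoretic inclusion, which is the common order on all three objects.

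For the second inclusion $\mathcal{L}\subseteq\mathcal{L}_{\mathcal{C}}$, I would observe that any element of $\mathcal{L}$ has the form $S\cap\mathcal{C}$ with $S$ a closed $\mathbb{R}$-subspace of $\mathcal{A}$. Since $S$ is convex (every vector subspace is convex) and $\mathcal{C}$ is convex, the intersection $S\cap\mathcal{C}$ is convex and contained in $\mathcal{C}$, hence belongs to $\mathcal{L}_{\mathcal{C}}$. Both posets are ordered by $\subseteq$, so the set inclusion $\mathcal{L}\subseteq\mathcal{L}_{\mathcal{C}}$ is automatically a poset inclusion.

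For the first inclusion $\mathcal{L}_{v\mathcal{N}}\subseteq\mathcal{L}$, I would invoke the identification already discussed in the excerpt: in finite dimension there is a lattice isomorphism $P\mapsto F_P$ between the von Neumann lattice $\mathcal{L}_{v\mathcal{N}}$ and the complete lattice of faces of the convex body $\mathcal{C}$ (cited from \cite{bengtssonyczkowski2006}). Composing this isomorphism with Proposition~\ref{p:FacesAreElements}, which asserts that every face of $\mathcal{C}$ is an element of $\mathcal{L}$, identifies each projector $P\in\mathcal{L}_{v\mathcal{N}}$ with the element $F_P\in\mathcal{L}$. This identification is exactly the content of the Corollary immediately preceding the proposition.

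It remains to check that this identification is order preserving. The face map $P\mapsto F_P$ satisfies $P\leq Q$ in $\mathcal{L}_{v\mathcal{N}}$ iff $F_P\subseteq F_Q$, because $P\leq Q$ means $\mathrm{ran}(P)\subseteq\mathrm{ran}(Q)$, and every density matrix supported on $\mathrm{ran}(P)$ is supported on $\mathrm{ran}(Q)$. Since $\mathcal{L}$ is ordered by $\subseteq$ as well, the embedding is a poset morphism. The only delicate point, and where I expect the main obstacle, is being careful about what ``$\subseteq$'' means on each side of the first inclusion: on the left it is ordinarily phrased in terms of subspaces of $\mathcal{H}$, while on the right it is inclusion of subsets of $\mathcal{C}$. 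Once the face identification is in place, however, this subtlety disappears because $P\leq Q\Leftrightarrow F_P\subseteq F_Q$, which is precisely the arrow $\longrightarrow$ defined in equation~(\ref{fouroper}) of the excerpt. Finite dimensionality enters only in guaranteeing the face/projection isomorphism; the inclusion $\mathcal{L}\subseteq\mathcal{L}_{\mathcal{C}}$ is valid in arbitrary dimension.
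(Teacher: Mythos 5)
Your proof is correct and follows essentially the same route as the paper: the inclusion $\mathcal{L}\subseteq\mathcal{L}_{\mathcal{C}}$ via convexity of $S\cap\mathcal{C}$, and the inclusion $\mathcal{L}_{v\mathcal{N}}\subseteq\mathcal{L}$ via the face/projector isomorphism together with Proposition~\ref{p:FacesAreElements}, with both orders being set-theoretic inclusion. If anything, you are slightly more careful than the paper in spelling out why $P\leq Q\Leftrightarrow F_P\subseteq F_Q$, a point the paper's proof passes over by simply asserting that both orders are inclusions.
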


\begin{proof}
We have already seen that
$\mathcal{L}_{v\mathcal{N}}\subseteq\mathcal{L}$ as sets. Moreover
it is easy to see that if $F_1\leq F_2$ in
$\mathcal{L}_{v\mathcal{N}}$ then $F_1\leq F_2$ in $\mathcal{L}$.
This is so because both orders are set theoretical inclusions.
Similarly, if $L_{1},L_{2}\in\mathcal{L}$, because intersection of
convex sets yields a convex set (and closed subspaces are convex
sets also), $L_{1},L_{2}\in\mathcal{L}_{\mathcal{C}}$, then we
obtain set theoretical inclusion. And, again, because of both
orders are set theoretical inclusions, we obtain that they are
included as posets.
\end{proof}

\nd Regarding the $\vee$ operation, let us compare
$\vee_{\mathcal{L}_{v\mathcal{N}}}$, $\vee_{\mathcal{L}}$ and
$\vee_{\mathcal{L}_{C}}$. If $L_{1},L_{2}\in\mathcal{L}$, then they
are convex sets and so, $L_{1},L_{2}\in\mathcal{L}_{\mathcal{C}}$.
Thus we can compute

\begin{equation}
L_{1}\vee_{\mathcal{L}_{C}}L_{2}=conv(L_{1},L_{2}).
\end{equation}

\noindent On the other hand (if $S_{1}$ and $S_{2}$ are good
representatives for $L_{1}$ and $L_{2}$), then:

\begin{equation}
L_{1}\vee_{\mathcal{L}}L_{2}=(\overline{<S_{1}\cap\mathcal{C}>}+\overline{<S_{2}\cap\mathcal{C}>})\cap\mathcal{C}.
\end{equation}

\noindent The direct sum of the subspaces
$\overline{<S_{1}\cap\mathcal{C}>}$ and
$\overline{<S_{2}\cap\mathcal{C}>}$ contains as a particular case
all convex combinations of elements of $L_{1}$ and $L_{2}$. We then
conclude (including infinite dimensional case)

\begin{equation}
L_{1}\vee_{\mathcal{L}_{C}}L_{2}\leq L_{1}\vee_{\mathcal{L}}L_{2}.
\end{equation}

\nd The faces of $\mathcal{C}$ can be considered as elements of
$\mathcal{L}_{C}$ because they are convex. If $F_{1}$ and $F_{2}$
are faces we can also state (finite dimension)

\begin{equation}
F_{1}\vee_{\mathcal{L}_{C}}F_{2}\leq
F_{1}\vee_{\mathcal{L}}F_{2}\leq
F_{1}\vee_{\mathcal{L}_{v\mathcal{N}}}F_{2}.
\end{equation}

\nd Intersection of convex sets is the same as intersection of
elements of $\mathcal{L}$ and so we have (infinite dimension)

\begin{equation}
L_{1}\wedge_{\mathcal{L}_{C}}L_{2}=
L_{1}\wedge_{\mathcal{L}}L_{2},
\end{equation}

\noindent and similarly (finite dimension)

\begin{equation}
F_{1}\wedge_{\mathcal{L}_{v\mathcal{N}}}F_{2}=
F_{1}\wedge_{\mathcal{L}_{C}}F_{2}= F_{1}\wedge_{\mathcal{L}}F_{2}.
\end{equation}

\nd What is the relationship between
$\neg_{\mathcal{L}_{\mathcal{C}}}$ and $\neg_{\mathcal{L}}$?
Suppose that $L_{1}\in\mathcal{L}$, then they are convex sets as
well and  $L_{1}\in\mathcal{L}_{\mathcal{C}}$. We can now compute
$\neg_{\mathcal{L}_{\mathcal{C}}}L_{1}$ and obtain

\begin{equation}
\neg_{\mathcal{L}_{\mathcal{C}}}L_{1}=L_{1}^{\perp}\cap\mathcal{C}.
\end{equation}

\noindent On the other hand, if $L_{1}=S\cap\mathcal{C}$, with $S$
a good representative,

\begin{equation}
\neg_{\mathcal{L}}L_{1}=<S\cap\mathcal{C}>^{\perp}\cap\mathcal{C}.
\end{equation}

\noindent As $L_{1}\subseteq <S\cap\mathcal{C}>$, then
$<S\cap\mathcal{C}>^{\perp}\subseteq L_{1}^{\perp}$, and finally
(infinite dimension)

\begin{equation}
\neg_{\mathcal{L}}L_{1}\leq\neg_{\mathcal{L}_{\mathcal{C}}}L_{1}.
\end{equation}

\section{Identifying entanglement in the finite dimensional
case}\label{s:EntanglementWittness}

\noindent After introducing the new lattice theoretical frameworks,
it is interesting to look back again to the finite dimensional case,
condition that will we assume through this section. In doing so, we
will find a separability criteria. We will include in this section
proofs of various well established facts because in doing
so, we illuminate the lattice structure behind those facts which
allow us to deduce our novel separability criteria.
In future works we will use these proofs to compute effectively
\emph{the volume of the space of separable states}. Let
$\mathcal{S}_{0}(\mathcal{H})$ denote the space of product states

\begin{equation}
\mathcal{S}_{0}(\mathcal{H})=\{a\otimes
b\,|\,a\in\mathcal{C}_1,\,b\in\mathcal{C}_2\} \subseteq
S(\mathcal{H})
\end{equation}

\noindent By definition, $S(\mathcal{H})$ is the convex hull of
$\mathcal{S}_0(\mathcal{H})$. Using a well known fact about the
theory of convex sets, we know that $S(\mathcal{H})$ is the
intersection of all the closed half-hyperplanes containing
$\mathcal{S}_{0}(\mathcal{H})$ \cite[11.5.1]{rocka}. For any
functional $\ell$ representing a half-plane and a real number $m$,
we have

\begin{equation}
\mathcal{S}_{0}(\mathcal{H})\subseteq\{\ell\geq m\}\iff
\ell(\mathcal{S}_{0}(\mathcal{H}))\subseteq [m,+\infty)\iff
\ell(\mathcal{S}_{0}(\mathcal{H}))\text{ has a minimum value }m.
\end{equation}

\begin{rema}
We  work with i) $\mathcal{B}(\mathcal{H})$ and ii) its isomorphic
space $\mathcal{B}(\mathcal{H}_1)\otimes
\mathcal{B}(\mathcal{H}_2)$. The composition for
$\rho_1,\rho_2\in\mathcal{B}(\mathcal{H})$ is $\rho_1\rho_2,$ and
for $a\otimes b,c\otimes d\in\mathcal{B}(\mathcal{H}_1)\otimes
\mathcal{B}(\mathcal{H}_2),$ is $ac\otimes bd$. The induced trace
in $\mathcal{B}(\mathcal{H}_1)\otimes \mathcal{B}(\mathcal{H}_2)$
is $\text{tr}(a\otimes b)=\text{tr}(a)\text{tr}(b)$. In
$\mathcal{B}(\mathcal{H})$ we have a canonical inner product,
$$\langle \rho_1,\rho_2\rangle=\text{tr}(\rho_1\rho_2^\dag)$$
In $\mathcal{B}(\mathcal{H}_1)\otimes \mathcal{B}(\mathcal{H}_2)$
we have the induced inner product,
$$\langle a\otimes b,c\otimes d\rangle=\text{tr}(ac^\dag)\text{tr}(bd^\dag)$$
In particular, in
$\mathcal{A}(\mathcal{H}_1)\otimes\mathcal{A}(\mathcal{H}_2)$ one
has

\begin{equation}
\langle a\otimes b,c\otimes d\rangle=\text{tr}(ac)\text{tr}(bd).
\end{equation}
\nd This inner product gives to $\mathcal{A}(\mathcal{H})$ the
structure of an $\mathcal{R}$-vector space with an inner product.
Hence, any linear functional $\ell$ is associated to a vector
$\rho\in\mathcal{A}(\mathcal{H})$ such that
$\ell=\langle\rho,-\rangle$.
\end{rema}

\noindent It is important to notice here that any functional of
the form $\ell=\langle\rho,-\rangle$ defines by varying
$m\in\mathcal{R}$ a family of elements of $\mathcal{L}_{C}$ and
$\mathcal{L}$ by considering the families of convex subsets
$\{\ell\leq m\}\cap\mathcal{C}$ and $\{\ell= m\}\cap\mathcal{C}$
respectively. It is possible to show that

\begin{prop}
The space of product states $\mathcal{S}_{0}(\mathcal{H})$ is
compact. In particular, every linear functional of the form
$\ell_\rho$ on $\mathcal{S}_{0}(\mathcal{H})$ has a maximum $M_\rho$
and a minimum $m_\rho$ values.
\end{prop}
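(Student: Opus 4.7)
The plan is to exhibit $\mathcal{S}_0(\mathcal{H})$ as the continuous image of a compact set and then invoke the standard extreme value theorem for continuous functions on compact sets. Since we are in the finite dimensional case, all the topologies on $\mathcal{A}(\mathcal{H}_i)$ coincide, so I can freely use whichever norm is most convenient (trace norm, Hilbert--Schmidt norm, operator norm).

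First I would show that each $\mathcal{C}_i$ is compact. Closedness is immediate: $\mathcal{C}_i$ is the intersection of the closed positivity cone $\{a \in \mathcal{A}(\mathcal{H}_i) \mid a \geq 0\}$ with the closed affine hyperplane $\{a \mid \mathrm{tr}(a) = 1\}$, and the positivity condition $\langle x | a x \rangle \geq 0$ is a closed condition as $x$ ranges over $\mathcal{H}_i$, exactly as displayed in the paper's characterization $\mathcal{C} = \bigcap_{x} H_x^+ \cap \{\mathrm{tr}(\rho) = 1\}$. Boundedness follows because any $a \in \mathcal{C}_i$ is Hermitian and positive with $\mathrm{tr}(a) = 1$, so its eigenvalues lie in $[0,1]$ and hence $\|a\| \leq 1$ in operator norm (and therefore the Hilbert--Schmidt norm is likewise bounded by $1$). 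In finite dimension, closed and bounded gives compactness.

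Next I would show that the tensor map
\begin{equation}
\otimes : \mathcal{C}_1 \times \mathcal{C}_2 \longrightarrow \mathcal{B}(\mathcal{H}_1) \otimes \mathcal{B}(\mathcal{H}_2) \cong \mathcal{B}(\mathcal{H}), \qquad (a,b) \mapsto a \otimes b
\end{equation}
is continuous. This is the standard fact that the tensor product of operators is bilinear, and in finite dimension every bilinear map between finite dimensional normed spaces is continuous (indeed one has $\|a \otimes b\| = \|a\|\,\|b\|$ for the induced cross norm). Since $\mathcal{C}_1 \times \mathcal{C}_2$ is compact (product of two compact sets) and $\otimes$ is continuous, its image $\mathcal{S}_0(\mathcal{H})$ is compact.

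For the second assertion, any linear functional on the finite dimensional space $\mathcal{A}(\mathcal{H})$ is automatically continuous; in particular $\ell_\rho = \langle \rho, -\rangle$ is continuous on $\mathcal{S}_0(\mathcal{H})$. By the extreme value theorem a real-valued continuous function on a nonempty compact set attains both its supremum and infimum, so $M_\rho := \max \ell_\rho|_{\mathcal{S}_0(\mathcal{H})}$ and $m_\rho := \min \ell_\rho|_{\mathcal{S}_0(\mathcal{H})}$ exist. There is no real obstacle here beyond being careful that we truly are in finite dimension (so that compactness of $\mathcal{C}_i$ and continuity of tensoring come for free); in the infinite dimensional setting $\mathcal{C}_i$ is only weak-$\ast$ compact and the argument would require more delicate topological bookkeeping, but that case is outside the scope of the present section.
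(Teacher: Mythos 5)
Your proof is correct, and it follows the same high-level strategy as the paper (exhibit $\mathcal{S}_0(\mathcal{H})$ as the continuous image of a compact set, then invoke the extreme value theorem), but the compact set you choose is different. You take the domain to be $\mathcal{C}_1\times\mathcal{C}_2$ itself, proving compactness of each $\mathcal{C}_i$ by Heine--Borel (closed, as an intersection of closed positivity and trace conditions, and bounded, since the eigenvalues lie in $[0,1]$), and then use continuity of the bilinear tensor map. The paper instead parametrizes product states through the spectral decomposition: every $a\in\mathcal{C}_i$ is a unitary conjugate of a diagonal matrix whose diagonal lies in a simplex, so one gets a continuous surjection $\Gamma$ from $\mathcal{U}(\mathcal{H}_1)\times\mathcal{U}(\mathcal{H}_2)\times\triangle^n\times\triangle^m$ onto $\mathcal{S}_0(\mathcal{H})$, the domain being compact because unitary groups and simplices are. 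Your route is more economical and self-contained; it also implicitly anticipates the compactness of $\mathcal{C}$ that the paper only establishes in a later proposition (and there the paper does offer, as one of its two alternatives, an argument closer in spirit to yours). What the paper's parametrization buys is an explicit coordinate description of product states via unitaries and spectra, which is reused elsewhere (e.g., in the observation that the diagonal product states form $\triangle^n\times\triangle^m$ and in the effective computation of $m_\rho$ and $M_\rho$). Your closing remark correctly flags that the whole argument is confined to finite dimension, which matches the standing assumption of that section.
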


\begin{proof}
It is well known that for every matrix in $a\in\mathcal{C}_1$ there
exist a unitary matrix $U\in\mathcal{U}(\mathcal{H}_1)$ such that
$UaU^\dag$ is diagonal. In particular, if we take the subgroup
$$\mathcal{U}(\mathcal{H}_1)\times\mathcal{U}(\mathcal{H}_2)\hookrightarrow
\mathcal{U}(\mathcal{H})=\{U\in
\mathcal{B}(\mathcal{H})\,|\,UU^\dag=I\}$$ we have that every
product state is conjugate to a product of two diagonal matrixes
$$(U_1,U_2)(a_1\otimes a_2)(U_1^\dag,U_2^\dag)=(U_1a_1U_1^\dag)\otimes (U_2a_2U_2^\dag)=
d_1\otimes d_2.$$ Note that the subgroup
$\mathcal{U}(\mathcal{H}_1)\times\mathcal{U}(\mathcal{H}_2)$
preserves the space of product states and the space of separable
states. Let us define the space of diagonal product states
$$\mathcal{D}=\{d_1\otimes d_2\in\mathcal{S}\,|\, d_1=\text{diag}(\alpha_1,\ldots,\alpha_n),\,
d_2=\text{diag}(\beta_1,\ldots,\beta_m)\}.$$ Note that $\mathcal{D}$
is isomorphic, as a topological space, to the product of simplexes
$\triangle^n\times\triangle^m$. Indeed, we have the following
continuous surjective map from a compact space to
$\mathcal{S}_{0}(\mathcal{H})$

\begin{equation}
\Gamma:
\mathcal{U}(\mathcal{H}_1)\times\mathcal{U}(\mathcal{H}_2)\times\triangle^n\times\triangle^m
\longrightarrow\mathcal{S}_{0}(\mathcal{H}),
\end{equation}

\begin{equation}
\Gamma(U_1,U_2,(\alpha_1,\ldots,\alpha_n),(\beta_1,\ldots,\beta_m))=
U_1^\dag\text{diag}(\alpha_1,\ldots,\alpha_n)U_1\otimes
U_2^\dag\text{diag}(\beta_1,\ldots,\beta_m)U_2.
\end{equation}

\end{proof}

\noindent It is also known that

\begin{coro}
The space of separable states $S(\mathcal{H})\subseteq
A(\mathcal{H})$ is compact.
\end{coro}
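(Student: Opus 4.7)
The plan is to reduce this corollary to the preceding proposition together with the classical fact that in a finite-dimensional real vector space, the convex hull of a compact set is compact. Throughout this section we are in finite dimension, so $\mathcal{A}(\mathcal{H})$ is a finite-dimensional real vector space; set $d := \dim_{\mathbb{R}} \mathcal{A}(\mathcal{H})$. By the definition of separable states (in finite dimension the convex hull of the compact set $\mathcal{S}_0(\mathcal{H})$ is automatically closed, so no extra closure is needed), we have
\begin{equation}
S(\mathcal{H}) \;=\; \mathrm{conv}\bigl(\mathcal{S}_0(\mathcal{H})\bigr).
\end{equation}

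The key step is to apply Carath\'eodory's theorem: every element of $\mathrm{conv}(\mathcal{S}_0(\mathcal{H}))$ can be written as a convex combination of at most $d+1$ elements of $\mathcal{S}_0(\mathcal{H})$. This lets us realize $S(\mathcal{H})$ as the image of the continuous map
\begin{equation}
\Phi\colon \triangle^{d}\times \mathcal{S}_0(\mathcal{H})^{d+1}\longrightarrow \mathcal{A}(\mathcal{H}), \qquad \Phi\bigl((\lambda_i),(s_i)\bigr) \;=\; \sum_{i=0}^{d}\lambda_i s_i,
\end{equation}
where $\triangle^{d}$ is the standard $d$-simplex. The domain is a finite product of compact spaces, hence compact (the factors $\mathcal{S}_0(\mathcal{H})$ are compact by the preceding proposition, and $\triangle^{d}$ is compact as a closed bounded subset of $\mathbb{R}^{d+1}$). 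Since $\Phi$ is continuous, its image $S(\mathcal{H})$ is compact, and therefore closed and bounded in $\mathcal{A}(\mathcal{H})$.

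There is essentially no hard step here: the whole content is packaging the previous proposition with Carath\'eodory's theorem. The only point that requires minimal care is that we are in a finite-dimensional ambient space so that $\triangle^{d}$ with a fixed $d$ (not varying over the compact set) suffices to parametrize every convex combination, and that continuity of $\Phi$ only uses the vector-space operations of $\mathcal{A}(\mathcal{H})$, which are continuous. In the infinite-dimensional case one would have to genuinely take the closure of $\mathrm{conv}(\mathcal{S}_0(\mathcal{H}))$ in the trace-norm topology, as emphasized earlier in Section 5; but here that subtlety does not arise.
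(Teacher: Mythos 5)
Your proof is correct and follows essentially the same route as the paper: both invoke Carath\'eodory's theorem to write $S(\mathcal{H})$ as the continuous image of $\triangle^{N+1}\times(\mathcal{S}_{0}(\mathcal{H}))^{N+1}$, a product of compact sets, using the compactness of $\mathcal{S}_{0}(\mathcal{H})$ from the preceding proposition. Your extra remark that the convex hull is automatically closed in finite dimension (so the closure in the definition of separability is harmless) is a careful touch the paper leaves implicit, but it does not change the argument.
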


\begin{proof}
We remind the reader that Carathéodory's theorem states that if a
point $x$ of $\mathcal{R}^d$ lies in the convex hull of a set $P$,
there is a subset $P'$ of $P$ consisting of $d+1$ (or fewer)
points such that $x$ lies in the convex hull of $P'$.
Equivalently, $x$ lies in an $r-$simplex with vertices in $P$,
where $r \le d$. The result is named for Constantin Carathéodory,
who proved the theorem in 1911 for the case when $P$ is compact.
In 1914, Ernst Steinitz expanded Carathéodory's theorem for any
sets $P$ in $\mathcal{R}^d$. Let
$N=\dim_{\mathcal{R}}A(\mathcal{H})$. Then, it is seen by
Carathéodory's theorem \cite[17.1]{rocka} that the following
continuous map is surjective,

\begin{equation}
\triangle^{N+1}\times(\mathcal{S}_{0}(\mathcal{H}))^{N+1}\stackrel{\Gamma}{\longrightarrow}S(\mathcal{H})
\end{equation}

\begin{equation}
\Gamma((\lambda_0,\ldots,\lambda_N),(a_0,\ldots,a_{N}))=\sum_{i=0}^N
\lambda_i a_i.
\end{equation}

\noindent Hence $S(\mathcal{H})$ is a compact convex space.

\end{proof}

\noindent Entanglement witnesses are useful to characterize
entanglement \cite{bengtssonyczkowski2006,Horodeki-2009} and have
became a fundamental tool. They originate in geometry, being a
consequence of the Hahn-Banach theorem
\cite{bengtssonyczkowski2006}.
 This theorem is a central tool in functional analysis.
 It allows the extension of bounded linear functionals defined on a subspace of some vector space to the whole space,
 and it also guarantees that there are ``enough" continuous linear functionals defined on every normed vector
 space to make the study of the dual space ``interesting." Another version of the HB theorem,  known also
 either i) as the HB-Banach separation theorem or ii) the separating hyperplane theorem, has numerous uses in convex geometry, and interest us here.
 If $\rho\notin
\mathcal{S}(\mathcal{H})$, and because $\mathcal{S}(\mathcal{H})$
is closed and convex, then there exists a plane (and so a
functional) which separates $\rho$ and $\mathcal{S}(\mathcal{H})$.
This means that $\rho$ stands in one side of the plane and
$\mathcal{S}(\mathcal{H})$ on the other. The existence of such a
plane is closely linked to the existence of an observable $W$ (and
this is the entanglement witness) which satisfies that if
$\rho\in\mathcal{E}(\mathcal{H})$, then $\mbox{tr}(\rho W)<0$ and
$\mbox{tr}(\sigma)\geq 0$ for any
$\sigma\in\mathcal{S}(\mathcal{H})$. It is possible to show that
for any entangled state there exists an entanglement witness.
Also:

\begin{prop}
Let $\ell_\rho=\langle\rho,-\rangle$ an $\mathcal{R}$-linear
functional with $\rho\in\mathcal{A}(\mathcal{H})$. Then
$$\min_{S(\mathcal{H})}\ell_\rho=\min_{\mathcal{S}_{0}(\mathcal{H})}\ell_\rho=
\min\{\langle\rho,vv^\dag\otimes
ww^\dag\rangle\,|\,\|v\|=\|w\|=1\}=:m_\rho,$$
$$\max_{S(\mathcal{H})}\ell_\rho=\max_{\mathcal{S}_{0}(\mathcal{H})}\ell_\rho=
\max\{\langle\rho,vv^\dag\otimes
ww^\dag\rangle\,|\,\|v\|=\|w\|=1\}=:M_\rho.$$
\end{prop}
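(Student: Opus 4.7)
The plan is to prove the minimum claim (the maximum case is symmetric, replacing $\min$ by $\max$ and reversing inequalities). The chain of equalities splits naturally into two steps: first reducing from $S(\mathcal{H})$ to $\mathcal{S}_{0}(\mathcal{H})$, then reducing from all product states to pure product states. Throughout, the attainment of the minima is guaranteed by the compactness results already established ($\mathcal{S}_{0}(\mathcal{H})$ and $S(\mathcal{H})$ are compact, and $\ell_\rho$ is continuous because it is linear on a finite-dimensional space).

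For the first equality, the inclusion $\mathcal{S}_{0}(\mathcal{H}) \subseteq S(\mathcal{H})$ gives $\min_{S(\mathcal{H})}\ell_\rho \leq \min_{\mathcal{S}_{0}(\mathcal{H})}\ell_\rho$ immediately. For the reverse direction, I would recall that by definition $S(\mathcal{H})$ is the convex hull of $\mathcal{S}_{0}(\mathcal{H})$; hence any $\sigma \in S(\mathcal{H})$ is a finite convex combination $\sigma = \sum_i \lambda_i \sigma_i$ with $\sigma_i \in \mathcal{S}_{0}(\mathcal{H})$ (by Carath\'eodory, as invoked in the preceding corollary). Then $\ell_\rho(\sigma) = \sum_i \lambda_i \ell_\rho(\sigma_i) \geq \min_i \ell_\rho(\sigma_i) \geq \min_{\mathcal{S}_{0}(\mathcal{H})}\ell_\rho$, so the minimum on $S(\mathcal{H})$ is bounded below by the minimum on $\mathcal{S}_{0}(\mathcal{H})$.

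For the second equality, I observe that the set of pure product states $\{vv^\dag \otimes ww^\dag : \|v\|=\|w\|=1\}$ is contained in $\mathcal{S}_{0}(\mathcal{H})$, giving one inequality. For the other, given any $a\otimes b \in \mathcal{S}_{0}(\mathcal{H})$, I would apply the spectral theorem to $a \in \mathcal{C}_1$ and $b \in \mathcal{C}_2$, writing $a = \sum_i p_i v_i v_i^\dag$ and $b = \sum_j q_j w_j w_j^\dag$ as convex combinations of rank-one projectors with $\|v_i\|=\|w_j\|=1$. Then
\begin{equation}
a \otimes b = \sum_{i,j} p_i q_j \, (v_i v_i^\dag) \otimes (w_j w_j^\dag),
\end{equation}
and linearity of $\ell_\rho$ yields $\ell_\rho(a\otimes b) = \sum_{i,j} p_i q_j \ell_\rho(v_iv_i^\dag \otimes w_j w_j^\dag) \geq \min\{\langle\rho, vv^\dag \otimes ww^\dag\rangle : \|v\|=\|w\|=1\}$.

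The main potential obstacle is ensuring these minima are actually attained rather than merely infima, but this is already handled by the previously proved compactness of $\mathcal{S}_{0}(\mathcal{H})$ and $S(\mathcal{H})$ together with the compactness of the set of pure product states (it is the continuous image of the product of unit spheres under $(v,w) \mapsto vv^\dag \otimes ww^\dag$). Given these facts, the argument is essentially an elementary exercise in the observation that an affine function on a convex hull attains its extrema at the extreme points, applied twice in succession.
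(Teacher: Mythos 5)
Your proof is correct and follows essentially the same route as the paper's: both reduce the extremum over $S(\mathcal{H})$ to the extremum over product states by exploiting that $\ell_\rho$ is linear on a convex combination, and then pass to pure product states by diagonalizing each factor via the spectral theorem. The only difference is cosmetic: where the paper invokes the theory of faces of the simplex $\triangle^{s+1}$ to locate the extremum at a vertex, you use the direct averaging inequality $\sum_i\lambda_i c_i\geq\min_i c_i$, which is a cleaner way of saying the same thing.
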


\begin{proof}
Suppose that $\lambda_0a_0+\ldots+\lambda_s a_s\in S(\mathcal{H})$
is the maximum of $\ell_\rho$ and consider the following linear
functional
$$(\alpha_0,\ldots,\alpha_s)\longrightarrow\sum_{i=0}^s \alpha_i \ell_\rho(a_i),\quad(\alpha_0,\ldots,\alpha_s)\in\triangle^{s+1}.$$
We know that its maximum is at $(\lambda_0,\ldots,\lambda_s)$ but
let us appeal to some ideas from the theory of faces discussed in
 \cite[p.162,163]{rocka}. We learn there that a linear functional over a convex set
$\triangle^{s+1}$, by definition, achieves its maximum over a face
($F=\triangle^{s+1}\cap H$) and given that every face contains
$0$-dimensional faces we have that every linear functional achieves
its maximum at some  $0$-dimensional face. Recall that ``a face of a
face" is just a face of the convex set $\triangle^{s+1}$ and that
the $0$-dimensional faces of $\triangle^{s+1}$ are its vertexes. In
conclusion, the linear functional considered above achieves its
maximum at a vertex. Then such maximum refers to a product state
(same for the minimum). For the second relationship invoked above we
need still to show that every separable state is a convex
combination of products of pure states, but this is a well
established fact (given $\rho$ separable, just express in diagonal
forma the components of products which appear on its decomposition).
\end{proof}

\noindent What is interesting regarding the above proposition is
the fact that we can effectively compute the numbers $m_\rho$ and
$M_\rho$, as demonstrated in  \cite{max}. Also, we can show that

\begin{rema}
Over the convex set $\mathcal{C}$ it is easy to prove that the
maximum and minimum of the lineal functional $\ell_\rho$ are the
biggest and smallest eigenvalues of $\rho$. Given that
$S(\mathcal{H})\subseteq\mathcal{C}(\mathcal{H})$ we have that
$\min_\mathcal{S}\ell_\rho\geq\lambda_{\min}$ and
$\max_\mathcal{S}\ell_\rho\leq\lambda_{\max}$. In particular, for
pure states we have $\max_\mathcal{S}\ell_{xx^\dag}\leq1$ and
$\min_\mathcal{S}\ell_{xx^\dag}\geq0$.
\end{rema}

\noindent Let us define

\begin{definition}
Let $e_i=(0,\ldots,0,1,0,\ldots,0)$ be a canonical column vector
and let $c_{kl}$ be the matrix with $1$ in place $kl$, i.e.
$c_{kl}=e_k e_l^\dag$. Let $x=\sum\lambda_i e_i$ be a vector such
that $\|x\|=1$ then
$$x^\dag\rho x=\sum_{kl} \overline{\lambda_k}\lambda_l e_k^\dag\rho e_l=
\sum_{kl} \overline{\lambda_k}\lambda_l \text{tr}(\rho c_{kl})=
\text{tr}(\rho \sum_{kl} \overline{\lambda_k}\lambda_l c_{kl})=
\text{tr}(\rho xx^\dag).$$ Note that $xx^\dag$ is a pure state
(positive and rank one matrix) and $\text{tr}(xx^\dag)=\|
x\|^2=1$. Let us define the sphere in $\mathcal{H}$
$$S_{1}(\mathcal{H})=\{x\in\mathcal{H}\,|\,\|x\|=1\}.$$
\end{definition}

\begin{prop}
\nd The space of states $\mathcal{C}$ is compact.
\end{prop}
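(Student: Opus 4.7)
The plan is to imitate the argument just used for the compactness of $\mathcal{S}_{0}(\mathcal{H})$ in the preceding proposition, and exhibit $\mathcal{C}$ as the continuous image of a compact space. Since we are in the finite dimensional setting of this section, the unitary group $\mathcal{U}(\mathcal{H})$ is a compact Lie group (it is closed and bounded inside the finite dimensional space $\mathcal{B}(\mathcal{H})$), and the standard simplex $\triangle^n$, with $n=\dim\mathcal{H}$, is compact.

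The key step is to use the spectral theorem: every $\rho\in\mathcal{C}$ is Hermitian, positive semidefinite, and of unit trace, so there exists $U\in\mathcal{U}(\mathcal{H})$ and nonnegative numbers $\alpha_{1},\ldots,\alpha_{n}$ with $\sum_{i}\alpha_{i}=1$ such that $\rho=U^{\dag}\,\mathrm{diag}(\alpha_{1},\ldots,\alpha_{n})\,U$. This furnishes a surjection
\begin{equation}
\Gamma:\mathcal{U}(\mathcal{H})\times\triangle^{n}\longrightarrow\mathcal{C},\qquad
\Gamma(U,(\alpha_{1},\ldots,\alpha_{n}))=U^{\dag}\,\mathrm{diag}(\alpha_{1},\ldots,\alpha_{n})\,U.
\end{equation}
Continuity of $\Gamma$ is immediate because matrix multiplication and the entries of a diagonal matrix depend continuously on their arguments. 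As the continuous image of a compact space, $\mathcal{C}=\Gamma(\mathcal{U}(\mathcal{H})\times\triangle^{n})$ is then compact, which concludes the proof.

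Alternatively, and more directly in finite dimensions, one could note that $\mathcal{C}$ lies inside the finite dimensional real vector space $\mathcal{A}(\mathcal{H})$, so by Heine--Borel it suffices to show closedness and boundedness. Closedness is exactly the characterization
$$\mathcal{C}=\bigcap_{x\in\mathcal{H}}H_{x}^{+}\cap\{\rho\,|\,\mathrm{tr}(\rho)=1\}$$
already recorded earlier in the paper (intersection of closed half-spaces with a closed hyperplane), and boundedness is immediate from the fact that the eigenvalues of $\rho\in\mathcal{C}$ lie in $[0,1]$ and sum to $1$, so every reasonable norm of $\rho$ is bounded by a constant depending only on $n$.

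There is no real obstacle to overcome here; the only delicate point is making sure the surjectivity of $\Gamma$ is invoked via the spectral theorem applied to Hermitian matrices (not merely normal ones) and that the simplex $\triangle^{n}$ correctly encodes the trace-one, positivity constraints simultaneously. I would favor the first approach because it parallels the proof style of the previous proposition and is the route the paper appears to be taking.
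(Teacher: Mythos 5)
Your main argument is correct and is essentially the paper's own second proof: the paper exhibits $\mathcal{C}$ as the continuous image of a compact space via two surjections, $\Gamma_1$ from $\triangle^{N+1}\times S_{1}(\mathcal{H})^{N+1}$ (convex combinations of pure states $x_ix_i^\dag$, justified by Carath\'eodory) and $\Gamma_2$ from $\triangle^{N+1}\times\mathcal{U}(\mathcal{H})^n\times\triangle^n$ (convex combinations of unitarily conjugated diagonal matrices); your map $\Gamma(U,\alpha)=U^\dag\,\mathrm{diag}(\alpha)\,U$ is a leaner, non-redundant version of $\Gamma_2$, since the spectral theorem already makes a single factor $\mathcal{U}(\mathcal{H})\times\triangle^{n}$ surject onto $\mathcal{C}$. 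Your Heine--Borel alternative (closed, as an intersection of closed half-spaces with the trace hyperplane, and bounded, since the eigenvalues lie in $[0,1]$) is a genuinely different and more elementary route that the paper does not take; it buys brevity at the cost of not paralleling the parametrization style used for $\mathcal{S}_0(\mathcal{H})$ and $\mathcal{S}(\mathcal{H})$ in the surrounding propositions. Both of your arguments are sound in the finite dimensional setting assumed in this section.
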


\begin{proof}
\nd Let $n=\dim_\mathcal{R} \mathcal{H}$ and
$N=\dim_\mathcal{R}\mathcal{A}(\mathcal{H})$. We can give here two
proofs,
$$\triangle^{N+1}\times S_{1}(\mathcal{H})^{N+1}\stackrel{\Gamma_1}{\longrightarrow}\mathcal{C}(\mathcal{H}),\quad
\triangle^{N+1}\times\mathcal{U}(\mathcal{H})^n\times\triangle^n\stackrel{\Gamma_2}{\longrightarrow}\mathcal{C}(\mathcal{H}).$$
$$\Gamma_1(a_0,\ldots,a_N,x_0,\ldots,x_N)=\sum_{i=0}^N a_ix_ix_i^\dag.$$
$$\Gamma_2(a_0,\ldots,a_N,U_1,\ldots,U_n,d_1,\ldots,d_n)=\sum_{i=0}^N a_i U_i^\dag\text{diag}(d_1,\ldots,d_n)U_i.$$
Both are continuous surjective maps.
\end{proof}

\noindent Now we are in condition to characterize
$\mathcal{S}(\mathcal{H})$ in terms of special convex sets

\begin{theo}
$$\mathcal{C}=\bigcap_{x\in S_{1}(\mathcal{H})}\{\ell_{xx^\dag}\geq 0\},\quad
S(\mathcal{H})=\bigcap_{\rho\in\mathcal{A}(\mathcal{H})}\{m_\rho\leq\ell_\rho\leq
M_\rho\}.$$ In particular, if $s\in S(\mathcal{H})$ then
$\|s\|^2\leq M_s$.
\end{theo}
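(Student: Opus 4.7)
The plan is to prove each equality by a separation argument, then derive the inequality $\|s\|^2 \le M_s$ as a one-line consequence of the definition of $M_s$.

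First I would handle the characterization of $\mathcal{C}$. Unwrapping the inner product on $\mathcal{A}(\mathcal{H})$,
\begin{equation}
\ell_{xx^\dagger}(\rho) \;=\; \langle xx^\dagger,\rho\rangle \;=\; \text{tr}(\rho\, xx^\dagger) \;=\; x^\dagger \rho\, x,
\end{equation}
so the half-space $\{\ell_{xx^\dagger}\ge 0\}$ is exactly the half-space $H_x^+$ of the Preliminaries, and the intersection over $x\in S_1(\mathcal{H})$ is precisely the positive-semidefinite cone (by the spectral theorem it suffices to test positivity on unit vectors). Combined with the trace-one condition already built into the definition $\mathcal{C}=\bigcap_{x\in \mathcal{H}} H_x^+\cap\{\text{tr}(\rho)=1\}$, this yields the first identity. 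The content here is really the observation that positivity of a Hermitian operator is expressible as membership in a family of $\ell_\rho$-level sets, thereby placing $\mathcal{C}$ inside $\mathcal{L}_{\mathcal{C}}$ in an explicit way.

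For the characterization of $S(\mathcal{H})$, the inclusion $\subseteq$ is immediate from the preceding Proposition, which identifies $m_\rho$ and $M_\rho$ as the extrema of $\ell_\rho$ on $S(\mathcal{H})$. The reverse inclusion is the Hahn-Banach separation argument foreshadowed in the entanglement-witness discussion. The previous Corollary gives that $S(\mathcal{H})$ is a compact convex subset of the finite-dimensional real inner-product space $\mathcal{A}(\mathcal{H})$, so for any $\sigma\notin S(\mathcal{H})$ strict separation produces $\rho\in\mathcal{A}(\mathcal{H})$ and $c\in\mathbb{R}$ with $\langle\rho,\sigma\rangle < c \le \langle\rho,\tau\rangle$ for every $\tau\in S(\mathcal{H})$. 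This forces $\ell_\rho(\sigma)<m_\rho$, so $\sigma$ is excluded from the intersection on the right.

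For the final bound $\|s\|^2\le M_s$ with $s\in S(\mathcal{H})$, since $s$ is Hermitian one has
\begin{equation}
\ell_s(s) \;=\; \langle s,s\rangle \;=\; \text{tr}(s\, s^\dagger) \;=\; \text{tr}(s^2) \;=\; \|s\|^2,
\end{equation}
and because $s\in S(\mathcal{H})$ and $M_s=\max_{S(\mathcal{H})}\ell_s$, the inequality follows by evaluating the maximum at $s$ itself.

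The main obstacle, as I see it, is not in either equality individually---each being a standard application of the spectral theorem or of Hahn-Banach separation respectively---but rather in keeping straight the identification between linear functionals $\ell_\rho$ on $\mathcal{A}(\mathcal{H})$ and the elements $\rho$ that represent them via the Hilbert-Schmidt inner product, together with the precise role of the trace-one normalization on the right-hand side of the first identity. Once those conventions are fixed, both equalities collapse to fairly standard convex-geometric facts, and the corollary drops out of the definition of $M_s$.
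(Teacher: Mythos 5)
Your proof is correct and follows essentially the same route as the paper's: the forward inclusion from the definition of $m_\rho$ and $M_\rho$ as extrema over $S(\mathcal{H})$, and the reverse inclusion by Hahn--Banach separation of a point from the compact convex set $S(\mathcal{H})$. The only differences are cosmetic --- you spell out the first equality (which the paper dismisses as not needing proof) and fix the orientation of the separating functional once, where the paper runs through both sign cases to reach the same contradiction.
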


\begin{proof}
We only have to prove the second equality. If
$x\in\mathcal{A}(\mathcal{H})$ and $m_\rho\leq\ell_\rho(x)\leq
M_\rho$ for all $\rho\in\mathcal{A}(\mathcal{H})$, then $x\in
S(\mathcal{H})$. Suppose that $x\not\in S(\mathcal{H})$. Then, there
exists $\ell$ separating $x$ and $S(\mathcal{H})$. Assume that there
exist also $d\in\mathcal{R}$ such that

$$\ell(x)>d,\,\ell(s)\leq d\quad\forall s\in\mathcal{S}(\mathcal{H}).$$

Let $M$ be the maximum of $\ell$ over $\mathcal{S}(\mathcal{H})$.
Then $M\leq d$. Note that $\ell$ is one of the $\{\ell_\rho\}$ used
in the intersection operation above, so that we would have
$\ell(x)\leq M\leq d$, which is a contradiction! Accordingly,
$$\ell(x)<d,\,\ell(s)\geq d\quad\forall s\in\mathcal{S}(\mathcal{H}).$$
Now, let $m$ be the minimum of $\ell$ over $\mathcal{S}$. Then one
would have  $d\leq m$, which is again a contradiction!
\end{proof}

\nd Remark that any set of the form $\{m_\rho\leq\ell_\rho\leq
M_\rho\}\cap\mathcal{C}$ is always convex (and thus an element of
$\mathcal{L}_{\mathcal{C}}$). Consequently, the equality stated in
the above theorem

\begin{equation}\label{e:Scaracterizado1}
S(\mathcal{H})=\bigcap_{\rho\in\mathcal{A}(\mathcal{H})}\{m_\rho\leq\ell_\rho\leq
M_\rho\}
\end{equation}
\noindent may be recast in lattice theoretical terms as

\begin{equation}\label{e:Scaracterizado2}
S(\mathcal{H})=\bigwedge_{\rho\in\mathcal{A}(\mathcal{H})}\{m_\rho\leq\ell_\rho\leq
M_\rho\}\cap\mathcal{C}
\end{equation}

\noindent The difference between equations
(\ref{e:Scaracterizado1}) and (\ref{e:Scaracterizado2}) is both
subtle and important: one of them is expressed in lattice form. In
the previous theorem we aimed to characterize the separable states
via a consideration of all the linear functionals. In practice we
just want to know if a {\it given} state is separable. For such
query we will use a theorem on projections over convex sets
\cite[V.2]{brezis}.

\begin{prop}
\nd Let $\rho\in\mathcal{C}$ Then, there exists a linear functional
$\ell$ and a real number $M$ such that
$$\rho \not\in S(\mathcal{H})\iff \ell(\rho)\ge M.$$
\end{prop}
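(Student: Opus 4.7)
The plan is to apply the projection theorem for closed convex sets in a Hilbert space (the theorem of Brezis V.2 cited just before the statement), taking as ambient Hilbert space $\mathcal{A}(\mathcal{H})$ equipped with the Hilbert--Schmidt inner product $\langle A,B\rangle=\mbox{tr}(AB)$. In the finite dimensional setting we are in, $\mathcal{A}(\mathcal{H})$ is a genuine (finite dimensional) real Hilbert space, so this is legitimate.

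First, I would check the hypotheses: $S(\mathcal{H})$ is convex by definition and closed because, by the corollary established just above, it is compact. For the non-trivial implication, assume $\rho\notin S(\mathcal{H})$. The projection theorem then delivers a unique $\sigma\in S(\mathcal{H})$ minimizing $\|\rho-s\|$ over $s\in S(\mathcal{H})$, characterized by the variational inequality
\begin{equation}
\langle \rho-\sigma,\, s-\sigma\rangle\le 0\qquad\text{for all } s\in S(\mathcal{H}).
\end{equation}
I would then define $\ell:=\langle \rho-\sigma,\,-\rangle$ and $M:=\langle\rho-\sigma,\sigma\rangle$. The inequality above rewrites as $\ell(s)\le M$ for every $s\in S(\mathcal{H})$, while a direct computation gives
\begin{equation}
\ell(\rho)-M=\langle\rho-\sigma,\rho-\sigma\rangle=\|\rho-\sigma\|^{2}>0,
\end{equation}
since $\rho\ne\sigma$. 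Thus $\ell(\rho)>M$, and in particular $\ell(\rho)\ge M$.

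For the converse implication I would argue contrapositively: the functional $\ell$ produced above is the separating functional furnished by the projection construction, and by its defining property it strictly separates $\rho$ from $S(\mathcal{H})$ (every $s\in S(\mathcal{H})$ satisfies $\ell(s)\le M<\ell(\rho)$). Therefore the sole fact that $\ell(\rho)\ge M$ already precludes $\rho\in S(\mathcal{H})$, because $\rho\in S(\mathcal{H})$ would force $\ell(\rho)\le M$, contradicting $\ell(\rho)-M=\|\rho-\sigma\|^{2}>0$. This is the Hahn--Banach flavour already alluded to in the paragraph before the statement, and it ties the functional $\ell$ to an honest entanglement witness.

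The step I expect to be the main obstacle is conceptual rather than computational: making explicit that $\ell=\ell_{\rho}$ and $M=M_{\rho}$ are constructed from $\rho$ itself via the projection onto $S(\mathcal{H})$, so the biconditional should be read in the Hahn--Banach sense (one gets, for each putative $\rho\notin S(\mathcal{H})$, a witnessing pair $(\ell_{\rho},M_{\rho})$ with $\ell_{\rho}(s)\le M_{\rho}$ on $S(\mathcal{H})$ and $\ell_{\rho}(\rho)>M_{\rho}$). Modulo this clarification, once the Hilbert--Schmidt structure on $\mathcal{A}(\mathcal{H})$ is in place and the compactness corollary is used for closedness, the rest is the standard projection argument and involves no further difficulty.
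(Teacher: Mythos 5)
Your proof is essentially the paper's own argument: both project $\rho$ onto the closed convex set $S(\mathcal{H})$, use the variational inequality $\langle\rho-\sigma,\,s-\sigma\rangle\le 0$ to define $\ell=\langle\rho-\sigma,\cdot\rangle$ and $M=\langle\rho-\sigma,\sigma\rangle$, and conclude via $\ell(\rho)-M=\|\rho-\sigma\|^{2}$; the only cosmetic differences are that you explicitly invoke the compactness corollary for closedness and name the Hilbert--Schmidt structure on $\mathcal{A}(\mathcal{H})$. One shared caveat (inherited from the statement itself, not a defect of your proposal relative to the paper): what is actually proved is the biconditional with strict inequality $\ell(\rho)>M$, since for $\rho\in S(\mathcal{H})$ the construction yields $\ell(\rho)=M$ and hence still $\ell(\rho)\ge M$.
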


\begin{proof}
\nd If $\rho\not\in S(\mathcal{H})$ then the distance between $\rho$
and $S(\mathcal{H})$ is positive. Let $s\in S(\mathcal{H})$ be its
``projection" in the sense that the overlaps
$$\langle \rho-s,c\rangle\leq\langle \rho-s,s\rangle,\quad\forall c\in
S(\mathcal{H}).$$ Let $\ell:=\langle \rho-s,\cdot\rangle$ and
$M:=\langle \rho-s,s\rangle=\ell(s)$,
$$\ell(\rho)=\ell(\rho)-M+M=\langle \rho-s,\rho-s\rangle+M=\|
\rho-s\|^2+M>M.$$ Then $\ell$ separates $\rho$ and $S(\mathcal{H})$.
If $\rho\in S(\mathcal{H})$ then $\ell(\rho)=\ell(s)=M$.
\end{proof}

\noindent The following definition and subsequent proposition will
be useful for our separability criteria

\begin{definition}
If we identify $x\in S_{1}(\mathcal{H})$ with a matrix
$X\in\mathcal{C}^{n\times m}$, $X_{ij}=x_{m(i-1)+j}$, we have
$$x^\dag (v\otimes w)=v^\dag X w.$$
Let $\{v_i\}\subseteq\mathcal{H}_1$,
$\{w_i\}\subseteq\mathcal{H}_2$ and
$\sigma_1\geq\ldots\geq\sigma_r>0$ be the singular value
decomposition of $X$. It is known that the maximum of the bilinear
form $v^\dag Xw$ over $S_{1}(\mathcal{H}_1)\times
S_{1}(\mathcal{H}_2)$ is $\sigma_1$.
\end{definition}

\begin{prop}\label{p:separablecholevski}
$$S(\mathcal{H})\subseteq
\bigcap_{x\in
S_{1}(\mathcal{H})}\{\rho\in\mathcal{C}(\mathcal{H})\,|\,\langle
xx^\dag,\rho\rangle\leq\sigma_1(X)^2\}= \bigcap_{x\in
S_{1}(\mathcal{H})}\{0\leq\ell_{xx^\dag}\leq\sigma_1(X)^2\}.$$
\end{prop}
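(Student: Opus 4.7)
The plan is to reduce everything to the previous proposition that computes $\max_{S(\mathcal{H})}\ell_\rho$ as a maximum over pure product states, and then apply the stated SVD fact about the bilinear form $v^\dag X w$.

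First I would fix an arbitrary unit vector $x\in S_{1}(\mathcal{H})$ and an arbitrary separable state $s\in S(\mathcal{H})$, and aim to show that $\langle xx^\dag,s\rangle\leq\sigma_1(X)^2$. By the earlier proposition, the maximum of $\ell_{xx^\dag}=\langle xx^\dag,-\rangle$ over $S(\mathcal{H})$ is already attained on product pure states, so it suffices to bound $\langle xx^\dag,vv^\dag\otimes ww^\dag\rangle$ over unit vectors $v\in\mathcal{H}_1$, $w\in\mathcal{H}_2$. The second claimed equality of the theorem is the easy part: for any $\rho\in\mathcal{C}$ and any unit $x$ one has $\ell_{xx^\dag}(\rho)=\text{tr}(\rho\,xx^\dag)=x^\dag\rho x\geq 0$ because $\rho$ is positive, so the lower bound $0\leq\ell_{xx^\dag}$ is automatic when we intersect with $\mathcal{C}(\mathcal{H})$ and the two sides coincide.

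The main computational step is the identity
\begin{equation}
\langle xx^\dag,vv^\dag\otimes ww^\dag\rangle \;=\; |v^\dag X w|^{2}.
\end{equation}
I would prove this by unpacking definitions: since $xx^\dag$ and $vv^\dag\otimes ww^\dag=(v\otimes w)(v\otimes w)^\dag$ are self-adjoint, the trace inner product gives
\begin{equation}
\langle xx^\dag,vv^\dag\otimes ww^\dag\rangle = \text{tr}\bigl(xx^\dag (v\otimes w)(v\otimes w)^\dag\bigr)=\bigl|x^\dag(v\otimes w)\bigr|^{2},
\end{equation}
and then the identification $X_{ij}=x_{m(i-1)+j}$ from the preceding definition gives $x^\dag(v\otimes w)=v^\dag X w$, completing the identity.

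From here the proof is a one-liner: taking the supremum over $\|v\|=\|w\|=1$ and applying the SVD fact quoted just before the proposition, which states that $\sup_{\|v\|=\|w\|=1}|v^\dag X w|=\sigma_1(X)$, yields
\begin{equation}
\max_{S(\mathcal{H})}\ell_{xx^\dag}\;=\;\max_{\|v\|=\|w\|=1}|v^\dag X w|^{2}\;=\;\sigma_1(X)^{2},
\end{equation}
so $\ell_{xx^\dag}(s)\leq\sigma_1(X)^{2}$ for every $s\in S(\mathcal{H})$. Intersecting over all unit $x$ gives the asserted inclusion. The only real obstacle is making the identification $x\leftrightarrow X$ carefully enough that the innocent-looking identity $x^\dag(v\otimes w)=v^\dag X w$ is unambiguous; once that is in place, the rest is a direct appeal to the previously established maximum principle over $S(\mathcal{H})$ and to the classical variational characterization of the largest singular value.
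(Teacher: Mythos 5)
Your proposal is correct and follows essentially the same route as the paper: reduce to pure product states $vv^\dag\otimes ww^\dag$ (via convexity, or equivalently via the earlier maximum-principle proposition), compute $\langle xx^\dag,vv^\dag\otimes ww^\dag\rangle=|x^\dag(v\otimes w)|^{2}=|v^\dag Xw|^{2}$, and bound this by $\sigma_1(X)^2$ using the singular value decomposition. The only differences are cosmetic --- you spell out the identification $x\leftrightarrow X$ and the trivial lower bound $0\leq\ell_{xx^\dag}$ on $\mathcal{C}$ a bit more explicitly than the paper does.
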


\begin{proof}
A general separable state is a convex combination of product
states $vv^\dag\otimes ww^\dag$,
$$\langle xx^\dag,vv^\dag\otimes ww^\dag\rangle=|\langle x,v\otimes w\rangle|^2\leq\sigma_1^2,\quad
\langle xx^\dag,v_1v_1^\dag\otimes
w_1w_1^\dag\rangle=\sigma_1^2\Longrightarrow
\max_{S(\mathcal{H})}\ell_{xx^\dag}=\sigma_1^2.$$ Note that
$\sigma_1^2=1$ if and only if $xx^\dag=v_1v_1^\dag\otimes
w_1w_1^\dag\in S(\mathcal{H})$. Let
$\rho\in\mathcal{C}(\mathcal{H})$ and assume that there exist a
state $xx^\dag$ such that $\langle
xx^\dag,\rho\rangle>\sigma_1(x)^2$. Then,  $\rho\not\in
S(\mathcal{H})$.
\end{proof}

\noindent The first inclusion of Proposition
\ref{p:separablecholevski} can be written in the language of
$\mathcal{L}_{C}$. This is so because, for a fixed $x\in S_{1}$,
the set

\begin{equation}
C_{x}=\{\rho\in\mathcal{C}(\mathcal{H})\,|\,\langle
xx^\dag,\rho\rangle\leq\sigma_1(X)^2\},
\end{equation}

\noindent is convex. In order to better appreciate this fact,
suppose that $\rho_{1}\in C_{x}$ and $\rho_{2}\in C_{x}$. Then, we
have $\langle xx^\dag,\rho_{1}\rangle\leq\sigma_1(x)^2$ and
$\langle xx^\dag,\rho_{2}\rangle\leq\sigma_1(x)^2$. Multiplying
the first inequality by $\lambda\in(0,1)$ and the second one by
$(1-\lambda)$, we easily find that $\langle
xx^\dag,(\lambda\rho_{1}+(1-\lambda)\rho_{2})\rangle\leq\sigma_1(x)^2$.
This proves that $C_{x}$ is convex.  Thus, for each $x\in S_{1}$,
$C_{x}\in \mathcal{L}_{\mathcal{C}}$. Accordingly, we can state
(using a lattice theoretical language) that

\begin{prop}\label{p:inequality}
$\mathcal{S}(\mathcal{H})\leq \bigwedge_{x\in
S_{1}(\mathcal{H})}C_{x}.$
\end{prop}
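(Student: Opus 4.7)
The statement is essentially a translation of the set-theoretic inclusion established in Proposition \ref{p:separablecholevski} into the lattice language of $\mathcal{L}_{\mathcal{C}}$. My plan, therefore, is to verify three things: that all sets involved are legitimate elements of $\mathcal{L}_{\mathcal{C}}$, that the meet operation $\bigwedge$ in this lattice coincides with intersection, and that the partial order $\leq$ coincides with set-theoretic inclusion. Once these identifications are made, the conclusion follows immediately from the inclusion already proved.

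First I would note that $\mathcal{S}(\mathcal{H}) \in \mathcal{L}_{\mathcal{C}}$, since by definition $\mathcal{S}(\mathcal{H})$ is the closure of the convex hull of product states, hence convex. Next I would observe that each $C_x$ is convex (the linearity argument given in the paragraph preceding the proposition is already carried out), so $C_x \in \mathcal{L}_{\mathcal{C}}$ for every $x \in S_1(\mathcal{H})$. Thus $\bigwedge_{x\in S_1(\mathcal{H})} C_x$ is well-defined.

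Then I would invoke Definition \ref{definitionlattice}, according to which the meet in $\mathcal{L}_{\mathcal{C}}$ is exactly set-theoretic intersection, and the order relation $\longrightarrow$ is set-theoretic inclusion. Since finite meets coincide with intersection and $\mathcal{L}_{\mathcal{C}}$ admits arbitrary intersections of convex subsets (an arbitrary intersection of convex sets is convex), we have
\begin{equation}
\bigwedge_{x\in S_{1}(\mathcal{H})} C_{x} \;=\; \bigcap_{x\in S_{1}(\mathcal{H})} C_{x}.
\end{equation}
By Proposition \ref{p:separablecholevski} this intersection contains $\mathcal{S}(\mathcal{H})$, which in the lattice-theoretical language of $\mathcal{L}_{\mathcal{C}}$ reads precisely $\mathcal{S}(\mathcal{H}) \leq \bigwedge_{x\in S_{1}(\mathcal{H})} C_{x}$, as desired.

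There is no real obstacle here; the content of the proposition is not a new analytic fact but a change of vocabulary. The only subtle point to double-check is that an infinite meet in $\mathcal{L}_{\mathcal{C}}$ is actually represented by the ordinary intersection — this is automatic from the defining operations in $\mathcal{L}_{\mathcal{C}}$, since arbitrary intersections of convex sets are convex and clearly give the greatest lower bound under inclusion. Thus the proof is essentially a one-line observation combined with the previously established inclusion.
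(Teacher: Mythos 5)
Your proposal is correct and follows essentially the same route as the paper, which likewise establishes the convexity of each $C_{x}$ (so that $C_{x}\in\mathcal{L}_{\mathcal{C}}$) and then reads the set-theoretic inclusion of Proposition \ref{p:separablecholevski} as the lattice inequality, with $\bigwedge$ being intersection and $\leq$ being inclusion. Your extra remark that arbitrary intersections of convex sets are convex, so the infinite meet is well represented by the intersection, is a sound and welcome clarification of a point the paper leaves implicit.
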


\noindent The above proposition shows that the set of separable
states is included in the conjunction of a collection of special
elements of $\mathcal{L}(\mathcal{C})$. This leads to  a new
(partial) separability criteria, because from Proposition
\ref{p:inequality} it rapidly follows that\\

\fbox{\parbox{6.0in}{\nd Given the state $\rho$, if there exists
$x\in S_{1}(\mathcal{H})$ such that $\rho\notin C_{x}$, then
$\rho\in\mathcal{E}(\mathcal{H})$.}} \vskip 3mm \nd

\noindent The above discussion can be easily rephrased to prove
the following theorem

\begin{theorem}
Use the Cholesky and the singular value decompositions to write
$\rho\in\mathcal{C}(\mathcal{H})$ as a sum
$\rho=\sum_{i=1}^s\lambda_i x_ix_i^\dag$, with $x_i^\dag x_j=0$,
$$\rho=LL^\dag=(U\Sigma V^\dag)(U\Sigma V^\dag)^\dag=U\Sigma^2 U^\dag\Longrightarrow$$
$$\langle x_jx_j^\dag,\rho\rangle=\langle x_jx_j^\dag,\sum_{i=1}^s\lambda_ix_ix_i^\dag\rangle=
\sum_{i=1}^s\lambda_i\langle
x_jx_j^\dag,x_ix_i^\dag\rangle=\lambda_j.$$ If for some $j$,
$\lambda_j>\sigma_1(x_j)^2$ then, $\rho\not\in S(\mathcal{H})$.
\end{theorem}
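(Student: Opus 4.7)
The plan is to obtain the theorem as an immediate corollary of Proposition \ref{p:separablecholevski}, using the Cholesky/SVD recipe only to produce, in a canonical way, the unit vectors $x_j$ needed to apply that proposition. Since the statement already lays out the key chain of equalities, essentially all I need to verify is (i) that the resulting $x_j$ are unit vectors (so they lie in $S_{1}(\mathcal{H})$), and (ii) that the inner product $\langle x_j x_j^\dag,\rho\rangle$ genuinely picks out $\lambda_j$.

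First I would make the decomposition step explicit. Because $\rho \ge 0$, the Cholesky factorization $\rho = LL^\dag$ is available; applying the singular value decomposition $L = U\Sigma V^\dag$ and using $V^\dag V = I$ gives $\rho = U\Sigma^2 U^\dag$. Writing $U = [x_1 \mid \ldots \mid x_s]$ in column form and setting $\lambda_i = \sigma_i^2$ yields $\rho = \sum_{i=1}^s \lambda_i x_i x_i^\dag$ with $x_i^\dag x_j = \delta_{ij}$. In particular each $x_i$ is a unit vector, so $x_i \in S_{1}(\mathcal{H})$, which is exactly the regime in which Proposition \ref{p:separablecholevski} applies to $C_{x_i}$. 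The orthonormality also gives $\langle x_j x_j^\dag, x_i x_i^\dag\rangle = \text{tr}(x_j x_j^\dag x_i x_i^\dag) = |x_j^\dag x_i|^2 = \delta_{ij}$, and linearity of the trace then produces the asserted identity $\langle x_j x_j^\dag,\rho\rangle = \lambda_j$.

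The separability consequence is then just a contrapositive. By Proposition \ref{p:separablecholevski}, every $\rho \in \mathcal{S}(\mathcal{H})$ lies in $C_{x_j} = \{\rho \in \mathcal{C}(\mathcal{H}) \mid \langle x_j x_j^\dag,\rho\rangle \le \sigma_1(X_j)^2\}$ for each $j$. Combining this with the identity just established, separability would force $\lambda_j \le \sigma_1(x_j)^2$ for every $j$. Therefore, if some $j$ satisfies $\lambda_j > \sigma_1(x_j)^2$, we conclude $\rho \notin \mathcal{S}(\mathcal{H})$.

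No real obstacle arises: the theorem is essentially a constructive reading of Proposition \ref{p:separablecholevski} in the coordinates furnished by the Cholesky/SVD pair. The only point that requires a moment of care is notational, namely that $\sigma_1(x_j)$ denotes the largest singular value of the $n\times m$ matrix reshape $X_j$ of the vector $x_j$ (as set up in the definition preceding the proposition), and not a singular value of $\rho$ or of $L$; once this identification is in place, the bound $\lambda_j \le \sigma_1(x_j)^2$ is exactly the one supplied by the separable-witness inequality.
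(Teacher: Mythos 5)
Your proposal is correct and follows essentially the same route as the paper: the theorem's displayed chain of equalities is itself the intended argument, namely that the Cholesky/SVD pair produces orthonormal eigenvectors $x_j$ with $\langle x_jx_j^\dag,\rho\rangle=\lambda_j$, after which the conclusion is the contrapositive of Proposition \ref{p:separablecholevski} applied to those $x_j$. Your added care about $\sigma_1(x_j)$ denoting the top singular value of the reshaped matrix $X_j$ (not of $\rho$ or $L$) is exactly the right reading of the paper's notation.
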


\section{The characterization of entanglement using informational invariants}\label{s:The Relationship for convex}

In this section we study the relationship between the lattice
$\mathcal{L}_{\mathcal{C}}$ of a system $S$ composed of subsystems
$S_{1}$ and $S_{2}$, and the lattices of its subsystems,
$\mathcal{L}_{\mathcal{C}1}$ and $\mathcal{L}_{\mathcal{C}2}$
respectively. As in \cite{extendedql}, we do this by concocting a
physical interpretation of the maps which can be defined between
them. Recall that we are working with spaces of arbitrary
dimension.

\subsection{Separable States (Going Up)}\label{s:going up}

Let us define:

\begin{definition}
Given $C_{1}\subseteq\mathcal{C}_{1}$ and
$C_{2}\subseteq\mathcal{C}_{2}$

\begin{equation}
C_1\otimes C_2:=\{\rho_{1}\otimes\rho_{2}\,|\,\rho_{1}\in
C_1,\rho_{2}\in C_2\}
\end{equation}

\end{definition}

\noindent Then, we define the map:

\begin{definition}
$$\Lambda:\mathcal{L}_{\mathcal{C}1}\times\mathcal{L}_{\mathcal{C}2}\longrightarrow\mathcal{L}_{\mathcal{C}}$$
$$(C_{1},C_{2})\longrightarrow \overline{conv(C_1\otimes C_2)}$$
where the bar denotes closure respect to norm.
\end{definition}

\noindent In the rest of this work we will implicitly use the
following proposition (see for example \cite{Convexsets}):

\begin{prop}
Let $S$ be a subset of a linear space $\mathcal{L}$. Then $x\in
conv(S)$ iff $x$ is contained in a finite dimensional simplex
$\Delta$ whose vertices belong to $S$.
\end{prop}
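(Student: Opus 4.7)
The plan is to establish the two implications separately, with the reverse direction being essentially definitional and the forward direction requiring a reduction to an affinely independent set of vertices.

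For the reverse implication, suppose $x$ lies in a finite-dimensional simplex $\Delta$ whose vertices $v_0,\ldots,v_k$ belong to $S$. By the very definition of a simplex, $x$ can be written as $x=\sum_{i=0}^k \lambda_i v_i$ with $\lambda_i\ge 0$ and $\sum_i\lambda_i=1$. Since each $v_i\in S$, this is a finite convex combination of elements of $S$, so $x\in\mathrm{conv}(S)$. This part is immediate.

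For the forward implication, I would start from the standard definition of $\mathrm{conv}(S)$ as the set of finite convex combinations of points of $S$: if $x\in\mathrm{conv}(S)$, then there exist $s_1,\ldots,s_n\in S$ and scalars $\mu_1,\ldots,\mu_n\ge 0$ with $\sum_i\mu_i=1$ such that $x=\sum_i\mu_i s_i$. The key difficulty is that the $s_i$ need not be affinely independent, so $\{s_1,\ldots,s_n\}$ does not yet give the vertex set of a genuine simplex. The plan is to apply a Carathéodory-style reduction: if the $s_i$ are affinely dependent, there exist scalars $\alpha_i$, not all zero, with $\sum_i\alpha_i=0$ and $\sum_i\alpha_i s_i=0$, and by perturbing the coefficients $\mu_i\mapsto \mu_i-t\alpha_i$ and choosing $t$ so that some coefficient vanishes while the others remain nonnegative, I eliminate one of the $s_i$ from the expression for $x$. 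Iterating this procedure yields an affinely independent subcollection $\{s_{i_0},\ldots,s_{i_k}\}\subseteq S$ of which $x$ is a convex combination. Its convex hull is then a finite-dimensional simplex $\Delta$ with vertices in $S$ that contains $x$.

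The main obstacle is verifying that the reduction step is legitimate in the ambient linear space $\mathcal{L}$ (which need not be finite dimensional): affine dependence is a statement purely about the finitely many points actually appearing in the expression, so the argument takes place inside the finite-dimensional affine subspace spanned by $s_1,\ldots,s_n$, where the Carathéodory reduction is valid and yields the required simplex. Once affine independence is achieved, the resulting simplex is automatically finite dimensional (its dimension is one less than the number of vertices), completing the proof.
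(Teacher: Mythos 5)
Your argument is correct. Note, however, that the paper does not actually prove this proposition: it is stated as a known fact with a citation to Valentine's \emph{Convex Sets} (reference \cite{Convexsets}), so there is no in-paper proof to compare against. Your two directions are the standard ones. The reverse implication is indeed immediate from the definition of a simplex as the convex hull of affinely independent vertices. For the forward implication, your Carath\'eodory-style reduction is sound: an affine dependence $\sum_i\alpha_i s_i=0$ with $\sum_i\alpha_i=0$ and not all $\alpha_i$ zero forces some $\alpha_i>0$, so taking $t=\min\{\mu_i/\alpha_i:\alpha_i>0\}$ keeps all coefficients nonnegative, preserves the convex combination representing $x$, and kills at least one term; iterating terminates in an affinely independent subfamily whose convex hull is the desired finite-dimensional simplex. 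You also correctly dispose of the only point where infinite dimensionality of $\mathcal{L}$ could cause worry, by observing that the whole reduction lives inside the finite-dimensional affine span of the finitely many points appearing in the representation of $x$. The one thing worth making explicit is the definition of simplex you are matching (convex hull of affinely independent points), since that is exactly what makes the reduction step necessary rather than optional; with that said, the proof is complete and self-contained, which is arguably more than the paper offers.
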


\noindent From equation (\ref{e:separablestates}) and definition \ref{d:convex
tensor product} it should be clear that

\begin{equation}
\Lambda(\mathcal{C}_{1},\mathcal{C}_{2})=\mathcal{S}(\mathcal{H})
\end{equation}

\noindent Definition \ref{d:convex tensor product} also implies that for all
$C_{1}\subseteq\mathcal{C}_{1}$ and
$C_{2}\subseteq\mathcal{C}_{2}$:

\begin{equation}
\Lambda(C_{1},C_{2})=\overline{C_{1}\widetilde{\otimes}\,C_{2}}
\end{equation}

\begin{prop}
Let $\rho=\rho_{1}\otimes\rho_{2}$, with
$\rho_{1}\in\mathcal{C}_{1}$ and $\rho_{2}\in\mathcal{C}_{2}$.
Then $\{\rho\}=\Lambda(\{\rho_{1}\},\{\rho_{2}\})$ with
$\{\rho_1\}\in\mathcal{L}_{C1}$, $\{\rho_2\}\in\mathcal{L}_{C2}$
and $\{\rho\}\in\mathcal{C}$.
\end{prop}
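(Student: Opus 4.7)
The plan is to unwind the definitions of $\Lambda$, the convex hull, and the topological closure on singleton inputs, and then verify the claimed memberships. This is essentially a direct computation, so I will proceed through the three layers of the definition in order.

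First, I would compute $\{\rho_1\} \otimes \{\rho_2\}$ from the preceding definition. Since both sets are singletons, the product set is just $\{\rho_1 \otimes \rho_2\} = \{\rho\}$. Next, I would take the convex hull: $\operatorname{conv}(\{\rho\}) = \{\rho\}$, because any single point is trivially convex (any convex combination of $\rho$ with itself yields $\rho$). Finally, I would take the closure in the trace-norm topology on $\mathcal{B}_1(\mathcal{H})$: singletons are closed in this Hausdorff topology, so $\overline{\{\rho\}} = \{\rho\}$. Chaining these three equalities gives $\Lambda(\{\rho_1\},\{\rho_2\}) = \{\rho\}$, as claimed.

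It remains to verify the membership assertions, which amount to sanity checks. For $\{\rho_i\} \in \mathcal{L}_{\mathcal{C}_i}$ ($i=1,2$), I would note that $\rho_i \in \mathcal{C}_i$ by hypothesis and that any singleton is a convex subset, so $\{\rho_i\}$ lies in $\mathcal{L}_{\mathcal{C}_i}$ by definition. For $\rho = \rho_1 \otimes \rho_2 \in \mathcal{C}$, I would invoke the standard facts that the tensor product of positive operators is positive, of Hermitian operators is Hermitian, and that $\operatorname{tr}(\rho_1 \otimes \rho_2) = \operatorname{tr}(\rho_1)\operatorname{tr}(\rho_2) = 1 \cdot 1 = 1$, so $\rho$ satisfies the defining conditions of $\mathcal{C}$.

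Honestly, there is no real obstacle here: every step collapses because we start from singletons. The only piece that is not immediate from set-theoretic unwinding is the verification that $\rho_1 \otimes \rho_2$ is itself a valid density operator on $\mathcal{H}_1 \otimes \mathcal{H}_2$, and this is a standard property of the tensor product of trace-class positive operators. The proposition therefore serves mainly as a consistency check: it confirms that $\Lambda$ sends atomic inputs (singletons of product states) to the expected atomic output, so that the map $\Lambda$ behaves correctly on the ``base case'' of product states before being used more substantively in the subsequent development on separable states.
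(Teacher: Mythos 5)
Your proposal is correct and follows essentially the same route as the paper: a direct unwinding of the definition of $\Lambda$ on singleton inputs, reducing to the fact that the convex hull (and closure) of a one-point set is itself. You are in fact slightly more thorough than the paper's one-line computation, which omits the closure step and the verification that $\rho_{1}\otimes\rho_{2}\in\mathcal{C}$; these additions are harmless and correct.
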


\begin{proof}
We already know that atoms are special elements of  lattices.
Thus,

\begin{equation}
\Lambda(\{\rho_{1}\},\{\rho_{2}\})
=conv(\{\rho_{1}\otimes\rho_{2}\})=\{\rho_{1}\otimes\rho_{2}\}=\{\rho\}
\end{equation}

\end{proof}

\begin{prop}
Let $\rho\in\mathcal{S(\mathcal{H})}$, the set of separable
states. Then, there exists $C\in\mathcal{L}_{\mathcal{C}}$,
$C_{1}\in\mathcal{L}_{\mathcal{C}_{1}}$ and
$C_{2}\in\mathcal{L}_{C_{2}}$ such that
$\rho\in C=\Lambda(C_{1},C_{2})$.
\end{prop}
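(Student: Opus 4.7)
The plan is to observe that the statement admits an essentially immediate proof by taking the maximal possible choices $C_{1}=\mathcal{C}_{1}$ and $C_{2}=\mathcal{C}_{2}$, and then to remark that a more informative proof gives ``tailored'' witnesses built from any separable decomposition of $\rho$.

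For the direct approach I would set $C_{1}:=\mathcal{C}_{1}$ and $C_{2}:=\mathcal{C}_{2}$, which lie in $\mathcal{L}_{\mathcal{C}_{1}}$ and $\mathcal{L}_{\mathcal{C}_{2}}$ respectively since the full state spaces are trivially convex. Applying the map $\Lambda$ gives $\Lambda(\mathcal{C}_{1},\mathcal{C}_{2})=\overline{\mathrm{conv}(\mathcal{C}_{1}\otimes\mathcal{C}_{2})}=\overline{\mathcal{C}_{1}\widetilde{\otimes}\,\mathcal{C}_{2}}$, which, by equation (\ref{e:separablestates}) and the identity $\Lambda(C_{1},C_{2})=\overline{C_{1}\widetilde{\otimes}\,C_{2}}$ that the authors record immediately before the proposition, equals $\mathcal{S}(\mathcal{H})$. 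Setting $C:=\mathcal{S}(\mathcal{H})$, which is convex and hence belongs to $\mathcal{L}_{\mathcal{C}}$, the hypothesis $\rho\in\mathcal{S}(\mathcal{H})$ yields $\rho\in C=\Lambda(C_{1},C_{2})$, which is exactly what we need.

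If one wishes to witness $\rho$ by strictly smaller convex sets, the plan is to use the definition of separability to write $\rho$ as a norm-limit of convex combinations $\rho^{(n)}=\sum_{i,j}\lambda^{(n)}_{ij}\,\sigma^{(n)}_{i}\otimes\tau^{(n)}_{j}$ with $\sigma^{(n)}_{i}\in\mathcal{C}_{1}$ and $\tau^{(n)}_{j}\in\mathcal{C}_{2}$. Then I would define $C_{1}$ as the closed convex hull of the collection $\{\sigma^{(n)}_{i}\}_{n,i}$ inside $\mathcal{C}_{1}$, and $C_{2}$ analogously from $\{\tau^{(n)}_{j}\}$, so that each $C_{k}\in\mathcal{L}_{\mathcal{C}_{k}}$. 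By construction every $\rho^{(n)}$ lies in $\mathrm{conv}(C_{1}\otimes C_{2})$, and then taking the norm closure, which is precisely the definition of $\Lambda(C_{1},C_{2})$, captures the limit $\rho$, giving $\rho\in C:=\Lambda(C_{1},C_{2})\in\mathcal{L}_{\mathcal{C}}$.

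The only point requiring a little care, and thus the main potential obstacle, is the infinite-dimensional subtlety the authors flag in equation (\ref{e:convexity}) and in their definition of $\mathcal{S}(\mathcal{H})$: the convex combinations approximating $\rho$ need not terminate, so one must invoke the trace-norm closure in the definition of $\Lambda$ and in the construction of $C_{1},C_{2}$ in order to guarantee that the limit lands in $\Lambda(C_{1},C_{2})$. For the coarse choice $C_{k}=\mathcal{C}_{k}$ this is automatic, which is why the one-line proof is the cleanest route to the stated existence claim; the refined version is worth mentioning because it foreshadows the more structural ``localization'' results of the subsequent subsection.
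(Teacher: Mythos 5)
Your proposal is correct, and your primary route is genuinely more elementary than the paper's. You observe that the proposition only asserts existence, so the maximal choice $C_{1}=\mathcal{C}_{1}$, $C_{2}=\mathcal{C}_{2}$, $C=\Lambda(\mathcal{C}_{1},\mathcal{C}_{2})=\mathcal{S}(\mathcal{H})$ settles the claim in one line using the identity the authors record immediately before the statement; this is logically airtight but carries no information beyond the hypothesis $\rho\in\mathcal{S}(\mathcal{H})$ itself. The paper instead does exactly what you sketch in your second paragraph: it takes a sequence $\rho_{n}\to\rho$ with each $\rho_{n}=\sum_{i}\lambda_{i}\phi_{i}^{n}\otimes\psi_{i}^{n}$, sets $C_{1}=conv(\{\phi_{i}^{n}\}_{i,n})$ and $C_{2}=conv(\{\psi_{i}^{n}\}_{i,n})$, and uses the closedness of $\Lambda(C_{1},C_{2})$ to capture the limit. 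What the paper's (and your refined) construction buys is a \emph{tailored} witness $C$ built from the decomposition of $\rho$, which is the structure actually exploited later in the ``limit point case'' of the invariance result for $\Lambda\circ\tau$; your coarse choice would not serve that later purpose, but it is the cleanest proof of the existence statement as literally posed. One minor remark on your refined version: taking the \emph{closed} convex hull for $C_{1},C_{2}$ is harmless but unnecessary, since $\mathcal{L}_{\mathcal{C}_{i}}$ admits arbitrary (not only closed) convex subsets and the closure needed to capture $\rho$ is already built into the definition of $\Lambda$.
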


\begin{proof}
Let $\{\rho_n\}_{n=1}^\infty\subseteq\mathcal{S(\mathcal{H})}$ be a sequence in the interior of $\mathcal{S(\mathcal{H})}$
such that $\rho_n\rightarrow\rho$, then
$\rho_n=\sum_{i}\lambda_{i}\phi^n_{i}\otimes\psi^n_{i}$, with
$\sum_{i}\lambda_{i}=1$ and $\lambda_{i}\geq 0$. Consider the
convex sets:

$$C_{1}=conv(\{\phi_{i}^{n}\}_{i,n})\in\mathcal{L_C}_{1},\quad
C_{2}=conv(\{\psi_{i}^{n}\}_{i,n})\in\mathcal{L_C}_{2},\quad
C=\Lambda(C_{1},C_{2})\in\mathcal{L_C}.$$

\noindent Clearly, $\phi_{i}^{n}\otimes\psi_{i}^{n}\in
C_{1}\otimes C_{2}$, and then $\rho_n\in C$ for all
$n\in\mathcal{N}$. Given that $C$ is closed, we have $\rho\in C$.
\end{proof}

\subsection{Projections Onto $\mathcal{L}_{\mathcal{C}_{1}}$ and $\mathcal{L}_{\mathcal{C}_{2}}$ (Going
Down)}\label{s:projectionsc}

Let us now study the projections onto
$\mathcal{L}_{\mathcal{C}_{1}}$ and $\mathcal{L}_{\mathcal{C}_{2}}$.
In the next proposition we will see that they are well defined.
Using the partial trace maps we can construct the induced
projections:

\begin{subequations}
\begin{equation}
\tau_{i}:\mathcal{L}_{\mathcal{C}}\longrightarrow
\mathcal{L}_{\mathcal{C}_{i}}
\end{equation}

\begin{equation}
C\mapsto \mbox{tr}_{i}( C )
\end{equation}
\end{subequations}

\noindent Then we can define the product map

\begin{subequations}
\begin{equation}
\tau:\mathcal{L}_{\mathcal{C}}\longrightarrow\mathcal{L}_{\mathcal{C}_{1}}\times\mathcal{L}_{\mathcal{C}_{2}}
\end{equation}

\begin{equation}
C\mapsto(\tau_{1}(C),\tau_{2}(C))
\end{equation}
\end{subequations}

\nd We use the same notation for $\tau$ and $\tau_{i}$ (though
they have different domains) as in \cite{extendedql} and section
\ref{s:New language}, and this should not introduce any
difficulty. We can prove the following about the image of
$\tau_{i}$.

\begin{prop}\label{lastausonsurjective}
The maps $\tau_{i}$ preserve the convex structure, i.e., they map
convex sets into convex sets.
\end{prop}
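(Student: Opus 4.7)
The statement reduces to a direct application of the linearity of the partial trace, so my plan is quite short. First I would recall, from the discussion around equation (\ref{e:partialtrace}), that the partial trace $\mathrm{tr}_{i}:\mathcal{C}\to\mathcal{C}_{j}$ is induced (via Gleason's theorem) by a linear functional on $\mathcal{A}(\mathcal{H})$, and hence is itself an $\mathbb{R}$-linear map when restricted to the real subspace where it acts. This linearity is the only ingredient I need.

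Next, to show that $\tau_{i}(C)=\mathrm{tr}_{i}(C)$ is convex whenever $C\in\mathcal{L}_{\mathcal{C}}$, I would take two arbitrary points $\sigma_{1},\sigma_{2}\in\tau_{i}(C)$ and a scalar $\lambda\in[0,1]$. By definition of the image there exist $\rho_{1},\rho_{2}\in C$ with $\mathrm{tr}_{i}(\rho_{k})=\sigma_{k}$. Since $C$ is convex, $\rho_{\lambda}:=\lambda\rho_{1}+(1-\lambda)\rho_{2}\in C$. Then, using linearity of partial trace,
\begin{equation}
\mathrm{tr}_{i}(\rho_{\lambda})=\lambda\,\mathrm{tr}_{i}(\rho_{1})+(1-\lambda)\,\mathrm{tr}_{i}(\rho_{2})=\lambda\sigma_{1}+(1-\lambda)\sigma_{2},
\end{equation}
which exhibits $\lambda\sigma_{1}+(1-\lambda)\sigma_{2}$ as an element of $\tau_{i}(C)$. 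Hence $\tau_{i}(C)$ is convex.

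To conclude that $\tau_{i}(C)$ really is an element of $\mathcal{L}_{\mathcal{C}_{j}}$, I would invoke the fact recalled after equation (\ref{e:partialtrace}) that the partial trace maps $\mathcal{C}$ onto $\mathcal{C}_{j}$, so $\tau_{i}(C)\subseteq \mathcal{C}_{j}$ automatically. Combined with the convexity just shown, this gives $\tau_{i}(C)\in\mathcal{L}_{\mathcal{C}_{j}}$, and the analogous statement for $\tau$ follows immediately by taking the product map componentwise.

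There is no real obstacle: the only thing to be careful about is not to confuse the symbol $\tau_{i}$ used here (acting on the convex lattice $\mathcal{L}_{\mathcal{C}}$) with the $\tau_{i}$ of Section \ref{s:New language} (acting on $\mathcal{L}$), since in the latter one needs to take closures and linear spans before applying $\mathrm{tr}_{i}$, while here the bare image suffices precisely because $\mathcal{L}_{\mathcal{C}}$ imposes no closedness requirement.
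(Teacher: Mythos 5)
Your proof is correct and follows essentially the same route as the paper's: take a convex combination of two points in the image, lift them to preimages in $C$, use convexity of $C$ and linearity of the partial trace to exhibit the combination as the image of an element of $C$. The extra remarks on surjectivity onto $\mathcal{C}_{j}$ and on distinguishing this $\tau_{i}$ from the one acting on $\mathcal{L}$ are harmless additions but not needed for the statement.
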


\begin{proof}
Let $C\subseteq \mathcal{C}$ be a convex set. Let $C_{1}$ be the
image of $C$ under $\tau_{1}$ (a similar argument holds for
$\tau_{2}$). Let us show that $C_{1}$ is convex. Let $\rho_{1}$
and $\rho'_{1}$ be elements of $C_{1}$. Consider
$\sigma_{1}=\alpha\rho_{1}+(1-\alpha)\rho'_{1}$, with
$0\leq\alpha\leq 1$. Then, there exists $\rho,\rho'\in\mathcal{C}$
such that:

\begin{equation}
\sigma_{1}=\alpha\mbox{tr}_1(\rho)+(1-\alpha)\mbox{tr}_1(\rho')=\mbox{tr}_{1}(\alpha\rho+(1-\alpha)\rho')
\end{equation}

\noindent where we have used the linearity of trace. Because of
convexity of $C$, $\sigma:=\alpha\rho+(1-\alpha)\rho'\in C$, and so,
$\sigma_{1}=\mbox{tr}_{1}(\sigma)\in C_{1}$.
\end{proof}

\begin{prop}\label{lastausonsurjective1}
The functions $\tau_{i}$ are surjective and preserve the
$\vee$-operation. They are not injective.
\end{prop}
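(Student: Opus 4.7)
The plan is to handle the three claims (surjective, $\vee$-preserving, not injective) separately, each via a short direct argument, leveraging that the partial trace $\mbox{tr}_i:\mathcal{C}\rightarrow\mathcal{C}_j$ is linear, continuous, and onto (as recalled earlier from the Clifton--Halvorson references) and that $\tau_i$ is just its set-theoretic extension to convex subsets.

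For surjectivity, I would exploit ``padding'' with a fixed reference state in the traced-out factor. Fix once and for all an arbitrary $\sigma \in \mathcal{C}_{\bar i}$ (the other subsystem). Given any $D \in \mathcal{L}_{\mathcal{C}_i}$, set $C := \{\rho\otimes\sigma \,|\, \rho\in D\}$, ordering tensor factors so that $\mbox{tr}_{\bar i}(\rho\otimes\sigma)=\rho$. Convexity of $C$ follows from convexity of $D$ and bilinearity of $\otimes$: $\alpha(\rho\otimes\sigma)+(1-\alpha)(\rho'\otimes\sigma)=(\alpha\rho+(1-\alpha)\rho')\otimes\sigma$. Hence $C\in\mathcal{L}_{\mathcal{C}}$, and by construction $\tau_i(C)=D$. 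This shows $\tau_i$ is onto.

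For preservation of $\vee$, I would simply unfold the definitions and use linearity of the partial trace. By Proposition \ref{lastausonsurjective} each $\tau_i(C)$ is convex, so $\tau_i(C_1)\vee\tau_i(C_2)=\mathrm{conv}(\tau_i(C_1)\cup\tau_i(C_2))$, while $C_1\vee C_2=\mathrm{conv}(C_1\cup C_2)$. The inclusion $\tau_i(C_1\vee C_2)\supseteq \tau_i(C_1)\vee\tau_i(C_2)$ is immediate because $\tau_i(C_1)\cup\tau_i(C_2)\subseteq\tau_i(C_1\vee C_2)$ and the right-hand side is convex. For the reverse inclusion, any element of $C_1\vee C_2$ is of the form $\lambda\rho_1+(1-\lambda)\rho_2$ with $\rho_k\in C_k$; applying $\mbox{tr}_i$ and using its linearity yields $\lambda\,\mbox{tr}_i(\rho_1)+(1-\lambda)\,\mbox{tr}_i(\rho_2)\in\mathrm{conv}(\tau_i(C_1)\cup\tau_i(C_2))$. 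Thus $\tau_i(C_1\vee C_2)=\tau_i(C_1)\vee\tau_i(C_2)$.

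Failure of injectivity is almost a one-liner via a counterexample. Choose any entangled state $\rho\in\mathcal{E}(\mathcal{H})$ and form the product state $\rho':=\mbox{tr}_2(\rho)\otimes\mbox{tr}_1(\rho)\in\mathcal{S}(\mathcal{H})$. Then the singletons $\{\rho\}$ and $\{\rho'\}$ are two distinct elements of $\mathcal{L}_{\mathcal{C}}$, yet by construction $\tau_i(\{\rho\})=\tau_i(\{\rho'\})$ for $i=1,2$. This shows $\tau_i$ (and hence the product map $\tau$) is not injective.

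The only mildly delicate point I anticipate is the bookkeeping of which subsystem is traced away versus retained (i.e.\ whether $\tau_i$ retains the $i$-th factor or removes it); once that convention is pinned down, all three parts are direct. There is no serious obstacle, since convexity, continuity, linearity, and surjectivity of partial traces have already been established earlier in the paper.
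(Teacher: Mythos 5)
Your proof is correct and follows essentially the same route as the paper: surjectivity via padding a given $D\in\mathcal{L}_{\mathcal{C}_i}$ with a fixed state of the other factor (with convexity of the padded set checked exactly as you do), and $\vee$-preservation by pushing convex combinations through the linear partial trace in both directions. The only immaterial difference is the non-injectivity witness: the paper notes that two distinct padding states $\rho_2\neq\rho_2'$ give distinct preimages $C_1\otimes\rho_2$ and $C_1\otimes\rho_2'$ of the same $C_1$, whereas you compare an entangled $\rho$ with the product of its marginals $\mbox{tr}_2(\rho)\otimes\mbox{tr}_1(\rho)$ --- both are valid one-line counterexamples.
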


\begin{proof}
Take the convex set $C_{1}\in\mathcal{L}_{\mathcal{C}_{1}}$.
Choose an arbitrary element of $\mathcal{C}_{2}$, say $\rho_{2}$.
Now consider the following element of $\mathcal{L}_{\mathcal{C}}$

\begin{equation}
C=C_{1}\otimes\rho_{2}
\end{equation}

\noindent $C$ is convex, and so belongs to
$\mathcal{L}_{\mathcal{C}}$, because if
$\rho\otimes\rho_{2},\sigma\otimes\rho_{2}\in C$, then any convex
combination
$\alpha\rho\otimes\rho_{2}+(1-\alpha)\sigma\otimes\rho_{2}=(\alpha\rho+(1-\alpha)\sigma)\otimes\rho_{2}\in
C$ (where we have used the convexity of $C_{1}$). It is clear that
$\tau_{1}(C)=C_{1}$, because if $\rho_{1}\in C_{1}$, then
$\mbox{tr}_1(\rho_{1}\otimes\rho_{2})=\rho_{1}$. So, $\tau_{1}$ is
surjective. On the other hand, the arbitrariness of $\rho_{2}$
implies that it is not injective. An analogous argument follows for $\tau_2$.\\
Let us see that $\tau_i$ preserves the $\vee$-operation. Let $C$
and $C'$ be convex subsets of $\mathcal{C}$. We must compute
$\mbox{tr}_{2}(C\vee C'))=\mbox{tr}_{2}(conv(C,C'))$. We ought to
show that this is the same as
$conv(\mbox{tr}_{2}(C),\mbox{tr}_{2}(C'))$. Take $x\in
conv(\mbox{tr}_{2}(C),\mbox{tr}_{2}(C'))$. Then
$x=\alpha\mbox{tr}_{2}(\rho)+(1-\alpha)\mbox{tr}_{2}(\rho')$, with
$\rho\in C$, $\rho'\in C'$ and $0\leq\alpha\leq 1$. Using the
linearity of trace, $x=\mbox{tr}_{2}(\alpha\rho+(1-\alpha)\rho')$.
$\alpha\rho+(1-\alpha)\rho'\in conv(C,C')$, and so,
$x\in\mbox{tr}_{2}(conv(C,C'))$. Hence we have

\begin{equation}
conv(\mbox{tr}_{2}(C),\mbox{tr}_{2}(C'))\subseteq\mbox{tr}_{2}(conv(C,C'))
\end{equation}

\noindent In order to prove the other inclusion, take
$x\in\mbox{tr}_{2}(conv(C,C'))$. Then,

\begin{equation}
x=\mbox{tr}_{2}(\alpha\rho+(1-\alpha)\rho')=\alpha\mbox{tr}_{2}(\rho)+(1-\alpha)\mbox{tr}_{2}(\rho')
\end{equation}

\noindent with $\rho\in C_{1}$ and $\rho'\in C'$. Note that
$\mbox{tr}_{2}(\rho)\in\mbox{tr}_{2}(C)$ and
$\mbox{tr}_{2}(\rho')\in\mbox{tr}_{2}(C')$. This proves that

$$\mbox{tr}_{2}(conv(C,C'))\subseteq conv(\mbox{tr}_{2}(C),\mbox{tr}_{2}(C'))$$

\end{proof}

\noindent Let us now consider the $\wedge$-operation. If
$x\in\tau_i(C\wedge C')=\tau_i(C\cap C')$ then $x=\tau_i(\rho)$
with $\rho\in C\cap C'$. But, if $\rho\in C$, then
$x=\tau_i(\rho)\in\mbox{tr}_i(C)$. As $\rho\in C'$ as well, a
similar argument shows that $x=\tau_i(\rho)\in\mbox{tr}_i(C')$.
Then,  $x\in\tau_i(C)\cap \tau_i(C')$ and

\begin{equation}
\tau_i(C\cap C')\subseteq\tau_i(C)\cap \tau_i(C'),
\end{equation}
\noindent which is tantamount to

\begin{equation}
\tau_i(C\wedge C')\leq\tau_i(C)\wedge\tau_i(C').
\end{equation}

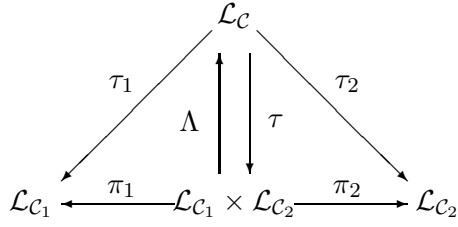
\begin{figure}\label{f:maps2}
\begin{center}
\unitlength=1mm
\begin{picture}(5,5)(0,0)
\put(-3,23){\vector(-1,-1){20}} \put(3,23){\vector(1,-1){20}}
\put(-2,4){\vector(0,2){16}} \put(2,20){\vector(0,-2){16}}
\put(8,0){\vector(3,0){15}} \put(-8,0){\vector(-3,0){15}}
\put(0,25){\makebox(0,0){$\mathcal{L}_{\mathcal{C}}$}}
\put(-27,0){\makebox(0,0){${\mathcal{L}_{\mathcal{C}_{1}}}$}}
\put(27,0){\makebox(0,0){${\mathcal{L}_{\mathcal{C}_{2}}}$}}
\put(0,0){\makebox(0,0){${\mathcal{L}_{\mathcal{C}_{1}}\times\mathcal{L}_{\mathcal{C}_{2}}}$}}
\put(-1,11){\makebox(-10,0){$\Lambda$}}
\put(-1,11){\makebox(13,0){$\tau$}}
\put(-15,16){\makebox(0,0){$\tau_{1}$}}
\put(15,16){\makebox(0,0){$\tau_{2}$}}
\put(-15,2){\makebox(0,0){$\pi_{1}$}}
\put(15,2){\makebox(0,0){$\pi_{2}$}}
\end{picture}
\caption{The different maps between
$\mathcal{L}_{\mathcal{C}_{1}}$, $\mathcal{L}_{\mathcal{C}_{2}}$,
${\mathcal{L}_{\mathcal{C}_{1}}\times\mathcal{L}_{\mathcal{C}_{2}}}$,
and $\mathcal{L}_{\mathcal{C}}$}
\end{center}
\end{figure}

\noindent These sets are not,  in general, equal. The following
example illustrates the assertion. Take
$\{\rho_{1}\otimes\rho_{2}\}\in\mathcal{L}$ and
$\{\rho_{1}\otimes\rho'_{2}\}\in\mathcal{L}$, with
$\rho'\neq\rho$. It is clear that
$\{\rho_{1}\otimes\rho_{2}\}\wedge\{\rho_{1}\otimes\rho'_{2}\}=\mathbf{0}$
and so,
$\tau_{1}(\{\rho_{1}\otimes\rho_{2}\}\wedge\{\rho_{1}\otimes\rho'_{2}\})=\mathbf{0}$.
On the other hand,
$\tau_{1}(\{\rho_{1}\otimes\rho_{2}\})=\{\rho_{1}\}=\tau_{1}(\{\rho_{1}\otimes\rho'_{2}\})$,
and  then
$\tau_{1}(\{\rho_{1}\otimes\rho_{2}\})\wedge\tau_{1}(\{\rho_{1}\otimes\rho'_{2}\})=\{\rho_{1}\}$.
A similar reasoning holds for the $\neg$-operation.

\subsection{Geometrical Characterization of
Entanglement}\label{s:entanglement}

\nd We have shown that it is possible to extend
$\mathcal{L}_{v\mathcal{N}}$ in order to deal with statistical
mixtures and that $\mathcal{L}$ and $\mathcal{L}_{\mathcal{C}}$
are possible extensions. It would be interesting to search for a
characterization of entanglement within this framework. Let us see
first what happens with the functions $\Lambda\circ\tau$ and
$\tau\circ\Lambda$. We have:

\begin{prop}\label{subir bajar}
$\tau\circ\Lambda(C_1,C_2)=(C_1,C_2)$ for every closed convex sets
$C_1\subseteq\mathcal{C}_1$ and $C_2\subseteq\mathcal{C}_2$.
\end{prop}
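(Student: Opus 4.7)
The plan is to verify the two component equalities $\tau_1(\Lambda(C_1,C_2))=C_1$ and $\tau_2(\Lambda(C_1,C_2))=C_2$ separately; by symmetry it suffices to handle the first. I would split this into the two set-theoretic inclusions.

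For the inclusion $C_1\subseteq \tau_1(\Lambda(C_1,C_2))$, I would proceed by a direct construction: pick any $\rho_1\in C_1$ and any fixed $\rho_2\in C_2$ (assuming $C_2\neq\emptyset$, which is automatic in the cases of interest). Then $\rho_1\otimes\rho_2$ lies in $C_1\otimes C_2$, hence in $\Lambda(C_1,C_2)=\overline{\mathrm{conv}(C_1\otimes C_2)}$. Since $\mathrm{tr}_2(\rho_1\otimes\rho_2)=\rho_1$, we obtain $\rho_1\in\tau_1(\Lambda(C_1,C_2))$. This is the easy half.

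For the reverse inclusion $\tau_1(\Lambda(C_1,C_2))\subseteq C_1$, I would work in two stages. First, use linearity of the partial trace together with convexity of $C_1$ to conclude that for any finite convex combination $\sigma=\sum_k\lambda_k\,\rho_k^{(1)}\otimes\rho_k^{(2)}\in\mathrm{conv}(C_1\otimes C_2)$, we have
\begin{equation}
\tau_1(\sigma)=\sum_k\lambda_k\,\rho_k^{(1)}\in C_1,
\end{equation}
so $\tau_1(\mathrm{conv}(C_1\otimes C_2))\subseteq C_1$. Second, pass to the closure: given $\sigma\in\overline{\mathrm{conv}(C_1\otimes C_2)}$, take a sequence $\sigma_n\to\sigma$ in $\mathrm{conv}(C_1\otimes C_2)$; by continuity of the partial trace $\mathrm{tr}_2$ (quoted earlier in the paper from Clifton--Halvorson) we have $\tau_1(\sigma_n)\to\tau_1(\sigma)$, and since each $\tau_1(\sigma_n)\in C_1$ and $C_1$ is closed by hypothesis, the limit $\tau_1(\sigma)$ lies in $C_1$.

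The only real subtlety — and what I would flag as the main obstacle — is the interaction of closure with the partial trace: one must justify that $\mathrm{tr}_2$ is continuous in the topology under which $\Lambda(C_1,C_2)$ is closed (the trace-norm topology), and one must use that $C_1$ is closed in the same topology so that the limit argument keeps us inside $C_1$. Both facts are available from the setup in Section~\ref{s:introduction to QL}, so the proof reduces to stitching them together with the nonemptiness of $C_2$ (and $C_1$) for the opposite inclusion. An entirely symmetric argument gives $\tau_2(\Lambda(C_1,C_2))=C_2$, whence $\tau\circ\Lambda(C_1,C_2)=(C_1,C_2)$.
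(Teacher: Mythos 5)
Your proof is correct and follows essentially the same route as the paper, which simply asserts the chain $\tau_i(\overline{\mathrm{conv}(C_1\otimes C_2)})=\mathrm{tr}_i(\overline{\mathrm{conv}(C_1\otimes C_2)})=\overline{C_i}=C_i$ without further justification. You have merely supplied the details the paper leaves implicit (linearity of the partial trace on finite convex combinations, its continuity for the passage to the closure, closedness and nonemptiness of the $C_i$), which is a welcome tightening rather than a different argument.
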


\begin{proof}
$$\tau_1(\Lambda(C_{1},C_{2}))=\tau_1(\overline{conv(C_{1}\otimes C_{2}}))=
\mbox{tr}_1(\overline{conv(C_{1}\otimes C_{2}}))=\overline{C_{1}}=C_1$$
$$\tau_2(\Lambda(C_{1},C_{2}))=\tau_2(\overline{conv(C_{1}\otimes C_{2}}))=
\mbox{tr}_2(\overline{conv(C_{1}\otimes
C_{2}}))=\overline{C_{2}}=C_2$$ Then,
$\tau(\Lambda(C_{1},C_{2}))=(C_{1},C_{2})$.
\end{proof}

\nd Again, as in \cite{extendedql}, if we take into account simple
physical considerations, $\Lambda\circ\tau$ is not the identity
function,  because when we take partial traces we face the risk of
losing information, that will not be recovered when we multiply
states. Thus we reach the same conclusion  as before
\cite{extendedql}: \emph{``going down and then going up is not the
same as going up and then going down''}. We depict w the pertinent
maps in Figure \ref{f:maps2}. How is this stuff related to
entanglement? If we restrict $\Lambda\circ\tau$ to the set of
product states, then it does reduce itself  to the identity
function. Indeed, if $\rho=\rho_{1}\otimes\rho_{2}$, then:

\begin{equation}
\Lambda\circ\tau(\{\rho\})=\{\rho\}.
\end{equation}

\noindent On the other hand, it should be clear that if $\rho$ is
an entangled state

\begin{equation}\label{entangled equation}
\Lambda\circ\tau(\{\rho\})\neq\{\rho\},
\end{equation}
\noindent because
$\Lambda\circ\tau(\{\rho\})=\{\mbox{tr}_{2}(\rho)\otimes\mbox{tr}_{1}(\rho)\}\neq\{\rho\}$
for any entangled state. This property can be regarded as a signpost
for entanglement. There are mixed states which are not product
states. Thus, entangled states are not the only ones satisfying
equation (\ref{entangled equation}). What is the condition satisfied
for a general mixed state? The following proposition summarizes the
preceding considerations.

\begin{prop}\label{subir bajar1}
If $\rho$ is a separable state, then there exists a convex set,
$S_{\rho}\subseteq\mathcal{S}(\mathcal{H})$ such that $\rho\in
S_{\rho}$ and $\Lambda\circ\tau(S_{\rho})=S_{\rho}$. More
generally, for a convex set $C\subseteq \mathcal{S}(\mathcal{H})$,
there exists a convex set $S_{C}\subseteq\mathcal{S}(\mathcal{H})$
such that $\Lambda\circ\tau(S_{C})=S_{C}$. For a product state, we
can choose $S_{\rho}=\{\rho\}$. If $\rho$ can be written as a
finite convex sum of product states, then the convex set
$S_{\rho}$ can be taken as a polytope. On the other hand, for any
$C\in\mathcal{L}_{\mathcal{C}}$ which has at least {\it one}
non-separable state,  there is NO convex set $S$ such that
$C\subseteq S$ and $\Lambda\circ\tau(S)=S$.
\end{prop}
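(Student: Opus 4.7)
The plan hinges on Proposition \ref{subir bajar}, which asserts $\tau\circ\Lambda(C_1,C_2)=(C_1,C_2)$ for closed convex $C_1,C_2$. Any set of the form $\Lambda(C_1,C_2)=\overline{\mathrm{conv}(C_1\otimes C_2)}$ is therefore automatically a fixed point of $\Lambda\circ\tau$, and it is by construction contained in $\mathcal{S}(\mathcal{H})$. The entire positive part of the proposition is an unfolding of this observation, while the impossibility clause says that no other kind of fixed set can exist.

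For the first assertion, since $\rho$ is separable I can write $\rho=\lim_n\rho_n$ in trace norm, with each approximant $\rho_n=\sum_i\lambda_i^{(n)}a_i^{(n)}\otimes b_i^{(n)}$ a finite convex combination of product states. Define
$$C_1:=\overline{\mathrm{conv}}\bigl(\{a_i^{(n)}\}_{i,n}\bigr)\in\mathcal{L}_{\mathcal{C}_1},\qquad C_2:=\overline{\mathrm{conv}}\bigl(\{b_i^{(n)}\}_{i,n}\bigr)\in\mathcal{L}_{\mathcal{C}_2},$$
and put $S_\rho:=\Lambda(C_1,C_2)$. Each $\rho_n$ lies in $\mathrm{conv}(C_1\otimes C_2)\subseteq S_\rho$, hence $\rho\in S_\rho$ by closedness. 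Proposition \ref{subir bajar} then gives $\tau(S_\rho)=(C_1,C_2)$, so $\Lambda\circ\tau(S_\rho)=S_\rho$. For the more general claim with a convex $C\subseteq\mathcal{S}(\mathcal{H})$, I run exactly the same construction but collect the component states appearing in approximants to every $\rho\in C$ into the defining sets for $C_1,C_2$. The product-state case $\rho=\rho_1\otimes\rho_2$ is trivial: taking $C_1=\{\rho_1\}$, $C_2=\{\rho_2\}$ yields $S_\rho=\{\rho\}$. In the finite-decomposition case $\rho=\sum_{i=1}^{n}\lambda_i a_i\otimes b_i$, both $C_1=\mathrm{conv}(\{a_i\})$ and $C_2=\mathrm{conv}(\{b_i\})$ are polytopes; a direct bilinearity argument shows $\overline{\mathrm{conv}(C_1\otimes C_2)}=\mathrm{conv}(\{a_i\otimes b_j\}_{i,j})$, which is again a polytope, so $S_\rho$ is a polytope as claimed.

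For the impossibility statement, suppose $C\in\mathcal{L}_{\mathcal{C}}$ contains a non-separable $\rho_0$ and that some convex $S$ satisfies $C\subseteq S$ and $\Lambda\circ\tau(S)=S$. Then
$$S=\Lambda(\tau_1(S),\tau_2(S))=\overline{\mathrm{conv}\bigl(\tau_1(S)\otimes\tau_2(S)\bigr)},$$
which displays $S$ as a closed convex hull of product states and hence $S\subseteq\mathcal{S}(\mathcal{H})$; but this contradicts $\rho_0\in C\subseteq S$ with $\rho_0\notin\mathcal{S}(\mathcal{H})$.

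The main technical point to treat carefully is the infinite-dimensional case, where a separable $\rho$ need not admit a finite convex decomposition into product states and is only obtained as a trace-norm limit of such. One must verify that the approximants $\rho_n$ indeed lie in $\mathrm{conv}(C_1\otimes C_2)$ and that taking the closure in the definition of $\Lambda$ is compatible with the continuity of $\tau_i$ (guaranteed by the partial-trace maps being continuous and onto, cf.\ the discussion after equation (\ref{e:partialtrace})); apart from this, everything reduces to an application of Proposition \ref{subir bajar}.
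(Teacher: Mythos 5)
Your proof is correct, and it is organized differently from the paper's. Your central move is to observe that, by Proposition \ref{subir bajar}, any set of the form $\Lambda(C_1,C_2)$ with $C_1,C_2$ closed and convex is \emph{automatically} a fixed point of $\Lambda\circ\tau$, so that the product, finite-combination and limit-point cases all collapse into one construction: take $C_1,C_2$ to be the closed convex hulls of the component states appearing in (approximants of) $\rho$, and set $S_\rho=\Lambda(C_1,C_2)$. The paper does not exploit Proposition \ref{subir bajar} in this way; instead it verifies the fixed-point identity by hand in each case --- for a finite combination it computes $\tau_1(S_\rho)$ and $\tau_2(S_\rho)$ explicitly and checks $\Lambda(\tau_1(S_\rho),\tau_2(S_\rho))=conv(\{\rho_k^A\otimes\rho_l^B\})=S_\rho$, and in the limit-point case it establishes the two inclusions $S_\rho\subseteq\Lambda\circ\tau(S_\rho)$ and $\Lambda\circ\tau(S_\rho)\subseteq S_\rho$ directly, invoking the continuity of the partial traces and of the tensor product. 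The underlying ingredients (bilinearity plus continuity) are the same, but your packaging is more economical and makes the logical dependence on the preceding proposition explicit, at the cost of hiding the concrete description of $\tau_i(S_\rho)$ that the paper uses to read off the polytope structure in the finite case (your separate bilinearity argument recovers this). For the ``general convex $C\subseteq\mathcal{S}(\mathcal{H})$'' clause, note that the even simpler choice $S_C=\mathcal{S}(\mathcal{H})=\Lambda(\mathcal{C}_1,\mathcal{C}_2)$ already works, which is essentially what the paper records when it shows $\Lambda\circ\tau(\mathcal{S}(\mathcal{H}))=\mathcal{S}(\mathcal{H})$. The impossibility clause is handled identically in both proofs: the image of $\Lambda\circ\tau$ always lies inside $\mathcal{S}(\mathcal{H})$, so no invariant set can contain a non-separable state.
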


\begin{proof}
\noindent \textit{Product case.} We have already seen above that
if $\rho$ is a product state, then
$\Lambda\circ\tau(\{\rho\})=\{\rho\}$, and so $S_{\rho}=\{\rho\}$.

\noindent \textit{Finite combination case.} If $\rho$ can be
written as a finite convex combination of product states, then
there exists $\rho^{A}_{k}\in \mathcal{C}_{1}$,
$\rho^{B}_{k}\in\mathcal{C}_{2}$ and $\alpha^{i}_{k}\geq 0,
\sum_{k=1}^{N}\alpha^{i}_{k}=1$ such that

\begin{equation}\label{e:separablerho}
\rho=\sum_{k=1}^{N}\alpha_{k}\rho^{A}_{k}\otimes\rho^{B}_{k}
\end{equation}

Define first
\begin{equation}
S_\rho=conv(\{\rho^{A}_{k}\otimes\rho^{B}_{l}\})
\end{equation}

\noindent $S_\rho$ is the closed set of all convex combinations of
products of the elements  appearing in the decomposition of
$\rho$. It should be clear that $\rho\in S_\rho$. Let us compute
$\Lambda\circ\tau(S_\rho)$.
\begin{equation}\label{e:cuenta}
\mbox{tr}_{1}(\sigma)=\sum_{k=1}^{N}(\sum_{l=1}^{N}\lambda_{kl})
\rho^{A}_{k}=\sum_{k=1}^{N}\mu_{k} \rho^{A}_{k},\quad \mu_{k}:=\sum_{l=1}^{N}\lambda_{kl}.
\end{equation}

\noindent In an analogous way we may show that an element of
$\tau_{2}(S_\rho)$ has the form $\sum_{l=1}^{N}\nu_{l}
\rho^{A}_{l}$ with $\nu_{l}=\sum_{k=1}^{N}\lambda_{kl}$. Note that
$\sum_{k=1}^{N}\mu_{k}=\sum_{l=1}^{N}\nu_{l}=1$. In order to
compute $\Lambda(\tau_{1}(S_\rho),\tau_{2}(S_\rho))$ we must
construct the convex hull of the set
\begin{equation}
\tau_{1}(S_\rho)\otimes\tau_{2}(S_\rho)=\{\sigma_{1}\otimes\sigma_{2}|\sigma_{1}\in\tau_{1}(S_\rho),\sigma_{2}\in\tau_{2}(S_\rho)\}=
\{\sum_{k,l=1}^{N}\mu_{k}\nu_{l}\rho^{A}_{k}\otimes\rho^{B}_{l}\},
\end{equation}
\noindent and we conclude that
\begin{equation}\label{e:equality}
\Lambda\circ\tau(S_\rho)=conv(\{\sum_{kl=1}^{N}\mu_{k}\nu_{l}\rho^{A}_{k}\otimes\rho^{B}_{l}\})=conv(\{\rho^{A}_{k}\otimes\rho^{B}_{l}\})=S_\rho.
\end{equation}

\noindent It is apparent that $S_\rho$ is a polytope.

\vskip 3mm \noindent \textit{Limit point case.} There is still
another possibility. Namely if $\rho$ cannot be written as in
(\ref{e:separablerho}), but there exists
$\rho_{ik}^{A}\in\mathcal{C}_{1}$,
$\rho_{ik}^{B}\in\mathcal{C}_{2}$ and $\alpha_{ik}\geq 0,
\sum_{k=1}^{N_{i}}\alpha_{ik}=1$ such that
$\rho^{i}=\sum_{k=1}^{N_{i}}\alpha_{ik}\rho^{A}_{ik}\otimes\rho^{B}_{ik}$
converges to $\rho$ as $i$ goes to infinity. Consider the set of
all possible products of states which appear in the decomposition
of the $\rho^{i}$, namely

\begin{equation}
S_{0}:=\{\rho^{A}_{ik}\otimes\rho^{B}_{i'l}, \mbox{for
all}\,\,i,i',k,l\}
\end{equation}

\noindent and define the closure of its convex hull as

\begin{equation}\label{d:Minvariantset}
S_\rho:=\overline{conv(S_{0})}
\end{equation}

\noindent Remember that convex hull means only taking finite sums.
It is clear that $\rho\in S_\rho$ and that $S_\rho$ is convex by
construction (the closure of a convex set is also convex). Let us
see what happens when we apply $\Lambda\circ\tau$ to $S_\rho$,

\begin{equation}\label{e:lambdatauM}
\Lambda(\tau(S_\rho))=\overline{conv(\tau_{1}(S_\rho)\otimes\tau_{2}(S_\rho))}
\end{equation}

\noindent Any of the $\rho^{A}_{ik}$ belong to $\tau_{1}(S_\rho)$
(the same for $\rho^{B}_{i'k}$ and $\tau_{2}(S_\rho)$). Then, it
is clear that $S_{0}\subseteq
conv(\tau_{1}(S_\rho)\otimes\tau_{2}(S_\rho))$. As
$conv(\tau_{1}(S_\rho)\otimes\tau_{2}(S_\rho))$ is convex and the
closure of sets preserves the inclusion, then we have
$S_\rho\subseteq\Lambda\circ\tau(S_\rho))$ (look at Equation
(\ref{d:Minvariantset})). On the other hand, any element
$\rho_{1}$ of $\tau_{1}(S_\rho)$ can be written as a finite sum
$\rho_{1}=\sum\alpha_{ik}\rho^{A}_{ik}$ ($\sum\alpha_{ik}=1$,
$\alpha_{ik}\geq 0$) or as a limit of such finite sums (we are
using a property of partial traces $\mbox{tr}_{i}$: they
 are
continuous linear maps). The same happens for an element
$\rho_{2}\in\tau_{2}(S_\rho)$ (taking the tensor product of
density operators produces a continuous map). Then, any element of
$\tau_{1}(S_\rho)\otimes\tau_{2}(S_\rho)$ may be written as a
finite sum
$\sum\alpha_{ik}\beta_{i'l}\rho^{A}_{ik}\otimes\rho^{B}_{i'l}$ or
as a limit of such sums. This means that any element of
$\tau_{1}(S_\rho)\otimes\tau_{2}(S_\rho)$ is also an element of
$S_\rho$. As $S_\rho$ is convex and closed by construction, we
will have
$\Lambda\circ\tau(S_\rho)=\overline{conv(\tau_{1}(S_\rho)\otimes\tau_{2}(S_\rho))}\subseteq
S_\rho$, which proves that $\Lambda\circ\tau(S_\rho)=S_\rho$.\\
\vskip 3mm

\noindent The space of separable states $\mathcal{S}(\mathcal{H})$
is a convex set. Let us see that it is invariant under
$\Lambda\circ\tau$. First of all, we know that
$\mathcal{S}(\mathcal{H})$ is formed by the closure of all possible
convex combinations of products of the form
$\rho_{1}\otimes\rho_{2}$, with $\rho_{1}\in\mathcal{C}_{1}$ and
$\rho_{2}\in\mathcal{C}_{2}$. But each one of these tensor products,
$\Lambda\circ\tau(\{\rho_{1}\otimes\rho_{2}\})=\{\rho_{1}\otimes\rho_{2}\}$,
belongs to $\Lambda\circ\tau(\mathcal{S}(\mathcal{H}))$. Given that
$\Lambda\circ\tau(\mathcal{S}(\mathcal{H}))$ is a closed convex set,
we have
$\Lambda\circ\tau(\mathcal{S}(\mathcal{H}))\supseteq\mathcal{S}(\mathcal{H})$.
On the other hand we know that the image of $\Lambda\circ\tau$ is
always separable, so we can conclude that

\begin{equation}
\Lambda\circ\tau(\mathcal{S}(\mathcal{H}))=\mathcal{S}(\mathcal{H})
\end{equation}

\noindent Now, consider $C\in\mathcal{L}_{\mathcal{C}}$ such that
there exists $\rho\in C$, with $\rho$ nonseparable. Given that
$\Lambda\circ\tau(S)\subseteq \mathcal{S}(\mathcal{H})$ for all
$S\in\mathcal{L}_{\mathcal{C}}$, it could never happen that there
exists $S\in\mathcal{L}_{\mathcal{C}}$ such that $C\subseteq S$
and $\Lambda\circ\tau(S)=S$.

\end{proof}

\noindent From the last proposition, we conclude that there exists
an interesting  property which the convex subsets of separable
states satisfy, while convex subsets which include non-separable
states do not. This ``existence theorem" motivates the following
definition for the proposition $C\in\mathcal{L}_{\mathcal{C}}$:

\begin{definition}
$C\in\mathcal{L}_{\mathcal{C}}$ is a separable proposition if
there exists $S_{C}\in\mathcal{L}_{\mathcal{C}}$ such that
$\Lambda\circ\tau(S_{C})=S_{C}$ and $C\subseteq S_{C}$. Otherwise,
 it is a non-separable or entangled proposition. The definition
is equivalent to the statement
$C\subseteq\mathcal{S}(\mathcal{H})$.
\end{definition}

\noindent Another conclusion of proposition \ref{subir bajar1} is
that a density matrix $\rho$ is separable iff there exists a
convex set $S_{\rho}$ such that $\rho\in S_{\rho}$ and
$\Lambda\circ\tau(S_{\rho})=S_{\rho}$. Thus, proposition
\ref{subir bajar} also provides an entanglement criterium {\bf
which includes the
infinite dimensional case} (see also \cite{Holik-Plastino-2011a}):\\

\fbox{\parbox{6.0in}{
\begin{equation}\label{e:ourcriteria}
\rho\in\mathcal{S}(\mathcal{H}) \Longleftrightarrow\text{ there
exists a convex set }S_{\rho}\text{ with } \rho\in S_{\rho}\text{
such that }\Lambda\circ\tau(S_{\rho})=S_{\rho}.
\end{equation}
}}\vskip 3mm \nd

\subsection{An unifying generalization for the entanglement of mixed states}

\noindent In the last section we have introduced a new
separability criterium which is also valid for the infinite
dimensional case. Now we proceed to an important issue regarding
pure states (see also the discussions in
\cite{Holik-Plastino-2011a}). It is a well known fact that pure
states are separable, if and only if they are product states. This
means that $|\psi\rangle\langle\psi|$ will be separable if and
only if there exist $|\varphi_{1}\rangle$ and
$|\varphi_{2}\rangle$ such that
$|\psi\rangle=|\varphi_{1}\rangle\otimes|\varphi_{2}\rangle$. This
implies that the state $|\psi\rangle\langle\psi|$ is invariant
under the map

\begin{eqnarray}
\Omega:\mathcal{C}\longrightarrow\mathcal{C}\nonumber\\
\rho\mapsto\rho^{A}\otimes\rho^{B},
\end{eqnarray}
\noindent and this in turn means that

\begin{equation}\label{e:purecriteria}
|\psi\rangle\langle\psi|\in\mathcal{S}(\mathcal{H})\Longleftrightarrow\Omega(|\psi\rangle\langle\psi|)=|\psi\rangle\langle\psi|.
\end{equation}

\noindent Such simple separability criterium for the pure case is
unfortunately invalid for the mixed case. In what follows we show
that our separability criteria allows for an interesting unifying
generalization.

\vskip 3mm \nd First of all, notice that only product states
exhibit the property of being invariant under $\Omega$. Is there
any generalization of $\Omega$ and of the notion of product
states? Let us look in more detail to the  invariance under
$\Omega$-property. In mathematical terms, suppose that a state
$\rho$ satisfies

\begin{equation}
\Omega(\rho)=\rho.
\end{equation}

\noindent This is equivalent to stating that $\rho$ can be fully
recovered from its reduced states by using local operations. It is
easy to show that the function $\Lambda\circ\tau$ satisfies

\begin{equation}
\Lambda\circ\tau(\Lambda\circ\tau(C))=\Lambda\circ\tau(C),
\end{equation}
\noindent and this is equivalent to

\begin{equation}\label{e:lambdataucuadrado}
(\Lambda\circ\tau)^{2}=\Lambda\circ\tau,
\end{equation}

\noindent a property that $\Omega$ also satisfies, i.e.

\begin{equation}\label{e:omegacuadrado}
\Omega^{2}=\Omega,
\end{equation}

\noindent as may be easily checked out. It is trivially shown that,
when restricted to ``one point" convex subsets of the form
$\{\rho\}$ (an arbitrary state), $\Lambda\circ\tau$ coincides with
$\Omega$, that is

\begin{equation}\label{e:coincidence}
\Lambda\circ\tau(\{\rho\})=\{\rho^{A}\otimes\rho^{B}\}=\{\Omega(\rho)\}.
\end{equation}

\noindent Equations (\ref{e:lambdataucuadrado}),
(\ref{e:omegacuadrado}) and (\ref{e:coincidence}) clearly suggest
that $\Lambda\circ\tau$ is a suitable generalization of $\Omega$
to arbitrary convex subsets (a single state being a particular
case of one point convex sets). The separability criterium
presented in section \ref{s:The Relationship for convex} provides
the clue for generalizing product states to convex subsets, i.e.,
the convex set generalization of a product state $C$ will satisfy

\begin{equation}\label{e:CSSinvariance}
\Lambda\circ\tau(C)=C,
\end{equation}

\noindent and this reduces to the  the separability properties
defined in section \ref{s:The Relationship for convex}. The special
subsets of $\mathcal{C}$ that we are concerned with exhibit the
following property: they can be fully recovered via all possible
tensor products and mixtures of its sets of reduced states. More
specifically, given a convex set $C$ satisfying
(\ref{e:CSSinvariance}), it can be recovered  from the sets of its
reduced states, namely $\tau_{1}(C)$ and $\tau_{2}(C)$ via all
possible tensor products and all possible convex mixtures. In
physical  terms this means that they can be recovered using
classical and local operations (just adding systems via all possible
tensor products and then considering all possible mixtures of the
resulting states). The content of  this discussion is compactly
encapsulated into equation (\ref{e:CSSinvariance}). Convex subsets
with this property where termed \emph{Convex Invariant Subsets}
(CSS) in \cite{Holik-Plastino-2011a}. Now it should be clear that
CSS are proper generalizations of product states to arbitrary convex
subsets.

\noindent We can now generalize equation (\ref{e:purecriteria}) to
arbitrary states as follows. Separability criterium
\ref{e:ourcriteria} implies that a state $\rho$ is separable iff it
belongs to a CSS $C$ such that $\Lambda\circ\tau(C)=C$. The analogy
with the pure-states case is clear if we effect the identification
$\rho\longrightarrow\{\rho\}$ (i.e., the state considered as an
element to the state considered as a particular case of convex
subset).

\noindent We have thus shown that the map $\Lambda\circ\tau$ is a
suitable generalization of $\Omega$. The sets invariant under
$\Omega$ are product states and the sets invariant under
$\Lambda\circ\tau$ are CSS, a suitable generalization of product
states. We may now generalize equation (\ref{e:purecriteria}) to any
state as follows:

\begin{equation}\label{e:puregeneralizedcriteria}
\rho\in\mathcal{S}(\mathcal{H})\Longleftrightarrow \exists C\text{
(a CSS), }\Lambda\circ\tau(C)=C,
\end{equation}
\noindent a neat extension of  (\ref{e:purecriteria}). For the
finite dimensional case the analogy is stronger still: criterium
(\ref{e:purecriteria}) can be rephrased using von Neumann's entropy

\begin{equation}
S(\rho)=-\mbox{tr}(\rho\ln(\rho)),
\end{equation}

\noindent as follows:

\begin{equation}\label{e:purecriteriaentropy}
\rho\in\mathcal{S}(\mathcal{H})\Longleftrightarrow
S(\rho^{A})=0=S(\rho^{B}),
\end{equation}

\noindent where $\rho^{A}$ and $\rho^{B}$ are the reduced states of
$\rho$. As products of pure states generate (in the convex sense)
all separable states, it is possible to show that the CSS-criterium
may be, in particular, chosen to be generated by products of pure
states. von Neumann's entropy reaches its
minimum value in such an instance. Summing up:\\

\fbox{\parbox{6.0in}{\nd
$\rho\in\mathcal{S}(\mathcal{H})$$\Longleftrightarrow$ there exist
$C$ such that $\rho\in C$ and $\Lambda\circ\tau(C)=C$
$\Longleftrightarrow$ (finite dimension) there exist $C$ such that
$\rho\in C$, $\Lambda\circ\tau(C)=C$ and
$\inf\{S(\sigma)\,|\,\sigma\in C\}=0.$}} \vskip 3mm

\noindent The fact that the above structure may be found for
arbitrary states is a clear conceptual simplification for the
characterization of entanglement, providing a unifying framework
which generalizes (\ref{e:purecriteria}) to arbitrary states. In the
following section we outline how this geometrical structure extends
to arbitrary COMs, and thus to any statistical theory.

\section{Entanglement and separability in arbitrary convexity models}\label{s:PositiveMaps}

In Section \ref{s:probabilities} we reviewed how to construct a
general setting for convex operational models out of which the
quantum case was a particular example. In this section we study how
to extend our geometrical formulation of entanglement to arbitrary
statistical models. \vskip 3mm

\nd Given two convex operational models
$(\mathbf{A},\mathbf{A}^{\sharp},u_{\mathbf{A}})$ and
$(\mathbf{B},\mathbf{B}^{\sharp},u_{\mathbf{B}})$, a morphism
between them will be given by a positive linear map
$\phi:\mathbf{A}\rightarrow \mathbf{B}$ such that the linear
adjoint map $\phi^{\ast}:\mathbf{B}^{\ast}\rightarrow
\mathbf{A}^{\ast}$ is positive with respect to the cones
$\mathbf{A}_{+}^{\sharp}$ and $\mathbf{B}_{+}^{\sharp}$.

\nd A link between (or process from))   $\mathbf{A}$ -
$\mathbf{B}$ will be represented by a morphism
$\phi:\mathbf{A}\rightarrow \mathbf{B}$ such that, for every state
$\alpha\in\Omega_{\mathbf{A}}$, $u_{\mathbf{B}}(\phi(\alpha))\leq
1$ (this is a normalization condition).
$u_{\mathbf{B}}(\phi(\alpha))$ will represent the probability that
the process represented by $\phi$ take place. For the special case
of quantum mechanics, we will show that the above processes
preserve the convex structure of the cone of positive self adjoint
operators. Also, we demonstrate that when the processes preserve
trace (i.e., when they map density operators into density
operators and thus represent quantum evolutions), they will also
preserve the lattice structure of $\mathcal{L}_{\mathcal{C}}$.

\nd In the preceding Section we saw how to characterize
entanglement and separability using maps between elements of
$\mathcal{L}_{\mathcal{C}}$ and $\mathcal{L}_{\mathcal{C}_{i}}$.
The interesting point here is that the most salient feature of our
lattices is their convex structure, and this will allow us to
extend the notions of entanglement and separability to any COM.
This is done as follows. In \cite{Beltrametti.Varadarajan-2000}
extensions of COM's are studied (we review here their definition
of extension  slightly modifying the reference's notation). A COM
$(\mathbf{\mathbf{C}},\mathbf{C}^{\sharp},u_{\mathbf{C}})$ will be
said to be an extension of
$(\mathbf{A},\mathbf{A}^{\sharp},u_{\mathbf{A}})$ if there exists
a morphism $\phi:\mathbf{C}\rightarrow \mathbf{A}$ which is
surjective.

\nd In order to look for a generalization of entanglement which
captures the results of previous Sections we must look at triads
of COM's $(\mathbf{C},\mathbf{C}^{\sharp},u_{\mathbf{C}})$,
$(\mathbf{C}_{1},\mathbf{C}_{1}^{\sharp},u_{\mathbf{C}_{1}})$, and
$(\mathbf{C}_{2},\mathbf{C}^{\sharp}_{2},u_{\mathbf{C}_{2}})$,
such that there exist two morphisms $\phi_{1}$ and $\phi_{2}$ with
$(\mathbf{C},\mathbf{C}^{\sharp},u_{\mathbf{C}})$ an extension of
$(\mathbf{C_{1}},\mathbf{C_{1}}^{\sharp},u_{\mathbf{C_{1}}})$ and
$(\mathbf{C_{2}},\mathbf{C_{2}}^{\sharp},u_{\mathbf{C_{2}}})$. It
is clear that $\phi=(\phi_{1},\phi_{2})$ may be considered as the
best candidate for a generalization of $\tau$. Now, if we want an
analogue of $\Lambda$, we must demand additional requirements. We
are looking for a map $\Psi$ with  the following property. $\Psi$
maps any pair of non-empty convex subsets $(C_{1},C_{2})$ of
$\mathbf{C}_{1}\times \mathbf{C}_{2}$ into a non-empty convex
subset $C$ of $\mathbf{C}$ with this particular property: for any
$c\in C$, we must have $\phi_{1}(c)\in C_{1}$ and $\phi_{2}(c)\in
C_{2}$. Such property guarantees that for any pair of states
$a_{1}$ and $a_{2}$ there will always exist at least one state
$c\in \mathbf{C}$ such that $\phi_{1}(c)=a_{1}$ and
$\phi_{2}(c)=b_{1}$. Why? Because if $C_{1}=\{c_{1}\}$ and
$C_{2}=\{c_{2}\}$, then we must have $\phi_{1}(c)=a_{1}$ and
$\phi_{2}(c)=b_{2}$, which guarantees  that for any states $c_{1}$
and $c_{2}$ there will always exist a state $c$ for which $c_{1}$
and $c_{2}$, respectively, are the reduced states relative to the
maps $\phi_{i}$. As the maps $\phi_{i}$ are morphisms, using them
it is possible to define canonically induced functions on convex
subsets, and them to map convex subsets of $\mathbf{C}$ into
convex subsets of $\mathbf{C_{i}}$ (there is an analogy  with the
earlier language involving  $\tau_{i}$'s and partial traces). With
some abuse of notation we will  keep calling them $\phi_{i}'s$,
without undue harm.

Summing up:

\begin{definition}

A triad $(\mathbf{C},\mathbf{C}^{\sharp},u_{\mathbf{C}})$,
$(\mathbf{C}_{1},\mathbf{C}_{1}^{\sharp},u_{\mathbf{C}_{1}})$, and
$(\mathbf{C}_{2},\mathbf{C}^{\sharp}_{2},u_{\mathbf{C}_{2}})$ will
be called a \emph{triple compound system} if

\begin{enumerate}

\item There exist morphisms $\phi_{1}$ and $\phi_{2}$ such that
$(\mathbf{C},\mathbf{C}^{\sharp},u_{\mathbf{C}})$ is an extension
of $(\mathbf{C}_{1},\mathbf{C}_{1}^{\sharp},u_{\mathbf{C}_{1}})$
and $(\mathbf{C}_{2},\mathbf{C}^{\sharp}_{2},u_{\mathbf{C}_{2}})$.

\item There exists a map $\Psi:\mathcal{P}(\mathbf{C}_{1})\times\mathcal{P}(\mathbf{C}_{2})\rightarrow\mathcal{P}(\mathbf{C})$ which
maps pair of non-empty convex subsets
$(C_{1},C_{2})\in\mathcal{P}(\mathbf{C}_{1})\times\mathcal{P}(\mathbf{C}_{2})$
into a nonempty convex subset $C\in\mathcal{P}(\mathbf{C})$, such
that for every $c\in C$, $\phi(c)=(\phi_{1}(c),\phi_{2}(c))\in
C_{1}\times C_{2}$.

\end{enumerate}

\noindent If the map $\Psi$ of the triple compound system
satisfies that for any $c_{1}$ and $c_{2}$,
$\Psi(\{c_{1}\},\{c_{2}\})=\{c\}$ for some $c\in \mathbf{C}$, we
will say that it is a \emph{strictly two-components triple
compound system}.

\end{definition}

\nd With this constructions at hand, let us restrict for the sake
of simplicity to \emph{strictly two-components triple compound
systems} and look for a generalization of entanglement and
separability. It is clear now that the analogues of the maps
$\Lambda$ and $\tau$ are $\Psi$ and $\phi$, respectively. Thus, it
is natural now state

\begin{definition}
Given a \emph{strictly two-components triple compound system}
$(\mathbf{C},\mathbf{C}^{\sharp},u_{\mathbf{C}})$,
$(\mathbf{C}_{1},\mathbf{C}_{1}^{\sharp},u_{\mathbf{C}_{1}})$, and
$(\mathbf{C}_{2},\mathbf{C}^{\sharp}_{2},u_{\mathbf{C}_{2}})$,
with an up-map $\Psi$ and a down-map $\phi$, then

\begin{enumerate}

\item A state $c\in\mathbf{C}$ will be called \emph{non-product state} if
$\Psi\circ\phi(\{c\})\neq\{c\}$. Otherwise, it will be called a
\emph{product state}.

\item For an \emph{invariant convex subset} $C$ one has
$C\in\mathcal{P}(\mathbf{C})$, such that $\Psi\circ\phi(C)=C$

\item If there exist a largest (in the sense of the lattice order) invariant subset, we will denote it by
$\mathcal{S}(\mathbf{C})$.

\item A \emph{strictly two-components triple compound system} for
which there exists $\mathcal{S}(\mathbf{C})$ and  is strictly
included in $\mathbf{C}$, will be said to be an \emph{entanglement
operational model}.

\item In an \emph{entanglement operational model} a state $c$ which satisfies
$c\notin\mathcal{S}(\mathbf{C})$ will be said to be entangled.

\end{enumerate}

\end{definition}

\nd It is clear that using these constructions we can export the
quantum entanglement structure to a much wider class of COM's, and
for that reason, to many new statistical physical systems. It
should be clear also that quantum mechanics is the best example
for entanglement, and that all states in classical mechanics are
separable. Remark that the properties of a \emph{strictly
two-components triple compound systems} will depend, in a strong
sense, on the choice of the functions $\Psi$ and $\phi$. These
should be selected as the canonical ones, i.e., the ones which are
somehow natural for the physics of the problem under study. Notice
that nothing prevents us from make more general choices for
practical purposes. The physical criterium for the construction of
$\psi$ should be that the simple addition of the systems involved
should not generate new correlations. We can also ``postulate" a
generalized separability criterium:

\begin{definition}
\item A state $c\in\mathbf{C}$ in an \emph{entanglement operational model} is said to be separable iff there
exists $C\subseteq\mathcal{S}(\mathbf{C})$ containing $c$ such
that $\Psi\circ\phi(C)=C$.
\end{definition}

\nd These constructions may be useful to develop and search for
generalizations/corrections of quantum mechanics and for the study
of quantum entanglement in theories  more general than quantum
mechanics. Our constructions are a valid  alternative to others
that one can find  in the literature. An interesting open problem
would be that of finding the way in which we can express the
violation of Bell's inequalities using this approach.

\nd In this section we restricted ourselves to \emph{strictly
two-components compound triples}. An important example of a
\emph{two-components compound triple} which is not strict is to
look at a quantal three-components systems out of which we only
consider two subsystems. In that case, to any product state of the
first two subsystems we can add any other state of the third one,
and  the map $\Psi$ will yield a convex subset of more than one
element.

\section{Conclusions}\label{s:Conclusions}

\nd In this work we studied different mathematical structures of
the convex subsets of the quantum set of states. We showed that
these sets are endowed with a canonical lattice structure
 and extended previous results to the infinite dimensional case. This
lattice structure reveals interesting algebraic and geometrical
properties of the quantum set of states.

\nd We showed in Section \ref{s:EntanglementWittness} that the
lattice structure is strongly linked to functionals and entanglement
witness. Thus,  many of previous results might be translated into
 our language. At the end of this Section we also provided a new
(partial) entanglement criteria easily expressible in lattice
theoretical language. We also showed how this algebraic and
geometrical  convex set-viewpoint can be used to reformulate the
Max-Ent principle in a form extensible to any statistical theory,
via the COM approach. In particular, it may be useful to include
fussy measurements (POVM's) into the  Max-Ent formalism.

\nd We also extended a previous abstract separability criterium,
strongly linked to the lattice structure of convex subsets, to the
infinite dimensional case. Furthermore, we showed that this
geometrical setting can be exported to any arbitrary statistical
model via the COM approach, which  may be useful to analyze the
classicality of theories which generalize quantum mechanics, and
also for the study of semiclassical models.

\vskip1truecm

\noindent {\bf Acknowledgements} \noindent This work was partially
supported by the following grants: PIP N$^o$ 6461/05 (CONICET).

\appendix
\section{Basic mathematical concepts used in the
text}\label{s:ApendixA}

\begin{enumerate}

\item A function is surjective (onto) if every possible image is mapped
to by at least one argument. In other words, every element in the
codomain has non-empty preimage. Equivalently, a function is
surjective if its image is equal to its codomain. A surjective
function is a surjection.

\item A linear functional (also called a one-form or covector) is a linear map from a vector space to its field of scalars $K$.
  In general, if $V$ is a vector space over a field $K$, then a linear functional $f$ is a function from $V$ to $K$, which is linear.
 Linear functionals are particularly important in quantum mechanics.  Quantum mechanical systems are represented by Hilbert spaces,
which are anti-isomorphic to their own dual spaces.  A state of  a quantum mechanical system can be identified with a linear functional.

\item  Suppose that $K$ is a field (for example, the real numbers)
and $V$ is a vector space over $K$. If $v_1,\ldots,v_n$ are
vectors and $a_1,\ldots,a_n$ are scalars, then the linear
combination of those vectors with those scalars as coefficients
is, of course, $\sum_{i=1}^n\,a_i\,v_i$.  By restricting the
coefficients used in linear combinations, one can define the
related concepts of affine combination, conical combination, and
convex combination, together with the associated notions of sets
closed under these operations. If $\sum_{i=1}^n \,a_i=1,$ we have an
affine combination, its span being an affine subspace while  the
model space is an hyperplane. If all $a_i \ge 0,$ we have
instead a conical combination, a convex cone and a quadrant,
respectively. Finally, if all $a_i \ge 0$ plus
$\sum_{i=1}^n\,a_i=1,$  we have now a convex combination, a convex set
and a simplex, respectively.

\item By a  $\sigma-$algebra one means  a collection of sets
that satisfy certain properties,  used in the definition of
measures: it is the collection of sets over which a measure is
defined. The concept is important in probability theory, being
there  interpreted as the collection of events which can be
assigned probabilities. Such an algebra, over a set $X$, is a
nonempty collection $S$ of subsets of $X$ (including $X$ itself)
that is closed under complementation and countable unions of its
members. It is an algebra of sets, completed to include countably
infinite operations. The pair $(X, S)$ is also a field of sets,
called a measurable space.

\item A quotient space (also called an identification space) is, intuitively speaking,
the result of identifying certain points of a given space. The
points to be identified are specified by an equivalence relation.
This is commonly done in order to construct new spaces from given
ones. Let $(X, \tau_X)$ be a topological space, and let $R$ be an
equivalence relation on $X$. The quotient space $Y=X/R$ is defined
to be the set of equivalence classes of elements of $X$:
$$Y= \{[x]: x \in X\}=\{\{v\in X: v R x\}: x \in X\},$$ equipped
with the topology where the open sets are defined to be those sets
of equivalence classes whose unions are open sets in $X$.
Equivalently, we can define them to be those sets with an open
pre-image under the quotient map  which sends a point in $X$ to
the equivalence class containing it.

\item Banach spaces are vector spaces $ V$ with a norm $||.|| $  such that every Cauchy sequence
(with respect to the metric $d(x, y) = ||x - y||$ in $V$) has a
limit in $V$ (with respect to the topology induced by that
metric). As for general vector spaces, a Banach space over the
real numbers is called a real Banach space, and a Banach space
over the complex numbers is called a complex Banach space.

\item Algebras: general vector spaces do not possess a multiplication between vectors.
A vector space equipped with an additional bilinear operator
defining the multiplication of two vectors is an algebra over a
field. Many algebras stem from functions on some geometrical
object: since functions with values in a field can be multiplied,
these entities form algebras.

\item In functional analysis, a Banach algebra is an associative algebra $A$
 over the real or complex numbers which at the same time is also a Banach space
    The algebra multiplication and the Banach space norm are required to be related by the following inequality:
    $\forall x, y \in A : \|x \, y\| \ \leq \|x \| \, \| y\|$
(i.e., the norm of the product is less than or equal to the
product of the norms). This ensures that the multiplication
operation is continuous. This property is found in the real and
complex numbers; for instance.

\item A $C^*-$algebra is a Banach algebra with an antiautomorphic involution $*$ which satisfies
$(x^*)^* =   x$ (1); $x^*y^*  =   (yx)^*$ (2); $ x^*+y^* =
(x+y)^*$ (3); and  $(cx)^*  =   c^*\,x^*$ (4), where $c^*$ is the
complex conjugate of $c$, and whose norm satisfies
$||xx^*||=||x||^2$.

\item $C^*$-algebras are an important area of research in
functional analysis. An outstanding example is the complex algebra
 of linear operators on a complex Hilbert space with two additional
properties: \newline \nd it is a topologically closed set in the
norm topology of operators and \newline \nd  is closed under the
operation of taking adjoints of operators. \newline \nd It is
generally believed that these algebras were first considered
primarily for their use in quantum mechanics to model algebras of
physical observables, beginning with Werner Heisenberg's matrix
mechanics and  developed further by Pascual Jordan circa 1933.
Afterwards, John von Neumann  established a general framework for
them which culminated in papers on rings of operators,  considered
as a special class of C*-algebras  known as von Neumann algebras.

\item It is now generally accepted that the description of quantum mechanics in which all
self-adjoint operators represent observables is untenable. For
this reason, observables are identified to elements of an abstract
C*-algebra $A$ (that is one without a distinguished representation
as an algebra of operators) and states are positive linear
functionals on $A$. However, by using the GNS construction, we can
recover Hilbert spaces which realize $A$ as a subalgebra of
operators. Geometrically, a pure state on a C*-algebra $A$ is a
state which is an extreme point of the set of all states on $A$.
By properties of the GNS construction these states correspond to
irreducible representations of $A$. The states of the C*-algebra
of compact operators $K(\mathcal{H})$ correspond exactly to the
density operators and therefore the pure states of
$K(\mathcal{H})$  are exactly the pure states in the sense of
quantum mechanics. The C*-algebraic formulation can be seen to
include both classical and quantum systems. When the system is
classical, the algebra of observables become an abelian
C*-algebra. In that case the states become probability measures.

\item In functional analysis,  given a C*-algebra $A$, the Gelfand-Naimark-Segal
(GNS) construction establishes a correspondence between cyclic
*-representations of $A$ and certain linear functionals on $A$
(called states). The correspondence is shown by an explicit
construction of the *-representation from the state.

\item A *-representation of a C*-algebra $A$ on a Hilbert space
$\mathcal{H}$ is a mapping $\pi$ from $A$ into the algebra of
bounded operators on $\mathcal{H}$.

\item  Point-wise convergence is one of various senses in which
a sequence of functions can converge to a particular function.
Suppose $\{ f_n \}$ is a sequence of functions sharing the same
domain and codomain. The sequence $\{ f_n \}$ converges pointwise
to $f$, often written as $\lim_{n \rightarrow \infty} f_n=f$ point
wise iff for every $x$ in the domain one has $\lim_{n \rightarrow
\infty} f_n(x)=f(x)$.

\item Every subset $Q$ of a vector space is contained within a
smallest convex set (called the convex hull of $Q$), namely the
intersection of all convex sets containing $Q$, \item A set with a
binary relation $R$ on its elements that is reflexive (for all $a$
in the set, $aRa$), antisymmetric (if $aRb$ and $bRa$, then $a =
b$) and transitive (if $aRb$ and $bRc$, then $aRc$) is described
as a partially ordered set or poset, \item  Let $X$ be a space.
Its dual space $X^*$ consists of all linear functions from $X$
into the base field $K$ which are continuous with respect to the
prevailing topology. \item The {\it weak} topology on $X$  is the
coarsest topology (the topology with the fewest open sets) such
that each element of $X^*$ is a continuous function. \item The
predual of a space $D$ is a space $D'$ whose dual space is $D$.
For example, the predual of the space of bounded operators
$\mathcal{B}(\mathcal{H})$ is the space of trace class operators,
 \item The ultraweak topology, also called the weak-* topology,
on the set $\mathcal{B}(\mathcal{H})$ is the weak-topology
obtained from the trace class operators on $\mathcal{H}$. In other
words it is the weakest topology such that all elements of the
predual are continuous (when considered as functions on
$\mathcal{H}$),
\item A partially-ordered group is a group $(G,+)$ equipped with a partial order
``$\vdash$" that is translation-invariant. That is, ``$\vdash$"
has the property that, for all $a$, $b$, and $g$ in $G$, if $a
\vdash b$ then $a+g \vdash b+g$ and $g+a \vdash g+b$, \item  An
element $x$ of $G$ is called positive element if $0 \vdash x$. The
set of elements $0 \vdash x$ is often denoted with $G+$, and it is
called the {\it positive cone} of $G$. So we have $a \vdash b$ if
and only if $-a+b \in G+$. \item  For the general group $G$, the
existence of a positive cone specifies an order on $G$. A group
$G$ is a partially-ordered group if and only if there exists a
subset $J$ (which is $G+$) of $G$ such that: $0 \in J$;  if $a \in
J$ and $b \in J$ then $a+b \in J$; if $a \in J$ then $-x+a+x \in
J$ for each $x$ of $G$; if $a \in J$ and $-a \in J$ then $a \vdash
0$.

\item In linear algebra, a matrix decomposition is a factorization of a matrix into some canonical form.
There are many different matrix decompositions; each finds use
among a particular class of problems.
  {\it The Cholesky decomposition} is applicable to any square, symmetric, positive definite matrix
  $A$ in the form $A = U^T\,U$, where $U$ is upper triangular with positive
diagonal entries. The Cholesky decomposition is a special case of
the symmetric LU decomposition, with $L = U^T$. The Cholesky
decomposition is unique and also  applicable for complex hermitian
positive definite matrices. {\it The singular value decomposition}
 is applicable to $m$ times $n$ matrix $A$ in the fashion $A = UDV^{\dagger}$, where
 $D$ is a nonnegative diagonal matrix while $U$, $V$ are unitary
matrices, and $V^{\dagger}$ denotes the conjugate transpose of $V$
(or simply the transpose, if $V$ contains real numbers only). The
diagonal elements of $D$ are called the singular values of $A$.

\item The orthogonal complement $W^{\bot}$ of a subspace $W$ of an inner product space
$V$ is the set of all vectors in $V$ that are orthogonal to every vector in $W$, i.e.,

   $$ W^\bot=\left\{x\in V : \langle x, y \rangle = 0 \mbox{ for all } y\in W \right\}.$$

\item A topological space is called {\it separable} if it contains a countable dense subset. In other words,
 there exists a sequence $\{ x_n \}_{n=1}^{\infty}$  of elements of the space such that
every nonempty open subset of the space contains at least one element of the sequence.

\item A cover of a set $X$ is a collection of sets whose union contains $X$ as a subset.

\item A topological space $X$ is called compact if each of its open covers has a finite subcover. Otherwise it is called non-compact.

\item A relatively compact subspace (or relatively compact subset) $Y$ of a topological space $X$ is a subset whose closure is compact.

\item   $T$  is a compact operator on Hilbert's space if the image of each bounded set under $T$ is relatively compact. \newline \nd
Compact operators on Hilbert spaces are a direct extensions of matrices.
In such spaces  they are the closure of finite-rank operators. As such, results from matrix theory can sometimes be extended
to compact operators using similar arguments.
In contrast, the study of general operators on infinite dimensional spaces often requires a genuinely different approach.
For example, the spectral theory of compact operators on Banach spaces takes a form that is very similar to the Jordan canonical form of matrices. In the context of Hilbert spaces, a square matrix is unitarily diagonalizable if and only if it is normal. A corresponding result holds for normal compact operators on Hilbert spaces.

\item A simplex is a generalization of the notion of a triangle or tetrahedron to arbitrary dimension.
Specifically, an $n-$simplex is an $n-$dimensional polytope which is the convex hull of its $n + 1$ vertices.
A 2-simplex is a triangle, a 3-simplex is a tetrahedron, and a 4-simplex is a pentachoron. A single point may be considered a 0-simplex, and a line segment may be considered a 1-simplex. {\it A simplex may be defined as the smallest convex set containing the given vertices}.
\end{enumerate}


\section{Lattices}\label{s:ApendixB}

\noindent A {\sf lattice} $\mathcal{L}$    (also called a poset)
is a partially ordered set (also called a poset) in which any two
elements $a$ and $b$ have a unique supremum (the elements' least
upper bound ``$a\vee b$"; called their join) and an infimum
(greatest lower bound ``$a\wedge b$"; called their meet). Lattices
can also be characterized as algebraic structures satisfying
certain axiomatic identities. Since the two definitions are
equivalent, lattice theory draws on both order ($>$, $<$) theory
and universal algebra. Semilattices include lattices, which in
turn include Heyting and Boolean algebras. These ``lattice-like"
structures all admit order-theoretic as well as algebraic
descriptions.

\vskip 3mm \nd A {\it bounded} lattice has a greatest (or maximum)
and least (or minimum) element, denoted $1$ and $0$ by convention
(also called top and bottom, respectively). Any lattice can be
converted into a bounded lattice by adding a greatest and least
element, and every non-empty finite lattice is bounded.     For
any set $A$, the collection of all subsets of $A$ (called the
power set of $A$) can be ordered via subset inclusion to obtain a
lattice bounded by $A$ itself and the null set. Set intersection
and union represent the operations meet and join, respectively.

\vskip 3mm \nd A poset is called a complete lattice if all its
subsets have both a join and a meet. In particular, every complete
lattice is a bounded lattice. While bounded lattice homomorphisms
in general preserve only finite joins and meets, complete lattice
homomorphisms are required to preserve arbitrary joins and meets.

\vskip 3mm \nd  Any quantum system represented by an
$N-$dimensional Hilbert space $\mathcal{H}$ has associated a
lattice formed by all its convex subspaces
${\mathcal{L}}_{v\mathcal{N}}({\mathcal{H}})=
<{\mathcal{P}}({\mathcal{H}}),\ \cap,\ \oplus,\ \neg,\ 0,\ 1>$,
where $0$ is the empty set $\emptyset$, $1$ is the total space
$\mathcal{H}$,  $\oplus$ the closure of the sum, and $\neg(S)$ is
the orthogonal complement of a subspace $S$
\cite{mikloredeilibro}. This lattice was called ``Quantum Logic"
by Birkhoff and von Neumann. One  refers to this lattice as the
von Neumann-lattice $\mathcal{L}_{v\mathcal{N}}(\mathcal{H})$)
\cite{mikloredeilibro}.

\vskip 3mm

 \nd  Let $\mathcal{L}$ be a bounded lattice with greatest element 1 and least element 0.
 Two elements $x$ and $y$ of the lattice are complements of each other if and only
 if:  $x\bigvee y=1$
 and $x\bigwedge  y=0$.
In the case the complement is unique, we write $\neg x = y$ and
equivalently, $\neg y = x$. A bounded lattice for which every
element has a complement is called a {\it complemented} lattice.
The corresponding unitary operation over the lattice, called
complementation, introduces an analogue of logical negation into
lattice theory. The complement is not necessarily unique, nor does
it have a special status among all possible unitary operations over
 $\mathcal{L}$.

\vskip 3mm \nd Distributive lattices are lattices for which the operations of join and meet distribute over each other.
The prototypical examples of such structures are collections of sets for which the lattice operations can be given by set
union and intersection.
Indeed, these lattices of sets describe the scenerio completely.
A complemented lattice that is also distributive is a Boolean
algebra. For a distributive lattice, the complement of $x$, when
it exists, is unique.

\vskip 3mm

 \nd  The concept of  lattice's atom is of great physical importance.  If
$\mathcal{L}$ has a null element $ 0$, then an element $x$ of
$\mathcal{L}$ is an {\it atom}  if $0 < x$ and there exists no
element $y$ of $\mathcal{L}$ such that $0 < y < x$. One  says that
$\mathcal{L}$ is: \newline i) {\it Atomic}, if for every nonzero
element $x$ of $\mathcal{L}$, there exists an atom $a$ of
$\mathcal{L}$ such that $ a = x$
\newline ii) Atomistic, if every element of $\mathcal{L}$ is a
supremum of atoms. \vskip 3mm

 \nd A  {\it modular} lattice is one that
satisfies the following self-dual condition (modular law) $ x \leq
b$ implies $x \vee (a \wedge b) = (x \vee a) \wedge b$, where
$\le$ is the partial order, and $\vee$ and $\wedge$ (join and
meet, respectively) are the operations of the lattice. \vskip 3mm

\nd Modular lattices arise naturally in algebra and in many other
areas of mathematics. For example, the subspaces of a vector space
(and more generally the submodules of a module over a ring) form a
modular lattice. Every distributive lattice is modular. In a not
necessarily modular lattice, there may still be elements $b$ for
which the modular law holds in connection with arbitrary elements
$a$ and $x$ ($\le  b$). Such an element is called a modular
element. Even more generally, the modular law may hold for a fixed
pair $(a, b)$. Such a pair is called a modular pair, and there are
various generalizations of modularity related to this notion and
to semi-modularity. \vskip 3mm

 \nd For $a,\,b \in \mathcal{L}$,
to assert that $a$ is orthogonal to $b$ ($ a \bot b$)  implies $a
\wedge b=0$. Equivalently, in ``order" terms, one says that $a\le
b^{\bot}$. Now, $\mathcal{L}$ is an orthocomplemented lattice if
whenever $a \bot b$ then $b\le a^{\bot}$. $\bot$ is a symmetric
relation.

\vskip 3mm

\nd For any $a \in \mathcal{L}$, define $M(a):
 =\{c\in \mathcal{L} \vert c\bot a, \, {\rm and}\, 1=c\vee a\}$. An element of $M(a)$ is
called an orthogonal complement of $a$. We have $a^{\bot} \in
M(a)$, and any orthogonal complement of $a$ is a complement of
$a$. If we replace the unity in $M(a)$ by an arbitrary element $b
\ge  a$, then we have the set $M(a,b):=\{c\in \mathcal{L}\vert c
\vee a \,{\rm and}\, b=c\vee a\}$. An element of $M(a,b)$ is
called an orthogonal complement of $a$ relative to $b$ . Clearly,
$M(a)=M(a,1)$. Also, for $a\le cb$ , $c\in M(a,b)$, iff $a\in
M(c,b)$. As a result, we can define still another symmetric binary
operator $\oplus$ on $[0,b]$, given by $b=a \oplus c$ iff $c\in
M(a,b)$. Note that $b=b \oplus 0$.  A final operation is the
``difference" $b-a=b\wedge a$. Some properties: (1) $a-a=0$,
$a-0=a$, $0-a=0$, $a-1=0$ , and $1-a=a^{\bot}$; (2) $b-a=a-b$; (3)
if $a\le b$, then $a\wedge (b-a)$ and $a \oplus (b-a)  \le b$.

\vskip 3mm

\nd {\sf Definition:} A lattice $\mathcal{L}$ is called an
orthomodular lattice if i) $\mathcal{L}$ is orthocomplemented, and
(orthomodular law) ii)  if  $x\le y$, then $y=x\oplus (y-x)$. The
orthomodular law can be recasted as follows: if $x\le y$ , then
$y=x\vee (y \wedge x^{\bot})$. Equivalently, $x\le y$ implies
$y=(y\wedge x)\vee (y\wedge x^{\bot})$. Such relation is
automatically true in an arbitrary distributive lattice, even
without the assumption that $x\le y$. For example, the lattice
$\mathcal{L}(H)$  of closed subspaces of a Hilbert space $H$ is
orthomodular. $\mathcal{L}(H)$ is modular iff $H$ is finite
dimensional. In addition, if we give the set $\mathcal{P}_p(H)$ of
(bounded) projection operators on $H$ an ordering structure by
defining $P\le Q$ iff $\mathcal{P}(H) \le \mathcal{Q}(H)$, then
 $\mathcal{P}_p(H)$is lattice isomorphic to $\mathcal{L}(H)$, and
hence orthomodular \cite{BvN}.

\section{Faces of a convex set}\label{s:ApendixC}

\noindent We define here a convex set's {\it face} in a real vector
space of finite dimension. Let $\mathcal{C}$ be a convex subset of
$\mathbb{R}^n$ and  let us introduce the auxiliary notions of
oriented hyperplanes and supporting hyperplanes. Given ${\bf
n,p}\in\mathbb{R}^n$ let us define the hyperplane $H( {\bf n},  {\bf
p})$ via
$$H( {\bf n},  {\bf p}) = \{{\bf x} \in \mathbb{R}^n: {\bf n} \cdot ({\bf x}  -    {\bf p})=0\}.$$
If ${\bf n}=0$ it is equal to $\mathbb{R}^n$ and we call it
degenerate. As long as $H( {\bf n},  {\bf p})$ is nondegenerate, its
removal disconnects $\mathbb{R}^n$. The upper halfspace of
$\mathbb{R}^n$ determined by $H( {\bf n},  {\bf p})$ is $H( {\bf n},
{\bf p})^+=\{{\bf x} \in \mathbb{R}^n:  {\bf n} \cdot ({\bf x}  -
{\bf p})   \ge 0\}.$ A hyperplane $H( {\bf n},  {\bf p})$
\underline{is a supporting hyperplane} for $\mathcal{C}$ if its
upper halfspace contains $\mathcal{C}$, that is, if $\mathcal{C}
\subset  H( {\bf n}, {\bf p})^+$.

\vskip 3mm \noindent Using this terminology, we can define a {\bf
face} of a convex set $\mathcal{C}$ to be the intersection of
$\mathcal{C}$ with a supporting hyperplane of $\mathcal{C}$. Notice
that we still get both the empty set and $\mathcal{C}$ itself  as
improper faces of $\mathcal{C}$. For the definition of a face in the
infinite dimensional case we extend the definition of a supporting
hyperplane to a real Hilbert space $\mathcal{H}$. Given ${\bf
n,p}\in\mathcal{H}$, we say that $H( {\bf n},{\bf p})$,
$$H( {\bf n},{\bf p})=\{{\bf x} \in \mathcal{H}: \langle{\bf n} ,{\bf x}-{\bf p}\rangle=0\},$$
is a supporting hyperplane if $\mathcal{C} \subset  H( {\bf n}, {\bf
p})^+$. Note that $H( {\bf n},{\bf p})$ is closed and using Riesz
representation theorem, for every closed hyperplane $H$
there exists ${\bf n,p}\in\mathcal{H}$ such that $H=H( {\bf n},{\bf p})$.\\

\noindent In the general case (in a Banach space) we say that
\underline{$F$ is a face of $\mathcal{C}$} if there exist a
\underline{closed} hyperplane $H$ such that $F=\mathcal{C}\cap H$.
A closed hyperplane is given by a continuos lineal functional.\\

\noindent {\bf Remarks:} Let $\mathcal{C}$ be a convex set. Then:

\begin{itemize}
\item  If $F_1=\mathcal{C}\cap H( {\bf n_1},  {\bf p_1})$ and $F_2=\mathcal{C}\cap H( {\bf n_2},  {\bf p_2})$ are faces of
$\mathcal{C}$ intersecting at a point $p$ then $H({\bf n}_1+ {\bf
n}_2, {\bf p})$ is a supporting hyperplane of $\mathcal{C}$ and
$F1\cap F2=C\cap H({\bf n}_1+ {\bf n}_2, {\bf p})$. This shows that
the faces of $\mathcal{C}$ form a meet-semilattice.

\item  Since each proper face lies on the base of the upper halfspace of some supporting hyperplane,
each such face must lie on the relative boundary of $\mathcal{C}$.
\end{itemize}

\noindent An extreme point of a convex set $\mathcal{C}$ in a real
vector space is a point in $\mathcal{C}$ which does not lie in any
open line segment joining two points of $\mathcal{C}$. Intuitively,
an extreme point is a "corner" of $\mathcal{C}$. The Krein-Milman
theorem states that if $\mathcal{C}$ is convex and compact in a
locally convex space, then $\mathcal{C}$ is the closed convex hull
of its extreme points.  In particular, such a set has extreme
points.\\

\end{document}